\documentclass[a4paper,UKenglish,cleveref, autoref, thm-restate]{oasics-v2021}

\usepackage[utf8]{inputenc}
\usepackage[T1]{fontenc}

\usepackage[inline]{enumitem}
\usepackage[all]{foreign}
\usepackage{amsmath,amssymb,amsthm}
\usepackage{thm-restate}
\usepackage{amsmath}
\usepackage{lineno,hyperref}
\usepackage[capitalise]{cleveref}
\usepackage{xspace}
\usepackage{thmtools}
\usepackage{mathtools}
\usepackage{autonum}
\usepackage[all]{foreign}
\usepackage{xparse}
\usepackage{xcolor}
\usepackage{forest}
\usepackage{makecell}
\usepackage{multicol}
\usepackage{rotating}
\usepackage{marginnote}
\usepackage{tikz}
\usepackage{algorithm}
\usepackage{algpseudocode}
\usepackage{todonotes}
\usepackage{booktabs}
\usepackage{lipsum}
\usepackage{packages/commands} 
\usepackage{packages/graphics} 
\usepackage{packages/logic}    
\usepackage{packages/symbols}  

\usepackage{tikz}
\usetikzlibrary{positioning}
\usetikzlibrary{backgrounds}
\usetikzlibrary{fit}
\usetikzlibrary{automata}
\usetikzlibrary{shapes.geometric}
\usetikzlibrary{decorations.pathreplacing,calligraphy}

\newcommand{\roberto}[1]{#1}

 \usepackage[table,xcdraw]{xcolor}
 \usepackage{adjustbox}
\usepackage{graphicx}
\usepackage{makecell}

\newif\ifextended
\extendedtrue

\usepackage{etoolbox}

\title{The $\Sigma$-Chain Product: A Succinct Model of Automata
(De)Composition \ifextended (Extended Version) \fi}

\author{Roberto Borelli}{University of Padua, Italy}{roberto.borelli@studenti.unipd.it}{https://orcid.org/0000-0003-2586-8183}{}

\author{Davide Bresolin}{University of Padua, Italy \and
  \url{https://www.math.unipd.it/~bresolin/}}{davide.bresolin@unipd.it}{https://orcid.org/0000-0003-2253-9878}{}

\author{Luca Geatti}{University of Udine, Italy \and
  \url{https://users.dimi.uniud.it/~luca.geatti/}}{luca.geatti@uniud.it}{https://orcid.org/0000-0002-7125-787X}{}

\author{Angelo Montanari}{University of Udine, Italy \and
  \url{https://users.dimi.uniud.it/~angelo.montanari/index.php}}{angelo.montanari@uniud.it}{https://orcid.org/0000-0002-4322-769X}{}

\author{Matteo Zavatteri}{Fondazione Bruno Kessler, Italy \and 
  \url{https://matteozavatteri.github.io/}}{zavatteri@fbk.eu}{https://orcid.org/0000-0001-6696-2972}{}

\authorrunning{R. Borelli, D. Bresolin, L. Geatti, A. Montanari, M. Zavatteri} 

\Copyright{Roberto Borelli, Davide Bresolin, Luca Geatti, Angelo Montanari and Matteo Zavatteri} 
\ccsdesc[500]{Theory of computation~Automata extensions}
\ccsdesc[500]{Theory of computation~Modal and temporal logics}
\ccsdesc[500]{Theory of computation~Regular languages}

\keywords{Automata, Cascade Product, Formal Languages, Krohn-Rhodes Theory} 

\ifextended
\hideOASIcs
\nolinenumbers
\else 
    \relatedversiondetails[cite=extended]{Extended Version}{URL to related version}
\fi

\acknowledgements{%
  Luca Geatti acknowledges financial support from the \emph{Programma
  Operativo Complementare} of the \emph{Regione Autonoma Friuli Venezia
  Giulia}, under CUP G23C25000880002, Operation Code 2025/13846, Project
  Code 2025/13846/16. Davide Bresolin and Luca Geatti acknowledge financial support from the INdAM – GNCS Project
  ``Neuro-symbolic semantic contract for rigorous reachability and
  certificates in CPS'' (CUP E53C25002010001).
}

\EventNoEds{2}
\EventLongTitle{33rd International Symposium on Temporal Representation and Reasoning}
\EventShortTitle{TIME 2026}
\EventAcronym{TIME}
\EventYear{2026}
\EventDate{September 1--3, 2026}
\EventLocation{Cork, Ireland}
\EventLogo{}

\begin{document}

\maketitle

\begin{abstract}
  The cascade product is a fundamental construction in automata theory,
  enabling hierarchical composition of automata and playing a central role
  in decomposition results such as the Krohn–Rhodes theorem. However, its
  use is limited by the exponential size required to represent cascades,
  which stems from the fact that each component may depend on all preceding
  ones, leading to exponentially large alphabets.

  To address this issue, we introduce the $\Sigma$-chain product,
  a restricted variant in which each component depends only on the input
  alphabet and the component immediately preceding it. We show that $\Sigma$-chains
  achieve linear-size representations and can be exponentially more
  succinct than cascades.
  We prove that $\Sigma$-chains and cascades are expressively equivalent 
  even when restricting the components
  to specific classes of automata, such as permutation-reset automata. As
  a consequence, we derive that a language is regular if and only if it is
  recognized by a $\Sigma$-chain of permutation-reset automata.
  Finally, we analyze structural properties of $\Sigma$-chains of reset
  automata, including a relation with well-known subclasses of star-free
  languages.
\end{abstract}

\newpage

\section{Introduction}
\label{sec:intro}


The \emph{cascade product}
of \roberto{deterministic} automata generalises the direct product by arranging the component
automata in a linear, hierarchical structure~\cite{krohn1965algebraic,DBLP:journals/corr/abs-2010-16235,DBLP:conf/birthday/Maler10}.  While in the direct
product every component reads symbols from a fixed alphabet $\Sigma$,
in the cascade product the component at level $i$ reads symbols from
$\Sigma$ and, additionally, observe the state of every component at
level $j<i$. Formally, the automaton at level $i$ reads symbols from the
Cartesian product $\Sigma \times Q_1 \times \dots \times Q_{i-1}$, where
$Q_j$ denotes the state set of the automaton at level $j$; the first
component, as in the direct product, reads symbols from $\Sigma$ alone.
A transition labelled $(a, q_1, \dots, q_{i-1})$ is taken if and only if
the current input symbol is $a \in \Sigma$ and the automaton at level $j$
is currently in state $q_j$, for every $1 \le j < i$. Crucially, the
automaton obtained by applying the cascade product is defined over the
original alphabet $\Sigma$ alone.

The cascade product was introduced in the context of semigroup and
semiautomaton decomposition theory, and lies at the heart of the
Krohn–Rhodes Decomposition Theorem~\cite{krohn1965algebraic}. This theorem
states that every \roberto{deterministic} semiautomaton (\ie an automaton without initial and
accepting states) can be decomposed into a sequence of permutation-reset
semiautomata whose cascade product \emph{covers} the original semiautomaton
-- that is, the original is a homomorphic image of a sub-semiautomaton of
the product.\roberto{\footnote{\roberto{A stronger version of the theorem exists, where the decomposition consists of permutation automata and two-state reset automata. We use the weaker formulation to keep all components of the same type.}}} In other words, the cascade product can simulate all the
behaviours of the original semiautomaton using only very simple components,
namely permutation-reset semiautomata. A semiautomaton over a state set $Q$
is called \emph{permutation-reset} if every symbol in its alphabet induces
a transformation $Q \to Q$ that is either a permutation (\ie a bijection)
or a reset (a map whose image has cardinality one). A semiautomaton is
called \emph{reset} if every symbol induces either a reset function or the
identity (a special case of a permutation). A central insight of the
Krohn–Rhodes theorem is that every counter-free semiautomaton --
corresponding to star-free regular expressions -- can be decomposed into
a cascade product of two-state reset semiautomata.

Beyond its theoretical significance in automata theory and semigroup
algebra, the Krohn–Rhodes Decomposition Theorem has found applications in
the translation of automata to Linear Temporal Logic
formulas~\cite{DBLP:conf/focs/MalerP90,DBLP:conf/birthday/Maler10} and in
the transformation of finite-word automata into formulas of pure-past
Linear Temporal Logic -- formulas whose temporal operators refer
exclusively to the past -- in what is known as the \emph{pastification}
problem~\cite{DBLP:conf/stacs/BorelliGMM25,DBLP:conf/kr/ArtaleGGMM23}. An
application of the cascade product in formal verification appears
in~\cite{DBLP:conf/overlay/Geatti25}, where the approach is to decompose
the specification to be model-checked against a system into a sequence of
components arranged in a cascade structure, and to execute the
model-checking algorithm one component at a time, starting from the first.
The rationale is that component-by-component verification is generally more
efficient than the monolithic verification of the full system. More
recently, the cascade product has been extended from semiautomata to full
automata in~\cite{DBLP:conf/stacs/BorelliGMM25}, where the
relationship between cascades of a given number of reset automata and the
corresponding classes of regular expressions has been studied. It has been
shown that the cascade product of permutation-reset automata (respectively,
of reset automata) captures exactly the class of regular (respectively,
star-free) languages.

A significant drawback that limits the applicability of the cascade product
is the size of its representation. On one hand, the automaton obtained by computing the
cascade product of $n$ (semi)automata $\autom_1,\dots,\autom_n$ has
a number of states exponential in $n$, a phenomenon shared with the direct
product. On the other hand, and more problematically, even representing the
\emph{sequence of components} $\seq{\autom_1, \dots, \autom_n}$ -- which we
call a \emph{cascade} of (semi)automata -- without computing the actual product requires exponential space.
This is a consequence of the component alphabets: in particular, the
alphabet of the last component $\autom_n$ has cardinality $|\Sigma| \times
|Q_1| \times \dots \times |Q_{n-1}|$, which is already exponential in $n$.
\roberto{Since we consider deterministic automata, the number of transitions of $\autom_n$ is also exponential in $n$. While this represents a significant drawback from the perspective of space complexity, it is also a key source of expressive power, allowing $\autom_n$ to distinguish every possible configuration of the preceding components together with the current input symbol, and to select a different successor state for each such configuration.}
By contrast, representing the sequence of components of a direct product
requires only space linear in $n$ and in the size of the largest automaton.

Motivated by the desire to overcome this representational inefficiency
while surpassing the expressive power of the direct product (which only captures closure under intersection), we
introduce in this paper the notion of \emph{$\Sigma$-chains} and the
corresponding \emph{$\Sigma$-chain product}.
A $\Sigma$-chain is a sequence of automata in which the first component
reads symbols from $\Sigma$, while each subsequent component depends only on $\Sigma$ and on the immediately preceding component, that
is, where the component at level $i$ reads symbols from the Cartesian product $\Sigma \times Q_{i-1}$,
where $Q_{i-1}$ is the state set of the automaton at level $i-1$. The
$\Sigma$-chain product is the resulting product automaton, defined
analogously to the cascade product: the $i$-th automaton takes a transition
labeled $(a, q_{i-1})$ if and only if the current input symbol is $a$ and
the preceding component is in state $q_{i-1}$. The $\Sigma$-chain product
is thus a restriction of the cascade product in which each component's
dependency is confined to $\Sigma$ and the component immediately preceding it, and
a generalization of the direct product in that it admits a hierarchical
dependency structure among components.

We begin by analyzing the representational complexity of $\Sigma$-chains. We show
that, while the $\Sigma$-chain product of $n$ automata
is an automaton with a number of states
exponential in $n$ (as with both the cascade and the direct product), the
$\Sigma$-chain itself -- \ie the sequence of components -- has size linear
in $n$ (as with the direct product, but unlike the cascade). Furthermore,
we demonstrate that $\Sigma$-chains can be exponentially more succinct than
cascades: there exists a family of languages recognisable by
$\Sigma$-chains of $n$ reset automata (thus representable in space
polynomial in $n$) such that any cascade of reset automata for those
languages requires at least $n$ components, and hence at least exponential
space.

We then turn to the expressive power of $\Sigma$-chains.  We show
that $\Sigma$-chains and cascades of automata are expressively equivalent.
We first observe that every $\Sigma$-chain is, by construction, a special
case of a cascade: the dependency restriction imposed on each component is
simply a syntactic constraint, and lifting it yields a valid cascade with
identical components. The converse direction is less immediate. We show
that every cascade can be transformed into an equivalent $\Sigma$-chain via
a \emph{chainification} technique: intuitively, the dependence of
a component -- say, the third -- on the states of a non-adjacent
predecessor -- say, the first -- can be eliminated by encoding the relevant
state information into the state space of the intervening component.
We prove, moreover, that for the classes of permutation-reset, pure-reset (all
symbols induce a reset), and permutation automata (all symbols induce
a permutation), the resulting $\Sigma$-chain has the same
number of components as the original cascade, and each component belongs to
the same class. \roberto{This comes at the cost that, at level $i$, the number of states becomes exponential in $i$}. A corollary of this transformation is
that $\Sigma$-chains of permutation-reset automata recognise exactly the
class of regular languages. \roberto{This means that, for permutation-reset automata, maximal expressive power can be achieved without requiring the $i$-th level to depend on all previous levels. Rather, it only needs to depend on the immediately preceding level and the input alphabet.} 
For reset automata, the proposed transformation does not preserve the component type: a reset automaton may become a permutation-reset automaton. While the number of components is preserved, the resulting $\Sigma$-chain may no longer consist entirely of reset automata. Therefore, we can only establish that languages recognized by $\Sigma$-chains of reset automata are included in the star-free languages, leaving the converse open.
\cref{tab:expressiveness} summarizes and compares the expressiveness results for $\Sigma$-chains against known results on cascades and direct products.

\begin{table}[]
    \centering
\begin{adjustbox}{max width=1\textwidth}
\centering
\begin{tabular}{|l|>{\centering\arraybackslash}m{3cm}c>{\centering\arraybackslash}m{3cm}c>{\centering\arraybackslash}m{3cm}|}
\hline
 &   {\textbf{Direct Product}} &  & {\textbf{$\Sigma$-chains}} & & {\textbf{Cascades}}\\ \hline
$\PureResets$ & \makecell[c]{$\bigcap_i (\Sigma^*R_i \cup J_i)$\\ $J_i
  = \epsilon$ or $J_i = \emptyset$ \\ $R_i, I_i \subseteq{\Sigma} $, $R_i
  \cap I_i = \emptyset$ \\ \cite{DBLP:conf/stacs/BorelliGMM25}} & \makecell[c]{$\subsetneq$\\
  \textbf{\cref{lemma:directprodct}}} & \emph{open} & \makecell[c]{$=$ \\
  \textbf{\cref{coro:cascades:chains}}} & \emph{open} \\ \hline
$\Resets$ & \makecell[c]{$\bigcap_i (\Sigma^*R_i I_i^*\cup J_i)$ \\ $J_i
  = I_i^*$ or $J_i = \emptyset$ \\ $R_i, I_i \subseteq{\Sigma} $, $R_i \cap
  I_i = \emptyset$ \\ \cite{DBLP:conf/stacs/BorelliGMM25} }& \makecell[c]{$\subsetneq$ \\
  \textbf{\cref{lemma:directprodct}}} & \makecell{\emph{open$^\dagger$}} & \makecell[c]{$\subseteq$ \\ 
  \textbf{\cref{coro:cascades:chains}}} & \makecell{$\SF$ \\ \cite{journalversionofstacs}}  \\ \hline
  $\PermutationResets$  &\emph{open} & \emph{open}& \makecell[c]{$\REG$ \\ 
  \textbf{\cref{coro:cascades:chains}}} & \makecell[c]{$=$ \\ 
  \textbf{\cref{coro:cascades:chains}}} & \makecell{$\REG$ \\ \cite{journalversionofstacs}} \\ \hline
\end{tabular}
\end{adjustbox}
  \caption{Summary of expressiveness results on direct products,
  $\Sigma$-chains and cascades of automata,
  for the classes of pure-reset ($\PureResets$), reset ($\Resets$) and
  permutation-reset ($\PermutationResets$) automata.  $\SF$ denotes the
  class of star-free languages and $\REG$ denotes the class of all regular
  languages.
  $^\dagger$ Some preliminary results  on the expressiveness of $\Sigma$-chains of reset automata are discussed in \cref{sec:chains:reset}.
  }
  \label{tab:expressiveness}
\end{table}

We next investigate $\Sigma$-chains of reset automata, showing that they can recognize several natural and well-studied classes of star-free languages, including Piecewise Testable languages~\cite{simon2005piecewise} and R-trivial languages~\cite{brzozowski1980languages}. We also consider two
hierarchies of languages obtained by restricting $\Sigma$-chains of reset
automata by height (number of components) and by width (number of states
per level).  We prove that the height hierarchy is strict -- for every $h$,
there exist languages recognisable by a $\Sigma$-chain of reset automata of
height $h$ but not of height $h-1$ -- whereas the width hierarchy
collapses: every $\Sigma$-chain of reset automata is equivalent to one in
which the number of states of each component depends only on the size of
the alphabet $\Sigma$.  
Finally, we address the pastification problem for $\Sigma$-chains of reset automata, namely the transformation of such structures into equivalent formulas in pure past Linear Temporal Logic.
A straightforward approach would consist in first translating a $\Sigma$-chain into a cascade and then applying the procedure of~\cite{journalversionofstacs} to obtain the corresponding formula.
Instead, we adapt the construction of~\cite{journalversionofstacs} so as to directly translate a $\Sigma$-chain of reset automata into a formula of quadratic size, thereby avoiding the exponential blow-up caused by the intermediate transformation from $\Sigma$-chains to cascades.

The paper is structured as follows.  Related work is discussed in
\cref{sec:related}. In \cref{sec:background}, we provide some background
knowledge. In \cref{sec:chains}, we introduce the notion of
$\Sigma$-chains and $\Sigma$-chain product of automata, we state some basic
results and we discuss succinctness and expressiveness through the use of
the chainification (the transformation of cascades into $\Sigma$-chains)
technique.  In \cref{sec:chains:reset} we focus on $\Sigma$-chains of reset
automata,  in particular we study height and width hierarchies, we compare
$\Sigma$-chains of reset automata to piecewise-testable and \rtrivial
languages and we discuss an efficient pastification procedure for
$\Sigma$-chains.  Finally, in \cref{sec:conclusions} we assess the work
done, and we outline some directions for future works.
\ifextended
    Proofs of all results are provided in the appendix.
\else
    Proofs of all results are provided in the Extended Version \cite{extende}.
\fi

\section{Related Work}
\label{sec:related}

The composition of (semi)automata has been extensively studied in automata theory and semigroup theory. In particular, the cascade product plays a central role in decomposition results such as the Krohn–Rhodes Theorem~\cite{krohn1965algebraic}, which shows that every finite semiautomaton can be decomposed into a cascade of permutation-reset semiautomata.
Since then, several works have investigated structural, algebraic, and logical properties of cascades of automata and semiautomata, while others have also considered restrictions of the cascade product. 

In \cite{gelderie2011classifying}, Gelderie studies structural properties of cascade products and introduces the notion of scope: a cascade has scope $k$ if each transition of each component depends on the states of at most the previous $k$ components. Cascades of scope $1$ have a strictly linear dependency structure and can be readily transformed into $\Sigma$-chains. Gelderie further shows that piecewise-testable languages are recognized by cascades of reset semiautomata of scope $1$, and \rtrivial languages by cascades of scope $2$.

Our work differs in two ways. First, we consider full automata rather than semiautomata. Second, instead of studying cascades of bounded-scope, we introduce $\Sigma$-chains, where the $i$-th component depends only on the input letter and the state of component $i-1$. We exhibit $\Sigma$-chains of reset automata for generalizations of both piecewise-testable and \rtrivial languages. As a byproduct, since \rtrivial languages are recognized by finite unions of $\Sigma$-chains of reset automata and cascades are closed under union, \rtrivial languages are also recognized by cascades of reset semiautomata of scope $1$, improving Gelderie's bound.

In \cite{gecseg2012products}, Gecseg surveys a wide range of results on homomorphically and isomorphically complete classes of semiautomata with respect to $\alpha_j$-products \cite{imreh1977alphai}. Informally, the $\alpha_0$-product corresponds to a cascade product in which the dependency structure among components is acyclic, i.e., no loops are allowed between the dependencies of the automata.
In \cite{domosi1983vi}, Domosi introduces the notion of $v_i$-product, where each semiautomaton is allowed to depend on at most $i$ other semiautomata. The $\alpha_0$-$v_1$ product \cite{gecseg1991} combines these two restrictions: the dependency structure is acyclic and each component depends on at most one other component.

This construction is closely related to the $\Sigma$-chain product introduced in this paper. Indeed, both frameworks restrict the dependency structure among components to be essentially linear. In particular, every $\Sigma$-chain product defines an $\alpha_0$-$v_1$ product whose dependency graph is connected. Conversely, an $\alpha_0$-$v_1$ product may consist of several independent connected components; once an initial state and a set of final states are fixed, these components can be sequentially concatenated (in any order) to obtain a corresponding $\Sigma$-chain.
In \cite{gecseg1991}, it is shown that the $\alpha_0$-$v_1$ product is equivalent, in terms of homomorphic representations, to the general product, i.e., a product in which each component may depend on all the others. At first glance, this may seem to contradict the results of the present paper, since general products of reset automata can recognize languages that are not star-free, suggesting that $\Sigma$-chain of reset automata products might also exceed the expressive power of the star-free class.
However, this apparent discrepancy is resolved by observing that the equivalence result of \cite{gecseg1991} relies on infinite products. Furthermore, the same work shows that, when restricting to finite products of finite automata, both the $\alpha_0$-product and the $v_1$-product are strictly more expressive than the $\alpha_0$-$v_1$ product with respect to homomorphic representations.
The expressiveness results for $\Sigma$-chain products we establish in this paper can thus be seen as an extension of the finite-product results of~\cite{gecseg1991}, providing a more refined understanding of the expressive power of finite $\alpha_0$-$v_1$ products.

\section{Background}
\label{sec:background}

In this section, we introduce the necessary background on automata and
cascades of automata.

\subsection{Automata}

A \roberto{deterministic} \emph{semiautomaton} $\autom$ is a tuple $\tuple{\Sigma,Q,\delta}$ such
that:
\begin{enumerate*}[label=(\roman*)]
  \item $\Sigma$ is the alphabet, \ie a finite set of symbols;
  \item $Q$ is the set of states;
  \item $\delta : Q \times \Sigma \to Q$ is the transition function.
\end{enumerate*}
An \roberto{deterministic} \emph{automaton} $\autom = \tuple{\Sigma,Q,\delta,s_0,F}$ is
a semiautomaton augmented with an initial state $s_0$ and a set $F$ of
final states. \roberto{From now on, the terms automaton and semiautomaton refer exclusively to deterministic ones.} With $\delta^*$ we denote the Kleene closure of $\delta$.
The \emph{size} of $\autom$, denoted with $|\autom|$, is defined as
$|\delta|$.

Given an automaton $\autom = \tuple{\Sigma,Q,\delta,q_0,F}$ and a (finite)
word $\sigma \coloneqq \seq{\sigma_0,\dots,\sigma_n} \in \Sigma^*$,
\emph{the run $\tau \in Q^+$ induced by $\sigma$} is a sequence
$\seq{q_0,q_1,\dots,q_{n+1}}$ starting from the initial state $q_0$ and 
such that $\delta(q_i,\sigma_i) = q_{i+1}$,
for all $0\le i\le n$. We say that $\tau$ is \emph{accepting} iff $q_{n+1}
\in F$. A word $\sigma \in \Sigma^*$ is \emph{accepted} by $\autom$ iff the
run induced by $\sigma$ is accepting. We define the \emph{language of
$\autom$} (or \emph{the language recognized by $\autom$}), denoted by
$\lang(\autom)$, as the set of accepted words.  Given a state $q \in Q$,
let $\lang_q(\autom)$ be the set of words inducing a run $\tau \coloneqq
\seq{q_0,\dots,q_{m}}$ with $q_{m} = q$.

Each symbol of the alphabet of an automaton $\autom$ (with set of states
$Q$) induces a transformation $Q \to Q$, defined as follows.

\begin{definition}[Function induced by a symbol]
  Let $\autom = \tuple{\Sigma,Q,\delta,q_0,F}$ be an automaton.
  For each symbol $a \in \Sigma$, we define the \emph{function induced by
  $a$ in $\autom$}, denoted with $\tau_{a}^{\autom}$ or simply $\tau_{a}$
  (when $\autom$ is clear from the context), as the function $\tau_{a}
  : Q \to Q$ such that, for all $s \in Q$, we have that $\tau_a(s) = s'$ iff
  $\delta(s,a) = s'$.
\end{definition}

Let $\autom = \tuple{\Sigma,Q,\delta,q_0,F}$ be an automaton and let $a \in
\Sigma$ be a symbol in the alphabet.  We say that $\tau_a$ is a \emph{reset
function} (or simply a \emph{reset}) iff there exists a $q' \in Q$ such
that, for all $q \in Q$, we have that $\tau_a(q) = q'$; in this case, we say
that $\tau_a$ is a \emph{reset on $q'$}.  If instead $\tau_a$ is bijective,
we say that $\tau_a$ is a \emph{permutation function} (or simply
a \emph{permutation}). 

Reset automata and permutation automata are defined as follows.

\begin{definition}[The classes of Pure-reset, Reset, Permutation, and
  Permutation-Reset automata]
\label{def:classes}
  Let $\autom = \tuple{\Sigma,Q,\delta,q_0,F}$ be an automaton.  $\autom$
  is called: 
  \begin{itemize}
    \item a \emph{pure-reset automaton} iff, for each $a \in \Sigma$,
      $\tau_a$ is a reset function;
    \item a \emph{reset automaton} iff, for each symbol $a \in \Sigma$, $\tau_{a}$ is either the \emph{identity} function or a \emph{reset} function;
    \item a \emph{permutation automaton} iff, for each symbol $a \in
      \Sigma$, $\tau_{a}$ is a permutation function.
    \item a \emph{permutation-reset automaton} iff, for each symbol $a \in
      \Sigma$, $\tau_a$ is either a reset or a permutation function.
  \end{itemize}
  We denote with $\PureResets$, $\Resets$, $\Permutations$ and
  $\PermutationResets$ the classes of pure-reset, reset, permutation, and
  permutation-reset automata, respectively.
\end{definition}

Let $A \in \set{\PureResets, \Resets, \Permutations, \PermutationResets}$
be a class of automata. We denote with $\DP[A]$ the class of languages
definable by a direct product of automata in the class $A$.

\begin{proposition}[Size of direct products]
\label{prop:size:direct}
  Let $D$ be a set of automata $D = \set{\autom_1,\dots,\autom_n}$ such
  that $\autom_i$ has alphabet $\Sigma$ and set of states $Q_i$ with $|Q_i|
  \geq 2$. Let $m = \max\set{|Q_i| : 1\le i\le m}$. It holds that:
  \begin{itemize}
    \item $|\autom_1 \times \dots \times \autom_n| \ge |\Sigma| \cdot 2^{n}$;
    \item $|D| = \sum_i |\autom_i| \leq |\Sigma| \cdot n \cdot m$.
  \end{itemize}
\end{proposition}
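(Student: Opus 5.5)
The plan is to compute both quantities directly from the definitions, using the convention fixed in the Background section that the size of an automaton is the size of its transition function, $|\autom| = |\delta|$. Since all automata are deterministic and complete, $\delta : Q \times \Sigma \to Q$ is a total function. Hence $|\autom| = |Q|\cdot|\Sigma|$ for an automaton with state set $Q$ over alphabet $\Sigma$. Both bullets then reduce to counting the states of the relevant automata.

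For the first bullet, I recall that the direct product $\autom_1 \times \dots \times \autom_n$ is the automaton over the common alphabet $\Sigma$ with state set $Q_1 \times \dots \times Q_n$ and componentwise transition function $\delta((q_1,\dots,q_n),a) = (\delta_1(q_1,a),\dots,\delta_n(q_n,a))$. This transition function is total on $(Q_1\times\dots\times Q_n)\times\Sigma$, so
\[
|\autom_1 \times \dots \times \autom_n| = |\Sigma| \cdot \prod_{i=1}^n |Q_i|.
\]
The hypothesis $|Q_i| \ge 2$ for every $i$ gives $\prod_i |Q_i| \ge 2^n$, which yields the claimed bound $|\Sigma|\cdot 2^n$.

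For the second bullet, the size of the set $D$ is taken as the sum of the sizes of its members, $|D| = \sum_i |\autom_i| = \sum_i |\Sigma|\cdot|Q_i|$. The maximum $m$ should range over $1\le i\le n$; the ``$m$'' in the statement's index set is a typo that I read as $n$. Bounding each $|Q_i|$ by $m$ gives $|D| \le n\cdot|\Sigma|\cdot m$.

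There is no genuine obstacle. The only points needing care are the size conventions: a complete deterministic transition function has exactly $|Q|\cdot|\Sigma|$ entries, and $|D|$ is additive over the components. I would state both conventions explicitly at the start of the proof.
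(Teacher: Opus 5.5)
Your proposal is correct and follows the same counting argument as the paper. The paper states this proposition without a proof of its own, but its proof of \cref{prop:chain:size} for $\Sigma$-chains proceeds exactly as you do: it counts $|\Sigma|\cdot\prod_i|Q_i|$ transitions for the product and $|\Sigma|\cdot|Q_i|$ for each component, then uses $|Q_i|\ge 2$ and $|Q_i|\le m$. You are also right that the index range $1\le i\le m$ in the definition of $m$ is a typo for $1\le i\le n$.
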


\subsection{Cascades and the Cascade Product}
The cascade
product~\cite{DBLP:conf/focs/MalerP90,DBLP:conf/birthday/Maler10} between
two (semi)automata $\autom$ and $\automb$, is a generalization of the
direct product in which $\automb$ reads the Cartesian product of the
input alphabet with the set of states of $\autom$, and where the
(semi)automaton $\automb$ transitions from state $s$ to state $s'$ reading
the symbol $(a,q)$ if the current input symbol is $a$ and the
(semi)automaton $\autom$ is in state $q$. The (semi)automaton obtained by
applying the cascade product is defined over the Cartesian product of the set of states of $\autom$ and
$\automb$ and the alphabet $\Sigma$, and is such that transition function is applied
component-wise.  Additionally, for full automata~\cite{DBLP:conf/stacs/BorelliGMM25}, the initial state of the
product automaton is the state comprising the initial states of $\autom$
and $\automb$ and the set of final states is the Cartesian
product of the final states of $\autom$ and $\automb$.

\begin{definition}[Cascade Product of automata]
\label{def:cascade:dfa}
  Let $\Sigma$ be a finite alphabet and let $\autom
  = \tuple{\Sigma,Q,\delta,q_0,F}$ and $\automb = \tuple{\Sigma\times Q,
  Q', \delta',q_0',F'}$ be two automata over the alphabets $\Sigma$ and
  $\Sigma \times Q$, respectively.
  The \emph{cascade product of $\autom$ and $\automb$}, denoted by $\autom
  \circ \automb$, is the automaton $\tuple{\Sigma,Q\times
  Q',\delta'',(q_0,q_0'),F \times F'}$ where $\delta''$ is defined as:
  \begin{center}
    $\delta''((q,q'),a) = (\delta(q,a), \delta'(q',(a,q)))$
  \end{center}
  for all $(q,q') \in
  Q \times Q'$ and for all $a \in \Sigma$.
\end{definition}

In the following, we will use the term \emph{cascade} to denote a sequence
of automata $\cascade = \seq{\autom_1,\autom_2,\dots,\autom_n}$ to which
the cascade product is applied, and we define the \emph{size} of
$\cascade$, denoted with $|\cascade|$, as $\sum_{i=1}^{n} |\autom_i|$.
Instead, with the term \emph{cascade product}, we will denote the automaton
$C = \autom_1 \circ \dots \circ \autom_n$ obtained by applying the cascade
product to the given sequence of automata.  We define the \emph{height} of
the cascade $\cascade = \seq{\autom_1,\dots,\autom_n}$ as $n$. Given
a cascade $\cascade = \seq{\autom_1,\autom_2,\dots,\autom_n}$, we
define the \emph{language of $\cascade$}, denoted with $\lang(\cascade)$,
as the language $\lang(\autom_1 \circ \dots \circ \autom_n)$.

The proposition below shows that, contrary to the case of the direct
product (\cf \Cref{prop:size:direct}), representing a cascade (\ie the
\emph{sequence} of components to which the cascade product is applied)
requires no less space than representing the product automaton (\ie the
automaton obtained after applying the cascade product), which is
exponential with respect to the number of components. This is due to the
fact that, for cascades of automata, the alphabet of each component is the
Cartesian product between $\Sigma$ and the set of states of all previous
components. In particular, in a cascade $\seq{\autom_1,\dots,\autom_n}$,
the alphabet of the last component $\autom_n$ has cardinality $|\Sigma|
\times |Q_1| \times \dots \times |Q_{n-1}|$, which is exponential in $n$.

\begin{proposition}[Size of cascades~\cite{journalversionofstacs}]
\label{prop:size:cascades}
  For each cascade $\cascade = \seq{\autom_1,\dots,\autom_n}$, where $Q_i$
  is the set of states of $\autom_i$, for each $1\le i\le n$, it holds
  that:
  \begin{itemize}
    \item $|\autom_1 \circ \dots \circ \autom_n| \ge |\Sigma| \cdot 2^{n}$;
    \item $|\cascade| \ge 2 \cdot |\Sigma| \cdot (2^n - 1)$.
  \end{itemize}
\end{proposition}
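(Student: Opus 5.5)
Both bounds follow by counting transitions: a deterministic automaton with state set $Q$ and alphabet $\Gamma$ has $|\delta| = |Q|\cdot|\Gamma|$, since $\delta$ is total. I will assume, as in \cref{prop:size:direct}, that every component has $|Q_i|\ge 2$. Without this assumption both bounds fail: a cascade of one-state automata has size $n\cdot|\Sigma|$.

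\emph{First item.} The product $\autom_1\circ\dots\circ\autom_n$, obtained by iterating \cref{def:cascade:dfa}, has alphabet $\Sigma$ and state set $Q_1\times\dots\times Q_n$. Its transition function is total, so
\[
|\autom_1\circ\dots\circ\autom_n| = |\Sigma|\cdot\prod_{i=1}^n |Q_i| \ge |\Sigma|\cdot 2^n .
\]

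\emph{Second item.} The component $\autom_i$ reads the alphabet $\Sigma\times Q_1\times\dots\times Q_{i-1}$ and has state set $Q_i$. Hence
\[
|\autom_i| = |Q_i|\cdot|\Sigma|\cdot\prod_{j<i}|Q_j| = |\Sigma|\cdot\prod_{j\le i}|Q_j| \ge |\Sigma|\cdot 2^{i}.
\]
Summing over the components and using the geometric series gives
\[
|\cascade| = \sum_{i=1}^n |\autom_i| \ge |\Sigma|\sum_{i=1}^n 2^i = 2\cdot|\Sigma|\cdot(2^n-1),
\]
which is exactly the stated bound.

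The argument is routine. The only points needing care are that the cascade product of $n$ components is read associatively, so that the alphabet of $\autom_i$ really is $\Sigma\times Q_1\times\dots\times Q_{i-1}$, and the non-triviality assumption $|Q_i|\ge 2$.
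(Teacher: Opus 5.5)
Your proof is correct and follows essentially the same route as the paper, which takes this result from \cite{journalversionofstacs} without reproving it. Its proof of the analogous \cref{prop:chain:size} uses exactly your transition-counting argument (a total deterministic transition function has $|Q|$ times the alphabet size many transitions) together with the same without-loss-of-generality assumption $|Q_i|\ge 2$, justified there by deleting one-state components, and your geometric sum $\sum_{i=1}^n |\Sigma|\cdot 2^i = 2\cdot|\Sigma|\cdot(2^n-1)$ yields the stated constant.
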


Let $A \in \set{\PureResets, \Resets, \Permutations, \PermutationResets}$
be a class of automata. We denote with $\C[A]$ the class of languages
recognized by cascades of automata of class $A$. 
A language $\lang \subseteq \Sigma^*$ is regular iff it can be defined
starting from finite subsets of $\Sigma^*$ by the applications of union,
negation, concatenation, and closure under Kleene star. $\lang$ is
\emph{star-free} iff it can be generated starting from finite subset of
$\Sigma^*$ only by the application of union, negation, and concatenation.
The following results characterizes the expressiveness of $\C[\Resets]$ and
$\C[\PermutationResets]$.

\begin{theorem}[\cite{journalversionofstacs}]
\label{lemma:star:free}
  The following characterizations hold:
  \begin{itemize}
    \item A language $\lang$ is regular if and only if $\lang \in
      \C[\PermutationResets]$;
    \item A language $\lang$ is star-free if and only if $\lang \in
      \C[\Resets]$.
  \end{itemize}
\end{theorem}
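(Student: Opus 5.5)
The statement is the cited characterisation of $\C[\PermutationResets]$ and $\C[\Resets]$, so the plan is to prove both equivalences by a common scheme: an easy direction showing that cascades stay within the target class, and a harder direction building cascades for every language in the class.

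\emph{Soundness.} A cascade product is an ordinary deterministic finite automaton over $\Sigma$, so every language in $\C[\PermutationResets]$ is regular. For reset components I would argue by induction on the height of the cascade that the product is counter-free. Suppose some word $w$ and state satisfy $w^k$ cycling with period $p>1$ on states $(q_1,\dots,q_n)$. Look at the first component whose coordinate is not constant along the cycle. On each earlier component $w^k$ acts by a fixed point, so every later occurrence of $w$ feeds this component the same sequence of letters in $\Sigma\times Q_1\times\dots\times Q_{i-1}$. Such a sequence induces a composition of identities and resets, which is idempotent. This contradicts a nontrivial cycle. By Schützenberger–McNaughton–Papert, counter-free languages are exactly the star-free ones, so $\C[\Resets]\subseteq\SF$.

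\emph{Completeness.} For regular languages I would take the minimal DFA of $\lang$, apply the Krohn–Rhodes theorem to obtain a cascade of permutation-reset semiautomata covering it, and then lift this to automata. The initial states are set to the preimage of the initial state under the covering homomorphism. The final states are the trouble, because the paper's cascade product only allows final sets of the form $F_1\times\dots\times F_n$, whereas the covering gives an arbitrary set of product states.

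\emph{Main obstacle: expressing arbitrary final sets.} I would use two facts.
\begin{itemize}
  \item $\C[A]$ is closed under union, for $A\in\set{\Resets,\PermutationResets}$. I would prove this by placing two cascades in sequence: the second cascade simply ignores the states of the first. I would then add one final two-state component that records whether the current product state lies in the accepting set.
  \item The extra component must be a legal reset or permutation-reset automaton. It reads the previous components' states together with the current letter, and resets to ``accept'' or ``reject'' according to whether the successor product state is accepting. That successor is determined by the current states and the letter, so the component is a pure-reset automaton. This also handles the empty word through its initial state.
\end{itemize}

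This extra component is the key trick, and it works for both classes. With it in place:
\begin{itemize}
  \item For $\REG$, Krohn–Rhodes plus the final component gives $\REG\subseteq\C[\PermutationResets]$.
  \item For $\SF$, the minimal DFA of a star-free language is counter-free. Its Krohn–Rhodes decomposition therefore involves no nontrivial groups, so it is a cascade of two-state reset semiautomata. The same final-state trick then yields $\SF\subseteq\C[\Resets]$.
\end{itemize}

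The step requiring the most care is checking that the covering homomorphism, restricted to the reachable sub-semiautomaton, is compatible with the chosen initial state and the final reset component.
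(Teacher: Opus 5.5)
Your proposal is correct. The paper does not reprove this result but imports it from \cite{journalversionofstacs}, and your route is the standard argument behind it: counter-freeness of reset cascades plus Sch\"utzenberger--McNaughton--Papert for soundness, and Krohn--Rhodes for completeness, with a final pure-reset component (initialised according to whether $\varepsilon\in\lang$) that turns an arbitrary accepting set of the covering cascade into the product form $F_1\times\dots\times F_n$ required here. The union-closure fact you list is not actually needed, since that final component alone already realises any accepting set.
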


\section{$\Sigma$-Chains of automata}
\label{sec:chains}

In this section we define \emph{$\Sigma$-chains of automata} and the
associated \emph{$\Sigma$-chain product}. We then study their main
properties in terms of succinctness and expressive power.

\subsection{Definitions, Basic Properties and Succinctness}

We define \emph{$\Sigma$-chains of automata} as a restricted variant of
cascades of automata, in which the alphabet of the first component is
$\Sigma$ (as in the case of cascades), while the alphabet of each successive component
$\autom_i$ is the Cartesian product between $\Sigma$ and the set of states
of the previous component $\autom_{i-1}$.

\begin{definition}[$\Sigma$-chain Product of automata]
\label{def:chain:dfa}
  Let $\Sigma$ be an alphabet and let $\autom_1,\dots,\autom_n$ be $n$
  automata such that:
  \begin{itemize}
      \item $\autom_1 = \seq{\Sigma, Q^1, \delta^1, q_0^1, F^1}$;
      \item $\autom_i = \seq{\Sigma\times Q^{i-1}, Q^i, \delta^i, q_0^i,
        F^i}$, for each $i>1$.
  \end{itemize}
  We define the \emph{$\Sigma$-chain product between
  $\autom_1,\dots,\autom_n$}, denoted with $\autom_1 \odot \dots \odot
  \autom_n$, as the automaton $\tuple{\Sigma,Q^1\times \dots\times
  Q^n,\delta,(q_0^1,\dots,q_0^n),F^1 \times \dots \times F^n}$ such that:
  \begin{align}
    \delta((q^1,q^2,\dots,q^n),a) = (\delta^1(q^1,a), \delta^2(q^2,(a,q^1)),
    \dots, \delta^n(q^n,(a,q^{n-1})))
  \end{align}
  for all $(q^1,q^2,\dots,q^n) \in Q^1 \times Q^2 \times \dots \times  Q^n$
  and for all $a \in \Sigma$.
\end{definition}

In the following, we will use the term \emph{$\Sigma$-chain} to denote
a sequence of automata to which the $\Sigma$-chain product is applied. To
distinguish them from cascades, we will denote this sequence as $\cascade
= \chain{\autom_1,\autom_2,\dots,\autom_n}$.
As with cascades, we define the \emph{size} of the $\Sigma$-chain $\cascade
= \chain{\autom_1,\autom_2,\dots,\autom_n}$, denoted with $|\cascade|$, as
$\sum_{i=1}^{n} |\autom_i|$, and its \emph{height} as $n$.

Given a $\Sigma$-chain $\cascade
= \chain{\autom_1,\autom_2,\dots,\autom_n}$, we define the \emph{language
of $\cascade$}, denoted with $\lang(\cascade)$, as the language
$\lang(\autom_1 \odot \dots \odot \autom_n)$. From now on, given a cascade
(or a $\Sigma$-chain) $\cascade$ and a $\Sigma$-chain (or a cascade)
$\cascade'$, we say that $\cascade$ is \emph{equivalent} to $\cascade'$ iff
$\lang(\cascade) = \lang(\cascade')$.


\begin{figure}[t!]
\centering
\begin{subfigure}{.45\textwidth}
  \centering
  \begin{subfigure}{\textwidth}
    \centering
    \begin{tikzpicture}[shorten >=1pt,node distance=2.5cm,on grid,auto,
      scale=0.8, every node/.style={scale=0.8}]
      \tikzstyle{every state}=[fill={rgb:black,1;white,10}]
        \node[state,initial,initial text=,accepting] (q_0) {$q_0$};
        \node[state] (q_1) [right of=q_0] {$q_1$};

        \path[->]
        (q_0) edge [bend left] node {$a$} (q_1)
        (q_1) edge [bend left] node {$a$} (q_0)
        (q_0) edge [loop above] node {$b,c$} ()
        (q_1) edge [loop above] node {$b,c$} ();
    \end{tikzpicture}
  \end{subfigure}\vspace{4pt}
  \begin{subfigure}{\textwidth}
    \centering
    \begin{tikzpicture}[shorten >=1pt,node distance=2.5cm,on grid,auto,
      scale=0.8, every node/.style={scale=0.8}]
      \tikzstyle{every state}=[fill={rgb:black,1;white,10}]
        \node[state,initial,initial text=,accepting] (s_0) {$s_0$};
        \node[state] (s_1) [right of=s_0] {$s_1$};

        \path[->]
        (s_0) edge [] node {$(b,q_1)$} (s_1)
        (s_0) edge [] node[below] {$(c,q_1)$} (s_1)
        (s_0) edge [loop above] node {$(a,Q),(b,q_0),(c,q_0)$} ()
        (s_1) edge [loop above] node {$(\Sigma,Q)$} ();
    \end{tikzpicture}
  \end{subfigure}\vspace{4pt}
  \begin{subfigure}{\textwidth}
    \centering
    \begin{tikzpicture}[shorten >=1pt,node distance=2.5cm,on grid,auto,
      scale=0.8, every node/.style={scale=0.8}]
      \tikzstyle{every state}=[fill={rgb:black,1;white,10}]
        \node[state,initial,initial text=] (t_0) {$t_0$};
        \node[state,accepting] (t_1) [right of=t_0] {$t_1$};

        \path[->]
        (t_0) edge [] node {$(a,s_0)$} (t_1)
        (t_0) edge [loop above] node {$(a,s_1),(b,S),(c,S)$} ()
        (t_1) edge [loop above] node {$(\Sigma,S)$} ();
    \end{tikzpicture}
  \end{subfigure}
\end{subfigure}
\begin{subfigure}{.45\textwidth}
  \centering
  \begin{tikzpicture}[shorten >=1pt,node distance=2.6cm,on grid,auto,
    scale=0.8, every node/.style={scale=0.8},
    every loop/.style={min distance=5mm,looseness=5}]
    \tikzstyle{every state}=[fill={rgb:black,1;white,10},
      shape=rectangle, rounded corners=4pt,
      inner sep=4pt, minimum size=0pt]
      \node[state,initial,initial text=] (x_000)
        {$(q_0,s_0,t_0)$};
      \node[state] (x_101) [right of=x_000] {$(q_1,s_0,t_1)$};
      \node[state] (x_111) [right of=x_101] {$(q_1,s_1,t_1)$};
      \node[state] (x_010) [below=1.3cm of x_000] {$(q_0,s_1,t_0)$};
      \node[state,accepting] (x_001) [below=1.3cm of x_101] {$(q_0,s_0,t_1)$};
      \node[state] (x_011) [below=1.3cm of x_111] {$(q_0,s_1,t_1)$};
      \node[state] (x_110) [below=1.3cm of x_010] {$(q_1,s_1,t_0)$};
      \node[state] (x_100) [below=1.3cm of x_001] {$(q_1,s_0,t_0)$};

      \path[->]
      (x_000) edge [] node {$a$} (x_101)
      (x_101) edge [bend left] node {$a$} (x_001)
      (x_101) edge [] node {$b,c$} (x_111)
      (x_111) edge [bend left] node {$a$} (x_011)
      (x_010) edge [bend left] node {$a$} (x_110)
      (x_001) edge [bend left] node {$a$} (x_101)
      (x_011) edge [bend left] node {$a$} (x_111)
      (x_110) edge [bend left] node {$a$} (x_010)
      (x_100) edge [] node {$b,c$} (x_110)
      (x_100) edge [] node {$a$} (x_001)
      (x_000) edge [loop above] node {$b,c$} ()
      (x_111) edge [loop above] node {$b,c$} ()
      (x_010) edge [loop above] node {$b,c$} ()
      (x_001) edge [loop right] node[above,yshift=2mm,xshift=-3mm] {$b,c$} ()
      (x_011) edge [loop below] node {$b,c$} ()
      (x_110) edge [loop left] node {$b,c$} ();
  \end{tikzpicture}
  \end{subfigure}
  \caption{%
    On the left, starting from the top, the three components $\autom_1$,
    $\autom_2$, and $\autom_3$ of a $\Sigma$-chain $\cascade$, with
    $\Sigma=\set{a,b,c}$. On the right, the associated $\Sigma$-product. We
    denoted final states with double circles and with $(a,Q)$ we denote the
    set of symbols $\set{(a,q) \suchthat q \in Q}$.
  }
  \label{fig:sigma:chain:example}
\end{figure}

An example of a $\Sigma$-chain and of the corresponding $\Sigma$-chain
product is shown in~\cref{fig:sigma:chain:example}. The three automata on
the left constitute the components of the $\Sigma$-chain $\cascade
= \chain{\autom_1,\autom_2,\autom_3}$.  Automaton $\autom_1$ is
a permutation-reset automaton, whereas $\autom_2$ and $\autom_3$ are reset
automata.  The language recognized by $\cascade$ consists of all words over
the alphabet $\Sigma = \set{a, b, c}$ satisfying the following conditions: 
\begin{enumerate*}[label=(\roman*)]
  \item the number of occurrences of the symbol \emph{``a''} is even; 
  \item for every $i^{th}$ occurrence of \emph{``a''} with odd index $i$,
    the immediately following symbol must also be \emph{``a''}; and
  \item there is at least one occurrence of \emph{``a''}. 
\end{enumerate*}
Each component of the $\Sigma$-chain enforces one of these conditions. In
particular, since $\tau_a$ is a permutation in $\autom_1$ and $q_0$ is its
only accepting state, $\autom_1$ accepts precisely those inputs containing
an even number of occurrences of \emph{``a''}. The automaton $\autom_2$
transitions to the sink state $s_1$ upon reading either $(b, q_1)$ or $(c,
q_1)$. This implies that if the most recently read \emph{``a''} corresponds
to an odd occurrence (\ie $\autom_1$ is in state $q_1$) and the current
symbol is $b$ or $c$, then the input word--and all its continuations--must
be rejected. The automaton obtained by applying the $\Sigma$-chain product
to $\cascade$ is depicted on the right-hand side
of~\cref{fig:sigma:chain:example}.

Let $A \in \set{\PureResets, \Resets, \Permutations, \PermutationResets}$
be a class of automata. We denote by $\SCH[A]$ the class of languages
recognized by $\Sigma$-chains whose components are automata in the class
$A$.  
The following proposition shows that $\Sigma$-chains are a particular case
of cascades. In particular, each $\Sigma$-chain of automata in a class $A
\in \set{\PureResets, \Resets, \Permutations, \PermutationResets}$ can be
converted to an equivalent cascade of automata of the same class $A$ with
the same height and with the same number of states in each component.

\begin{restatable}{proposition}{chaintocascades}
\label{prop:chain:to:cascades}
Let $\cascade = \chain{\autom_1,\dots,\autom_n}$ be a $\Sigma$-chain where,
for every $i=1,\dots,n$, $\autom_i$ is of class $A_i \in \set{\PureResets,
\Resets, \Permutations, \PermutationResets}$ and has 
$m_i$ states. Then, there exists a cascade $\cascade' = \seq{\autom'_1,\dots,\autom'_n}$ such
that $\lang(\cascade) = \lang(\cascade')$, where each $\autom'_i$ is of
class $A_i$ and has $m_i$ states (for every $i=1,\dots,n$).
\end{restatable}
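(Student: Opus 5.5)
The plan is to keep every component's state set, initial state and final states, and only enlarge its input alphabet so that it ignores the extra coordinates. Concretely, set $\autom'_1 = \autom_1$. For $i>1$, let $\autom'_i = \tuple{\Sigma\times Q^1\times\dots\times Q^{i-1}, Q^i, \delta'^i, q_0^i, F^i}$ with
\[
\delta'^i(q,(a,q^1,\dots,q^{i-1})) = \delta^i(q,(a,q^{i-1})).
\]
By construction, $\autom'_i$ has exactly $m_i$ states.

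\textbf{Class preservation.} For each letter $(a,q^1,\dots,q^{i-1})$ of the enlarged alphabet, the function it induces in $\autom'_i$ coincides with $\tau^{\autom_i}_{(a,q^{i-1})}$. Hence every induced function of $\autom'_i$ is already an induced function of $\autom_i$. Each class in $\set{\PureResets,\Resets,\Permutations,\PermutationResets}$ is defined by a pointwise condition on induced functions (reset, identity or reset, permutation, permutation or reset). It follows that $\autom'_i$ belongs to $A_i$.

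\textbf{Equivalence of languages.} I would show that the cascade product $\autom'_1\circ\dots\circ\autom'_n$ and the chain product $\autom_1\odot\dots\odot\autom_n$ are literally the same automaton. Both have state set $Q^1\times\dots\times Q^n$, the same initial tuple and the same final set $F^1\times\dots\times F^n$. It remains to compare transition functions. Unfolding the iterated binary cascade product of \cref{def:cascade:dfa}, the product transition updates component $i$ by $\delta'^i(q^i,(a,q^1,\dots,q^{i-1}))$, which by definition equals $\delta^i(q^i,(a,q^{i-1}))$. This is exactly the $i$-th coordinate in \cref{def:chain:dfa}. Since the two automata coincide, their languages are equal.

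\textbf{Main obstacle.} The only delicate point is bookkeeping: the binary cascade product is iterated, so component $i$ formally reads $\Sigma\times(Q^1\times\dots\times Q^{i-1})$ with nested pairs. I would fix the flattened convention and check by a short induction on $n$ that the states of the partial product $\autom'_1\circ\dots\circ\autom'_{i-1}$ are tuples $(q^1,\dots,q^{i-1})$, and that the transition is componentwise as claimed. Once this is settled, everything else is immediate.
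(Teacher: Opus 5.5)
Your proposal is correct and follows the paper's proof: the paper makes the same construction, letting $\autom'_i$ read $\Sigma\times Q_1\times\dots\times Q_{i-1}$ with $\tau^{\autom'_i}_{(a,q_1,\dots,q_{i-1})} = \tau^{\autom_i}_{(a,q_{i-1})}$, notes that class and number of states are preserved, and concludes that every word induces the same run on both products. Your observation that the two products are literally the same automaton is a slightly more direct way of stating that run equality.
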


From~\cref{prop:chain:to:cascades,lemma:star:free}, it directly follows
that the language of any $\Sigma$-chain of reset automata is a star-free
language and that the language of any $\Sigma$-chain of permutation-reset
automata is a regular language.

We analyze the size of $\Sigma$-chains with respect to the number of its
components and we compare it to the size of its associated $\Sigma$-chain
product.
The next proposition shows that, as for direct products (\cf
\Cref{prop:size:direct}), the size of a $\Sigma$-chain is always linear in
the number of its components, while the size of its associated
$\Sigma$-chain product is exponential (in the number of components).  This
is a key difference with respect to the case of cascades, in which
\emph{both} the size of a cascade and the size of its cascade product are
exponential in the number of the components (\cref{prop:size:cascades}).

\begin{restatable}[Size of $\Sigma$-chains and $\Sigma$-chain products]{proposition}{chainsize}
\label{prop:chain:size}
  Let $\cascade = \chain{\autom_1,\dots,\autom_n}$ be a $\Sigma$-chain where,
  for every $i=1,\dots,n$, $Q_i$ is the set of states of $\autom_i$, and let $m = \max\set{|Q_i| : 1\le
  i\le n}$. The following statements hold:
 \begin{itemize}
    \item $|\autom_1 \odot \dots \odot \autom_n| 
      \ge |\Sigma| \cdot 2^{n}$;
    \item $|\cascade| 
      \le |\Sigma|\cdot n\cdot m^2$.
  \end{itemize}
\end{restatable}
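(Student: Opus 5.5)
The proof splits along the two bullets of the statement. Both are counting arguments. The size of an automaton is $|\delta|$, the number of transitions. For a deterministic automaton with a total transition function this equals $|\text{states}| \times |\text{alphabet}|$. As in \cref{prop:size:direct} and \cref{prop:size:cascades}, I will assume every component has at least two states, $|Q_i|\ge 2$; otherwise the lower bound fails trivially with single-state components.

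For the first bullet, \cref{def:chain:dfa} says the product $\autom_1 \odot \dots \odot \autom_n$ has state set $Q_1\times\dots\times Q_n$ and alphabet $\Sigma$. Its transition function is total, with exactly one transition per pair of a state and a symbol. Hence
\begin{align}
|\autom_1 \odot \dots \odot \autom_n| = |\Sigma|\cdot\prod_{i=1}^n |Q_i| \ge |\Sigma|\cdot 2^n .
\end{align}
This is the same argument used for the direct product, so I can simply mirror the proof of \cref{prop:size:direct}.

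For the second bullet, I compute the size of each component separately.
\begin{itemize}
\item The first component $\autom_1$ has alphabet $\Sigma$, so $|\autom_1| = |\Sigma|\cdot|Q_1| \le |\Sigma|\cdot m \le |\Sigma|\cdot m^2$, using $m\ge 1$.
\item For $i>1$, the alphabet of $\autom_i$ is $\Sigma\times Q_{i-1}$. Therefore $|\autom_i| = |\Sigma|\cdot|Q_{i-1}|\cdot|Q_i| \le |\Sigma|\cdot m^2$.
\end{itemize}
Summing over the $n$ components gives $|\cascade| = \sum_i|\autom_i| \le |\Sigma|\cdot n\cdot m^2$.

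There is no real obstacle here. The only point worth stating explicitly, in contrast with \cref{prop:size:cascades}, is why the bound is linear in $n$: each component's alphabet involves only the immediately preceding state set $Q_{i-1}$, not the product $Q_1\times\dots\times Q_{i-1}$. As a result, no per-component term grows with $i$.
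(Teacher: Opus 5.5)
Your proposal is correct and follows the paper's proof for both bullets: the product has $|\Sigma|\cdot\prod_i|Q_i|$ transitions, and the per-component bounds $|\Sigma|\cdot|Q_1|$ and $|\Sigma|\cdot|Q_{i-1}|\cdot|Q_i|$ are summed over the $n$ components. One small difference is that you add $|Q_i|\ge 2$ as an extra hypothesis, whereas the paper argues it is without loss of generality, since a single-state component can be removed and the alphabet of the next component adjusted.
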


The preceding proposition raises the following question: \emph{do there
exist languages that admit a compact representation via $\Sigma$-chains but
not via cascades?} The next theorem answers this positively for the class of
reset automata, by exhibiting a family of languages
$\set{\lang_n}_{n=1}^{\infty}$ witnessing an exponential succinctness gap.
Each $\lang_n$ can be recognized by a $\Sigma$-chain of $n$ reset automata
of polynomial size (\cf~\cref{prop:chain:size}), while any cascade of reset
automata recognizing $\lang_n$ must have at least $n$ components, and
therefore exponential size (\cf~\cref{prop:size:cascades}).

\begin{restatable}[Succinctness of $\Sigma$-chains]{theorem}{chainsuccintness}
\label{teo:chain:succinctness}
    Let $\Sigma$ be a finite alphabet.  For every $h>0$, let $\lang_h
    = \Sigma^{h} \cdot \Sigma^*$. The following holds:
    \begin{itemize}
        \item the smallest cascade of reset automata $\cascade_h$
          recognizing $\lang_h$ has size $|\cascade_h| \in
          \Omega(2^h\cdot|\Sigma|)$;
        \item there exists a $\Sigma$-chain of two-state reset automata
          $\cascade'_h$ recognizing $\lang_h$ and of size $|\cascade'_h|
          \in O(h \cdot |\Sigma|)$.
    \end{itemize}
\end{restatable}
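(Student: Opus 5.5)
The plan splits into the two items of \cref{teo:chain:succinctness}; the upper bound is an explicit construction, the lower bound a combinatorial argument on cascades of reset automata.

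\textbf{Upper bound.} I would build $\cascade'_h = \chain{\autom_1,\dots,\autom_h}$ where each $\autom_i$ has two states $\set{0,1}$. $\autom_1$ reads $\Sigma$, starts in $0$, and every symbol resets it to $1$. For $i>1$, $\autom_i$ starts in $0$; every symbol $(a,1)$ resets it to $1$, and every symbol $(a,0)$ is the identity. All components are reset automata. By induction on $i$, after reading a word of length $\ell$, $\autom_i$ is in state $1$ iff $\ell \ge i$. Indeed, $\autom_i$ moves to $1$ exactly on the step after $\autom_{i-1}$ has been observed in state $1$, and then stays there, since both $(a,1)$ and $(a,0)$ keep it in $1$. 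The final states must make the product accept exactly the words of length at least $h$. Because the product's final set is the Cartesian product $F^1\times\dots\times F^h$, I take $F^i=\set{1}$ for all $i$. On words of length $\ell\ge h$ every component is in $1$, so these are accepted. On shorter words $\autom_h$ is in $0$, so they are rejected. Each component has $2\cdot 2|\Sigma|$ transitions, giving $|\cascade'_h|\in O(h\cdot|\Sigma|)$, consistent with \cref{prop:chain:size}.

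\textbf{Lower bound.} By \cref{prop:size:cascades}, it suffices to show that every cascade of reset automata recognizing $\lang_h$ has height at least $h$. The key lemma is that a cascade $\autom_1\circ\dots\circ\autom_n$ of reset automata reaches a fixed global state within $n$ steps on any constant-letter word. More precisely, fix $a\in\Sigma$ and consider the word $a^k$. I would argue by induction on $j$ that the state of component $j$ is constant for all $k\ge j$. Component $1$ reads $a$ at every step, and $\tau_a$ is either the identity or a reset, so its state is constant from time $1$ on (in the identity case, from time $0$). Now assume components $1,\dots,j-1$ are constant from time $j-1$ on. Then from that time component $j$ always reads the same letter $(a,q_1,\dots,q_{j-1})$. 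That letter induces either the identity or a reset, so component $j$ is constant from time $j$ on. Consequently $a^n$ and $a^{n+1}$ drive the product to the same state, so they are both accepted or both rejected. If $n<h$, then $n\le h-1$, so $a^{h-1}\notin\lang_h$ and $a^{h}\in\lang_h$ reach the same state, which is a contradiction. Hence $n\ge h$, and \cref{prop:size:cascades} gives $|\cascade_h|\ge 2|\Sigma|(2^h-1)\in\Omega(2^h|\Sigma|)$.

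The main obstacle I expect is making the stabilization induction precise. The delicate point is the timing offset: component $j$ only stabilizes once its inputs are stable, so the bound is exactly $j$ and not $j-1$. The identity case also needs care, because there the component may already be constant from time $0$; this only makes the bound easier, but it should be handled explicitly in the induction.
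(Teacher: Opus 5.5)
Your proof is correct. For the second item you use essentially the paper's construction. The only difference is in how the symbols $(a,0)$ act on $\autom_i$ for $i>1$:
\begin{itemize}
\item in your version they act as the identity, so each level latches;
\item in the paper they reset to $q_0^i$, so each level copies its predecessor with a one-step delay, and the chain is even pure-reset.
\end{itemize}
Level $i-1$ enters its final state at time $i-1$ and never leaves it. Hence both versions give that $\autom_i$ is in its final state iff at least $i$ symbols have been read.

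For the first item you take a genuinely different route. The paper simply invokes Theorem 5.8 of \cite{journalversionofstacs} for the fact that no cascade of reset automata of height less than $h$ recognizes $\lang_h$. You instead prove this directly, by showing that on a unary input $a^k$ the $j$-th component of any cascade of reset automata is frozen from time $j$ on.

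Your argument has several advantages.
\begin{itemize}
\item It is self-contained and slightly stronger: a cascade of $n$ reset automata cannot separate $a^k$ from $a^{k+1}$ for any $k\ge n$.
\item It applies verbatim to $\Sigma$-chains, so it gives \cref{height:hierarchy:lemma} without passing through \cref{prop:chain:to:cascades}.
\item It does not need components to have at least two states. This matters because \cref{prop:size:cascades} tacitly assumes $|Q_i|\ge 2$. You can first discard one-state components, which neither changes the language nor increases the size, and then apply your height bound to what remains.
\end{itemize}

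What the citation additionally gives the paper is a matching height-$h$ cascade of two-state reset automata. This makes ``the smallest cascade'' well defined and shows that the bound $2|\Sigma|(2^h-1)$ is attained. You can recover the same fact from your own $\Sigma$-chain via \cref{prop:chain:to:cascades}.
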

\begin{proof}[Proof (sketch)]
  It is possible to construct a $\Sigma$-chain $\cascade'_h
  = \chain{\autom_1,\dots,\autom_h}$ recognizing $\lang_h$ as follows (for
  every $1\le i\le h$, $\autom_i$ has only two states, the initial state
  $q_0^i$ and the only final state $q_1^i$):
  \begin{enumerate*}[label=(\roman*)]
    \item[(1)] in $\autom_1$, each symbol in $\Sigma$ induces a reset on
      $q_1^i$;
    \item[(2)] for every $1\le i\le h$, $\autom_i$ transitions from its initial
      state to its final state only when the preceding component is in its
      final state (regardless of the symbol read from $\Sigma$), \ie each
      symbol $(a,q_1^{i-1})$ induces a reset on $q_1^i$ and all the others
      induce resets on $q_0^i$.
  \end{enumerate*}
  This ensures that the automaton $\autom_1 \odot \dots \odot \autom_h$
  reaches its final state $(q_1^1,\dots,q_1^h)$ only after consuming at
  least $h$ symbols from $\Sigma$.
  The lower bound on the size of the smallest cascade for $\lang_h$ is
  proved by Theorem 5.8 of~\cite{journalversionofstacs}.
\end{proof}

\cref{teo:chain:succinctness} shows that, for certain languages, the use of
$\Sigma$-chains can yield representations that are exponentially more
succinct than those based on cascades. This motivates the study of the
expressive power of $\Sigma$-chains, that is, the characterization of the
languages recognizable via $\Sigma$-chains over a given class $A$ of automata.

\subsection{Expressiveness}

We now investigate the expressive power of $\Sigma$-chains, focusing in
particular on the case where the components are automata drawn from a fixed
class $A \in \set{\PureResets, \Resets,\Permutations, \PermutationResets}$ (\ie
pure-reset, reset, permutation, and permutation-reset automata).

First, we compare the $\Sigma$-chain product with the direct product,
showing that the former is strictly more expressive than the latter.

Then, we prove that $\Sigma$-chains of permutation-reset automata capture
exactly the class of regular languages. The result is established via
a \emph{chainification method}, namely, a construction that transforms any
cascade of permutation-reset automata into an equivalent $\Sigma$-chain
of permutation-reset automata. The expressive equivalence with regular
languages then follows from the known equivalence between regular languages
and cascades of permutation-reset automata (\cf~\cref{lemma:star:free}).

We further show that the chainification method extends to the class
$\PureResets$, yielding that $\Sigma$-chains of pure-reset automata are
expressively equivalent to cascades of pure-reset automata.

For the class $\Resets$ of reset automata, we show that $\Sigma$-chains of
reset automata recognize only star-free languages. Whether the converse
holds remains open. The natural proof strategy, namely applying chainification
to cascades of reset automata, which are known to recognize all star-free
languages (\cf~\cref{lemma:star:free}), fails: chainification
produces $\Sigma$-chains of permutation-reset automata rather than reset
automata.

\paragraph*{Comparison with the direct product}
As already discussed in the introduction, the $\Sigma$-chain product can be
viewed both as a restricted variant of the cascade product and as
a generalization of the direct product. In fact, $\Sigma$-chains retain
a key feature of direct products, namely their linear size with respect to
the number of components (\cf~\cref{prop:chain:size}). This naturally
raises the question of their relative expressive power. We show that any
direct product can be simulated by a $\Sigma$-chain product ($\DP[A]
\subseteq \SCH[A]$, for each $A \in \set{\PureResets, \Resets,
\Permutations, \PermutationResets}$) and that the opposite is \emph{not}
possible: we exhibit two simple languages that cannot be recognized by any
direct product of reset (resp.  pure-reset) automata, while they are
definable by a $\Sigma$-chain of reset (resp. pure-reset) automata.

\begin{restatable}{lemma}{lemmadirectprodct}
\label{lemma:directprodct}
  The following inclusions hold:
  \begin{itemize}
    \item $\DP[A] \subseteq \SCH[A]$, for each $A \in \set{\PureResets,
      \Resets, \Permutations, \PermutationResets}$;
    \item $\DP[\Resets] \subsetneq \SCH[\Resets]$ and $\DP[\PureResets]
      \subsetneq \SCH[\PureResets]$.
  \end{itemize}
In particular, the strict inclusions are witnessed by the languages
$\lang_1 = \Sigma^*a\Sigma^*a\Sigma^*$ and $\lang_2 = \Sigma^*aa$, over
$\Sigma = \set{a,b}$, as we have that $\lang_1 \in \SCH[\Resets] \setminus
\DP[\Resets]$ and $\lang_2 \in \SCH[\PureResets] \setminus \DP[\PureResets]$.
\end{restatable}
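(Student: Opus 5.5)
For the first item, I will turn a direct product $\autom_1 \times \dots \times \autom_n$ over $\Sigma$ into a $\Sigma$-chain $\chain{\autom'_1,\dots,\autom'_n}$. Set $\autom'_1 = \autom_1$. For $i>1$, let $\autom'_i$ have the same states, initial state and final states as $\autom_i$, and transitions $\delta'^i(q,(a,p)) = \delta^i(q,a)$ for every $p \in Q^{i-1}$. Each symbol $(a,p)$ then induces exactly $\tau_a^{\autom_i}$, so $\autom'_i$ stays in the class $A$ for all four classes. Since the chain product acts componentwise with the same final set $F^1\times\dots\times F^n$, the two languages coincide. This is routine.

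For the strict inclusions, I first give chains for the witnesses. For $\lang_1 = \Sigma^*a\Sigma^*a\Sigma^*$, let $\autom_1$ be a two-state reset automaton with $a$ a reset on $q_1$ and $b$ the identity. Let $\autom_2$ be a two-state reset automaton where $(a,q_1)$ resets to the final state $s_1$ and all other symbols are the identity. Make all states of $\autom_1$ final. Then $\autom_2$ reaches $s_1$ exactly when an $a$ is read after an earlier $a$, i.e. on $\lang_1$. For $\lang_2=\Sigma^*aa$, let $\autom_1$ be pure-reset with $\tau_a$ a reset on $q_1$ and $\tau_b$ a reset on $q_0$, so $\autom_1$ records the last letter. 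Let $\autom_2$ be pure-reset with $(a,q_1)$ a reset on the final state $s_1$ and every other symbol a reset on $s_0$. With all states of $\autom_1$ final, the product accepts exactly the words whose last two letters are $aa$.

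The main step is showing that these languages are not in $\DP$. Here I use the characterizations recalled in \cref{tab:expressiveness}: $\DP[\Resets]$ consists of the intersections $\bigcap_i(\Sigma^*R_iI_i^*\cup J_i)$, and $\DP[\PureResets]$ of the intersections $\bigcap_i(\Sigma^*R_i\cup J_i)$. If that characterization cannot be cited, I would argue directly: a direct product's language is an intersection of the component languages, and I would describe each component language with at most two relevant states.

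For the reset case, suppose $\lang_1=\bigcap_i(\Sigma^*R_iI_i^*\cup J_i)$. The word $aa\in\lang_1$, while $a\notin\lang_1$ and $ba\notin\lang_1$, so some conjunct $i$ rejects $ba$. The word $aa$ satisfies conjunct $i$. Since $\epsilon$ and $aa$ are distinct, $aa\in\Sigma^*R_iI_i^*\cup J_i$ with $R_i,I_i\subseteq\{a,b\}$ forces either $a\in R_i$ or $a\in I_i$ together with a reset letter earlier in the word. In the first case $ba$ also ends in $R_i$, so it satisfies conjunct $i$. In the second case the reset letter must be $a$, so $a\in R_i\cap I_i$, which is impossible because the two sets are disjoint. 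The same reasoning also rules out $J_i$, since $J_i\ni aa$ means $a\in I_i$ and then $ba\in I_i^*$ would need $b\in I_i$. I expect the delicate part to be this case analysis on which conjunct separates $ba$ from $aa$.

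For the pure-reset case, every conjunct depends only on the last letter, or on the word being empty. So $ba$ and $aa$ satisfy the same conjuncts, contradicting $aa\in\lang_2$ and $ba\notin\lang_2$. Equivalently, in a direct product of pure-reset automata the state after any nonempty word depends only on its last letter.
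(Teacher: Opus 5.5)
The first item, your two chain constructions, and the pure-reset separation are correct. The pure-reset argument is actually cleaner than the paper's enumeration: a direct product of pure-reset automata is itself pure-reset, so $aa$ and $ba$ cannot be separated.

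The gap is in the proof that $\lang_1 \notin \DP[\Resets]$. You try to show that no single conjunct $C_i=\Sigma^*R_iI_i^*\cup J_i$ can contain $aa$ while excluding $ba$. That claim is false. Take $I_i=\{a\}$, $R_i=\emptyset$, $J_i=I_i^*$, so $C_i=a^*$. This is the language of a two-state reset automaton: $a$ is the identity, $b$ resets to a non-final sink, and the initial state is final. Your sentence ``the same reasoning also rules out $J_i$, since $J_i\ni aa$ means $a\in I_i$ and then $ba\in I_i^*$ would need $b\in I_i$'' derives no contradiction. It only shows $b\notin I_i$, which is perfectly consistent. So the words $aa$ and $ba$ alone cannot give the contradiction, because a reset automaton does separate them.

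You need to use more of $\lang_1$ than the single word $aa$, and either of two fixes works.
\begin{itemize}
\item[(i)] Finish your case analysis. In the $J_i$ case, $a\in I_i$ forces $a\notin R_i$. Then $ba\notin C_i$ forces both $b\notin I_i$ and $b\notin R_i$, since otherwise $ba\in R_iI_i^*$. Hence $C_i=a^*$. But every conjunct must contain all of $\lang_1$, and $baa\in\lang_1\setminus a^*$.
\item[(ii)] More simply, compare $a$ with $aa$; you note $a\notin\lang_1$ but never use it. In a reset automaton every letter acts as the identity or a reset, and both are idempotent. So $a$ and $aa$ lead every component, and hence the direct product, to the same state. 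Yet $aa\in\lang_1$ and $a\notin\lang_1$.
\end{itemize}
With either fix you get a short, enumeration-free proof. The paper instead lists the 12 languages of reset automata over $\{a,b\}$ via the characterization $J\cup\Sigma^*RI^*$, closes them under intersection by an algorithmically verified computation (17 languages), and checks that the witnesses are not among them.
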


\paragraph*{Transformation of cascades into $\Sigma$-chains (Chainification)}

We describe the transformation of cascades into equivalent $\Sigma$-chains
starting from the simplest case, that is, the case of cascades of three
automata, since every cascade of height two is trivially a $\Sigma$-chain. 

Let $\cascade=\seq{\autom_1, \autom_2, \autom_3}$ be a cascade of three
automata: we shall construct an equivalent $\Sigma$-chain
$\cascade'=\chain{\autom'_1, \autom'_2, \autom'_3}$ such that
$\lang(\cascade) = \lang(\cascade')$ and if $\autom_i$ is
a permutation-reset (resp., permutation, pure-reset) automaton, then also $\autom'_i$ is
a permutation-reset (resp., permutation, pure-reset) automaton, for each $1\le i\le 3$.

The main challenge in this translation lies in handling the dependence of
the third component $\autom_3$ on the first component $\autom_1$, which is not
permitted in a $\Sigma$-chain. To eliminate this dependence, we encode
the required information into the states of the second component $\autom'_2$.
The construction proceeds as follows:
\begin{enumerate*}[label=(\roman*)]
  \item the second component $\autom'_2$ is defined so as to combine the
    state information of $\autom_1$ and $\autom_2$;
  \item the third component $\autom'_3$ then reads the state of
    $\autom'_2$, from which it can recover the information about both
    $\autom_1$ and $\autom_2$.
\end{enumerate*}
As stated by the following Lemma, if the second component $\autom_2$ is
permutation-reset (resp. permutation, pure-reset), the new component $\autom'_2$ can be
defined to be permutation-reset (resp. permutation, pure-reset).

\begin{restatable}{lemma}{cascadethreetochainthree}
\label{lemma:cascade3:to:chain3}
  Let $\cascade = \seq{\autom_1, \autom_2, \autom_3}$ be a cascade of
  permutation-reset (resp. permutation, pure-reset) automata. There exists
  a $\Sigma$-chain $\cascade' = \chain{\autom'_1, \autom'_2, \autom'_3}$ of
  permutation-reset (resp. permutation, pure-reset) automata such that $\lang(\cascade)
  = \lang(\cascade')$.
\end{restatable}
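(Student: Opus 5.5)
We keep $\autom_1$ unchanged. We enlarge the second component so that it carries a copy of the current state of $\autom_1$ alongside its own state. The third component then reads this pair instead of the pair of states of $\autom_1$ and $\autom_2$. Write $\autom_i = \tuple{\Sigma_i, Q_i, \delta_i, q_0^i, F_i}$.

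\textbf{Components.} Set $\autom'_1 = \autom_1$. Let $\autom'_2$ have alphabet $\Sigma \times Q_1$, state set $Q_1 \times Q_2$, initial state $(q_0^1, q_0^2)$ and final states $Q_1 \times F_2$. For $\autom'_3$ we keep $Q_3$, $q_0^3$ and $F_3$, take alphabet $\Sigma \times (Q_1 \times Q_2)$, and define $\delta'_3(s, (a,(q_1,q_2))) = \delta_3(s, (a,q_1,q_2))$. The product's final set is then $F_1 \times (Q_1 \times F_2) \times F_3$, which contains the same reachable configurations as $F_1 \times F_2 \times F_3$ once the invariant below holds.

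\textbf{Transitions of $\autom'_2$.} Fix a symbol $(a,q_1)$ and let $q_1' = \delta_1(q_1,a)$. We split on the type of $\tau_{(a,q_1)}^{\autom_2}$.
\begin{itemize}
  \item If it is a reset on $r$, the symbol $(a,q_1)$ induces the reset on $(q_1', r)$.
  \item If it is a permutation $\pi$, choose any permutation $\sigma$ of $Q_1$ with $\sigma(q_1) = q_1'$, for instance the transposition of $q_1$ and $q_1'$. Define $(p,q) \mapsto (\sigma(p), \pi(q))$. This map is a bijection of $Q_1 \times Q_2$.
\end{itemize}
Hence every symbol of $\autom'_2$ induces a reset or a permutation. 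If $\autom_2$ is permutation-reset, so is $\autom'_2$. If $\autom_2$ is a permutation automaton, only the second case arises, so $\autom'_2$ is one too. If $\autom_2$ is pure-reset, only the first case arises, so $\autom'_2$ is pure-reset. The class of $\autom'_3$ is also preserved: each of its symbols $(a,(q_1,q_2))$ induces exactly the transformation induced by $(a,q_1,q_2)$ in $\autom_3$. The class of $\autom'_1 = \autom_1$ is preserved trivially.

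\textbf{Invariant.} We prove by induction on the length of $w$ the following. If the run of $\autom_1 \circ \autom_2 \circ \autom_3$ on $w$ reaches $(q_1,q_2,q_3)$, then the run of $\autom'_1 \odot \autom'_2 \odot \autom'_3$ on $w$ reaches $(q_1,(q_1,q_2),q_3)$.
\begin{itemize}
  \item The base case holds by the choice of initial states.
  \item For the step, read $a$ from such a configuration. $\autom'_1$ moves to $q_1'$. $\autom'_2$ reads $(a,q_1)$ while its first coordinate equals $q_1$, so in both cases it moves to $(q_1', \delta_2(q_2,(a,q_1)))$. The first case gives $q_1'$ directly; the second gives $\sigma(q_1) = q_1'$. $\autom'_3$ reads $(a,(q_1,q_2))$ and so mimics $\delta_3$ on $(a,q_1,q_2)$.
\end{itemize}
Acceptance then coincides, because $(q_1,(q_1,q_2),q_3) \in F_1 \times (Q_1 \times F_2) \times F_3$ if and only if $(q_1,q_2,q_3) \in F_1 \times F_2 \times F_3$. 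This gives $\lang(\cascade) = \lang(\cascade')$.

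\textbf{Main obstacle.} The delicate step is the permutation case. The naive update, which overwrites the first coordinate with $\delta_1(q_1,a)$, is a constant map on that coordinate and would destroy bijectivity. Using $\delta_1(\cdot,a)$ instead also fails when $a$ is a reset in $\autom_1$. The remedy is to observe that only the reachable configurations matter: in them the stored coordinate always equals $q_1$. So any permutation $\sigma$ that sends $q_1$ to $q_1'$ is correct on reachable configurations while keeping the induced map a permutation.
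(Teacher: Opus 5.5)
Your proof is correct and follows the paper's construction. The paper also builds $\autom'_2$ over $Q_1\times Q_2$ and sends a reset to the reset on $(\delta_1(q_1,a),r)$; in the permutation case, your product map $(p,q)\mapsto(\sigma(p),\pi(q))$ is one concrete instance of its ``arbitrary bijective extension off $\set{q_1}\times Q_2$''. Your single forward invariant, combined with determinism, replaces the paper's two separate inclusions, and your choice of $Q_1\times F_2$ instead of $F_1\times F_2$ as the final states of $\autom'_2$ is immaterial.
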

\begin{proof}[Proof (sketch)]
  Define $\autom'_1$ as $\autom_1$. To ensure that $\autom'_2$ carries all the information needed by $\autom'_3$, which reads only symbols in $\Sigma$ and states of $\autom'_2$, we define the state set $S'$ of $\autom'_2$ as the Cartesian product of the state sets of $\autom_1$ and $\autom_2$.
Furthermore, $\autom'_2$ must satisfy the following condition: for every state $(q, s) \in S'$ and every transition labeled $(a, q)$, where $q$ matches the first component of the current state, the transition leads to a state $(q', s')$ such that $\delta_1(q,a) = q'$ and $\delta_2(s,(a,q))
  = s'$, \ie
  \begin{align}
    \forall (a,q) \in \Sigma \times Q \suchdot ( \delta'_2((q,s),(a,q))
    = (q',s') \ \Leftrightarrow \ \delta_1(q,a) = q' \land
    \delta_2(s,(a,q)) = s' )
  \end{align}
  Moreover, it is possible to extend the definition of $\delta'_2$ on all other input pairs 
  (\ie the pairs $(a, q'')$ with $q'' \neq q$) in order to guarantee that $\tau^{\autom'_2}_{(a,q)}$
  is a permutation (when $\tau^{\autom_2}_{(a,q)}$ is a permutation) or
  a reset (when $\tau^{\autom_2}_{(a,q)}$ is a reset).
  The automaton $\autom'_3$ is defined as $\autom_3$ by transforming each
  transition $\tau_{(a,q,s)}$ into $\tau_{(a,(q,s))}$.
  Crucially, all states of $\autom'_1 \odot \autom'_2 \odot \autom'_3$ of
  type $(q,(q',s),t)$ with $q\neq q'$ are not reachable from the initial
  state, having that the language of $\autom'_1 \odot \autom'_2 \odot
  \autom'_3$ is equal to the language of $\autom_1 \circ \autom_2 \circ
  \autom_3$.
\end{proof}


\begin{figure}[t!]
\centering
\begin{subfigure}{.45\textwidth}
  \centering
  \begin{subfigure}{\textwidth}
    \centering
    \begin{tikzpicture}[shorten >=1pt,node distance=2.5cm,on grid,auto,
      scale=0.8, every node/.style={scale=0.8}]
      \tikzstyle{every state}=[fill={rgb:black,1;white,10}]
        \node[state,initial,initial text={$\autom_1$: }] (q_0) {$q_0$};
        \node[state,accepting] (q_1) [right of=q_0] {$q_1$};

        \path[->]
        (q_0) edge [bend left] node {$a$} (q_1)
        (q_1) edge [bend left] node {$b$} (q_0)
        (q_0) edge [loop above] node {$b$} ()
        (q_1) edge [loop above] node {$a$} ();
    \end{tikzpicture}
  \end{subfigure}\vspace{4pt}
  \begin{subfigure}{\textwidth}
    \centering
    \begin{tikzpicture}[shorten >=1pt,node distance=2.5cm,on grid,auto,
      scale=0.8, every node/.style={scale=0.8}]
      \tikzstyle{every state}=[fill={rgb:black,1;white,10}]
        \node[state,initial,initial text={$\autom_2$: }] (s_0) {$s_0$};
        \node[state,accepting] (s_1) [right of=s_0] {$s_1$};

        \path[->]
        (s_0) edge [bend left] node {$(a,q_1)$} (s_1)
        (s_1) edge [bend left] node {$(b,q_0)$} (s_0)
        (s_0) edge [loop above] node[left,align=center]{$(b,q_0)$\\$id_2$} ()
        (s_1) edge [loop above] node[right,align=center]{$(a,q_1)$\\$id_2$} ();
    \end{tikzpicture}
  \end{subfigure}\vspace{4pt}
  \begin{subfigure}{\textwidth}
    \centering
    \begin{tikzpicture}[shorten >=1pt,node distance=2.5cm,on grid,auto,
      scale=0.8, every node/.style={scale=0.8}]
      \tikzstyle{every state}=[fill={rgb:black,1;white,10}]
        \node[state,initial,initial text={$\autom_3$: }] (t_0) {$t_0$};
        \node[state,accepting] (t_1) [right of=t_0] {$t_1$};

        \path[->]
        (t_0) edge [] node {$(a,q_1,s_1)$} (t_1)
        (t_0) edge [loop above] node {$id_3$} ()
        (t_1) edge [loop above] node[right,align=center] {$(a,q_1,s_1)$\\$id_3$} ();
    \end{tikzpicture}
  \end{subfigure}
\end{subfigure}
\begin{subfigure}{.45\textwidth}
  \centering
  \begin{subfigure}{\textwidth}
    \centering
    \begin{tikzpicture}[shorten >=1pt,node distance=2.6cm,on grid,auto,
      scale=0.8, every node/.style={scale=0.8},
      every loop/.style={min distance=5mm,looseness=5}]
      \tikzstyle{every state}=[fill={rgb:black,1;white,10},
        shape=rectangle, rounded corners=4pt,
        inner sep=4pt, minimum size=0pt]
        \node[state,initial,initial text={$\autom'_2$: }] (x_00) {$(q_0,s_0)$};
        \node[state] (x_10) [right of=x_00] {$(q_1,s_0)$};
        \node[state] (x_01) [below of=x_00] {$(q_0,s_1)$};
        \node[state,accepting] (x_11) [right of=x_01] {$(q_1,s_1)$};

        \path[->]
        (x_00) edge[loop above] node[left] {$(b,q_0)$} ()
        (x_00) edge[bend left=45] node[yshift=-1mm] {$(a,q_0),(b,q_1)$} (x_10)
        (x_00) edge[bend left=20] node[sloped,yshift=-1mm] {$(a,q_1)$} (x_11)
        (x_10) edge[] node[above,yshift=-1mm] {$(b,q_1)$} 
                      node[below,yshift=+1mm] {$(\Sigma,q_0)$} (x_00)
        (x_10) edge[bend left] node[] {$(a,q_1)$} (x_11)
        (x_01) edge[bend left] node[] {$(b,q_0)$} (x_00)
        (x_01) edge[bend right=45] node[below] {$(\Sigma,q_1),(a,q_0)$} (x_11)
        (x_11) edge[loop right] node[below,yshift=-2mm] {$(a,q_1)$} ()
        (x_11) edge[bend left=20] node[sloped,below] {$(b,q_0)$} (x_00)
        (x_11) edge[] node[above,yshift=-1mm] {$(b,q_1)$} 
                      node[below,yshift=+1mm] {$(a,q_0)$} (x_01);
    \end{tikzpicture}
  \end{subfigure}\vspace{4pt}
  \begin{subfigure}{\textwidth}
    \centering
    \begin{tikzpicture}[shorten >=1pt,node distance=2.5cm,on grid,auto,
      scale=0.8, every node/.style={scale=0.8}]
      \tikzstyle{every state}=[fill={rgb:black,1;white,10}]
        \node[state,initial,initial text={$\autom'_3$: }] (t_0) {$t_0$};
        \node[state,accepting] (t_1) [right of=t_0] {$t_1$};

        \path[->]
        (t_0) edge [] node {$(a,(q_1,s_1))$} (t_1)
        (t_0) edge [loop above] node[] {$id'_3$} ()
        (t_1) edge [loop above] node[right,align=center] {$(a,(q_1,s_1))$\\$id'_3$} ();
    \end{tikzpicture}
  \end{subfigure}
\end{subfigure}
\caption{%
  On the left, the cascade $\seq{\autom_1,\autom_2,\autom_3}$ over
  $\Sigma=\set{a,b}$, where $Q$, $S$, $T$ are the sets of states of
  $\autom_1$, $\autom_2$, $\autom_3$, respectively, $id_2
  = \set{(a,q_0),(b,q_1)}$ and $id_3 = (\Sigma \times Q \times S) \setminus
  \set{(a,q_1,s_1)}$.
  On the right, the automata $\autom'_2$ (with set of states $S'$) and
  $\autom'_3$ of the $\Sigma$-chain obtained through the chainification
  technique of~\cref{lemma:cascade3:to:chain3}, where $id'_3 = (Q \times
  S') \setminus \set{(a,(q_1,s_1))}$.
  The cascade on the left is equivalent to the $\Sigma$-chain
  $\chain{\autom_1,\autom'_2,\autom'_3}$.
}
\label{fig:chainification}
\end{figure}

\Cref{fig:chainification} illustrates an example of chainification for
a cascade $\seq{\autom_1, \autom_2, \autom_3}$ over the alphabet $\Sigma
= \set{a, b}$ (left side of~\cref{fig:chainification}).
The components $\autom'_2$ and $\autom'_3$ of the equivalent $\Sigma$-chain
are shown on the right side of~\cref{fig:chainification}. It is important
to note that all transitions $(a,q)$ in $\autom'_2$ can be defined as
either permutations or reset functions. Furthermore, we observe that
symbols acting as identities in $\autom_2$ may become non-identity
permutations in $\autom'_2$; this is the case, for instance, for the symbol
$(a, q_0)$.  Indeed, while $\autom_2$ is a reset automaton, $\autom'_2$ is
a permutation-reset automaton.

\Cref{lemma:cascade3:to:chain3} can be generalized to deal with cascades of
unbounded height as stated by the following theorem.

\begin{restatable}[Chainification Theorem]{theorem}{chainification}
\label{teo:chainification}
  Let $\cascade = \seq{\autom_1, \dots, \autom_n}$ be a cascade of
  permutation-reset (resp. permutation, pure-reset) automata. There exists
  a $\Sigma$-chain $\cascade' = \chain{\autom'_1, \dots, \autom'_n}$ of
  permutation-reset (resp. permutation, pure-reset) automata such that $\lang(\cascade)
  = \lang(\cascade')$ \roberto{and $|\cascade'| \in \mathcal{O}(|\cascade|)$}.
\end{restatable}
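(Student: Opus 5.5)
The plan is to generalize the construction of \cref{lemma:cascade3:to:chain3} from height three to arbitrary height by induction on the level $i$. The $i$-th component $\autom'_i$ will store the states of all the cascade components up to level $i$. Let $Q_j$ denote the state set of $\autom_j$. We set $\autom'_1 = \autom_1$, and for $i\ge 2$ we let $\autom'_i$ have state set $P_i = Q_1\times\dots\times Q_i$. Its alphabet is $\Sigma\times P_{i-1}$, \ie the input symbol paired with a state of $\autom'_{i-1}$. Initial states are $(q_0^1,\dots,q_0^i)$. For $i<n$ the final states are all of $P_i$ (or any set containing the reachable final configurations), and for $\autom'_n$ they are $F^1\times\dots\times F^n$. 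This guarantees that the acceptance condition of the product coincides with that of $\autom_1\circ\dots\circ\autom_n$.

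The core step is the definition of $\delta'_i$ on \emph{consistent} pairs. These are a state $(q_1,\dots,q_i)$ together with a symbol $(a,(q_1,\dots,q_{i-1}))$ whose tuple agrees with the first $i-1$ coordinates of the state. On such pairs we set
\begin{align}
  \delta'_i((q_1,\dots,q_i),(a,(q_1,\dots,q_{i-1}))) = (\delta_1(q_1,a),\delta_2(q_2,(a,q_1)),\dots,\delta_i(q_i,(a,q_1,\dots,q_{i-1}))).
\end{align}
By induction on the length of the input word, we then show that every reachable state of $\autom'_1\odot\dots\odot\autom'_n$ has the form $(p_1,\dots,p_n)$ where each $p_i$ extends $p_{i-1}$ and $p_n$ is the current configuration of the cascade product. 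Hence only consistent pairs are ever used along runs, and the two languages coincide.

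The delicate part, inherited from the three-level lemma, is completing $\delta'_i$ on inconsistent pairs so that each induced function $\tau_{(a,\bar q)}$ belongs to the same class as the corresponding one in $\autom_i$. Fix a symbol $(a,\bar q)$ with $\bar q=(q_1,\dots,q_{i-1})$ and consider the induced map. If $\tau^{\autom_i}_{(a,\bar q)}$ is a reset on $t$, we let the whole symbol reset on $(\delta_1(q_1,a),\dots,t)$, which agrees with the rule above on the consistent block. If it is a permutation $\pi$, then on the block $\{\bar q\}\times Q_i$ the map is injective with image $\{\bar q'\}\times Q_i$, where $\bar q'$ is the successor tuple. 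We extend it to a bijection of $P_i$ by mapping the block $\{\bar q''\}\times Q_i$, for each $\bar q''\ne\bar q$, via $\pi$ onto the blocks in $P_{i-1}\setminus\{\bar q'\}$ under any fixed bijection $P_{i-1}\setminus\{\bar q\}\to P_{i-1}\setminus\{\bar q'\}$. For pure-reset automata every symbol is a reset, so only the first case occurs. For permutation automata the outer components are permutations too, but the block argument still works. In all cases, then, the class is preserved.

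I expect the size bound $|\cascade'|\in\mathcal{O}(|\cascade|)$ to be the main obstacle. Taken literally, $\autom'_i$ has $|\Sigma|\cdot|P_{i-1}|\cdot|P_i|$ transitions, whereas $\autom_i$ has only $|\Sigma|\cdot|P_{i-1}|\cdot|Q_i|$. The number of consistent transitions, however, equals exactly $|\Sigma|\cdot|P_{i-1}|\cdot|Q_i|=|\autom_i|$. I would therefore argue that the explicit part of $\autom'_i$ is exactly a relabelling of $\autom_i$ (the components $\autom'_1,\dots,\autom'_{i-1}$ can be recomputed from the tuple), while the completion on inconsistent pairs is given by a uniform rule requiring only constant extra description. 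For this I would choose the fixed bijection $P_{i-1}\setminus\{\bar q\}\to P_{i-1}\setminus\{\bar q'\}$ to be the canonical swap of $\bar q'$ and $\bar q$ when they differ, and the identity otherwise. If the authors insist on counting all transitions explicitly, this is where the argument would need revisiting, for example by bounding $|\cascade'|$ against the product $|\autom_1\circ\dots\circ\autom_n|$ instead.
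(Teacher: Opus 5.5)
Your construction and language-equivalence argument are essentially the paper's proof. States of the middle components are prefix tuples $P_i=Q_1\times\dots\times Q_i$, and transitions are fixed on consistent pairs. On inconsistent pairs each symbol is completed either as a reset onto the successor tuple or as an arbitrary bijection between the complementary parts. Only consistent configurations are reachable. The differences are cosmetic. The paper keeps $Q_n$ as the state set of $\autom'_n$ and uses $F_1\times\dots\times F_i$ as final states at every level. Your last component with state set $P_n$ is correct but needlessly larger.

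The real gap is the size clause, and your own count already exhibits it.
\begin{itemize}
\item With the paper's measure $|\autom|=|\delta|$, every middle component satisfies $|\autom'_i|=|\Sigma|\cdot|P_{i-1}|\cdot|P_i|=|P_{i-1}|\cdot|\autom_i|$. For instance, if all $|Q_j|=2$, then $|\autom'_{n-1}|=|\Sigma|\cdot 2^{2n-3}$ while $|\cascade|=|\Sigma|\cdot(2^{n+1}-2)$, so the ratio grows exponentially in $n$.
\item Treating the completion on inconsistent pairs as having ``constant description'' silently changes the size measure, so it does not prove the statement as written.
\item Your fallback of comparing with $|\autom_1\circ\dots\circ\autom_n|=|\Sigma|\cdot|P_n|$ cannot help. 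With $|Q_j|\ge 2$, this quantity is within a factor $2$ of $|\cascade|=\sum_i|\Sigma|\cdot|P_i|$.
\end{itemize}
The paper's proof does not close this gap either. It asserts that $|\autom'_i|\in\mathcal{O}(|\autom_i|)$ holds ``by definition'', which is false for $2\le i\le n-1$ for exactly the reason you noticed. What the construction (yours or the paper's) does give, using $|P_{i-1}|\le|\autom_i|$, is $|\cascade'|\le\sum_i|\autom_i|^2\le|\cascade|^2$. You should state and prove this quadratic bound. A linear bound would require a genuinely different construction, if it holds at all.
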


\paragraph*{Implications on the expressive power of $\Sigma$-chains}

\Cref{prop:chain:to:cascades,teo:chainification} imply that $\Sigma$-chains
are expressively equivalent to cascades for the class $\PureResets$ of
pure-reset automata, for the class $\Permutations$ of permutation automata and for the class $\PermutationResets$ of
permutation-reset automata. Moreover, $\Sigma$-chains of permutation-reset
automata characterize the class of regular languages
(\cf~\cref{lemma:star:free}).

As for the class $\Resets$, we have that if a language $\lang$ is
recognized by a $\Sigma$-chain of reset automata then $\lang$ is also
recognized by a cascade of reset automata (\cref{prop:chain:to:cascades})
and $\lang$ is star-free (\cref{lemma:star:free}).

\begin{restatable}{corollary}{corocascadeschains}
\label{coro:cascades:chains}
It holds that:
\begin{itemize}
    \item $\SCH[\PermutationResets] = \C[\PermutationResets]$ and thus
      a language $\lang$ is regular if and only if $\lang \in
      \SCH[\PermutationResets]$;
    \item $\SCH[\PureResets] = \C[\PureResets]$;
    \item $\SCH[\Resets] \subseteq \C[\Resets]$ and thus if $\lang \in
      \SCH[\Resets]$ then $\lang$ is star-free;
    \item $\SCH[\Permutations] = \C[\Permutations]$.
\end{itemize}
\end{restatable}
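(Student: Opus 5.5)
The corollary follows by combining \cref{prop:chain:to:cascades} with \cref{teo:chainification} and \cref{lemma:star:free}; the plan is to treat each item as two inclusions and discharge each inclusion by one of these results.

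For the inclusions $\SCH[A] \subseteq \C[A]$, with $A \in \set{\PermutationResets, \PureResets, \Resets, \Permutations}$, take any $\lang \in \SCH[A]$, witnessed by a $\Sigma$-chain $\chain{\autom_1,\dots,\autom_n}$ with every component in $A$. Applying \cref{prop:chain:to:cascades} with $A_i = A$ for all $i$ gives a cascade of the same height whose components are all of class $A$ and which recognizes $\lang$. Hence $\lang \in \C[A]$. This settles one direction of items one, two and four, and all of item three.

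For the reverse inclusions $\C[A] \subseteq \SCH[A]$, with $A \in \set{\PermutationResets, \PureResets, \Permutations}$, take $\lang \in \C[A]$, witnessed by a cascade $\seq{\autom_1,\dots,\autom_n}$ of automata in $A$. \cref{teo:chainification} yields an equivalent $\Sigma$-chain $\chain{\autom'_1,\dots,\autom'_n}$ whose components lie in the same class, so $\lang \in \SCH[A]$. Together with the previous paragraph, this gives the equalities in items one, two and four.

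The remaining consequences come from \cref{lemma:star:free}. For item one, \cref{lemma:star:free} states that $\C[\PermutationResets]$ is exactly the class of regular languages, so the equality just proved shows that $\lang$ is regular iff $\lang \in \SCH[\PermutationResets]$. For item three, \cref{lemma:star:free} states that $\C[\Resets]$ is exactly the class of star-free languages, so the inclusion $\SCH[\Resets] \subseteq \C[\Resets]$ makes every language in $\SCH[\Resets]$ star-free.

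There is no real obstacle, since the substantive work lies in the two cited results. The only care needed is to state the cases of \cref{teo:chainification} correctly: it applies to the permutation-reset, permutation and pure-reset classes but not to the reset class, because for reset components chainification may produce non-reset permutations. This is exactly why item three asserts only an inclusion.
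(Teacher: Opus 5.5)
Your proposal is correct and follows the paper's own proof. Like the paper, you use \cref{prop:chain:to:cascades} for the inclusions from $\Sigma$-chains into cascades in all four classes, \cref{teo:chainification} for the converse in the permutation-reset, pure-reset and permutation cases, and \cref{lemma:star:free} for the regular and star-free consequences.
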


%
%
%

\section{$\Sigma$-Chains of Reset Automata}
\label{sec:chains:reset}

In this section, we investigate $\Sigma$-chains of reset automata and the languages they recognize ($\SCH[\Resets]$). First, we
analyze how imposing structural constraints on $\Sigma$-chains --
specifically, bounding their \emph{height} (\ie the number of components)
and \emph{width} (\ie the number of states per component) -- affects the
class of recognizable languages.

Second, we compare $\Sigma$-chains of reset automata with 
Piecewise Testable~\cite{simon2005piecewise} and \rtrivial
languages~\cite{brzozowski1980languages}, showing that every language in
these classes can be expressed as a Boolean combination and as a finite
union of $\Sigma$-chains of reset automata, respectively. Moreover, we
establish linear upper bounds on the size of such $\Sigma$-chain
representations.

Finally, we address the \emph{pastification} problem for $\Sigma$-chains of reset automata, namely, the transformation of these structures into
equivalent formulas in pure past Linear Temporal Logic. We show a direct transformation yielding a formula of quadratic size when represented as a DAG (\cf dag-complexity \cite{journalversionofstacs}). 

\subsection{Height and Width Hierarchies}
\label{section:height}
First, we show that increasing the height of a $\Sigma$-chain of reset
automata (\ie the number of its components) yields an infinite hierarchy of
languages: for every level $h$, there exists a language at level $h$ that
is definable by a $\Sigma$-chain of $h$ reset automata but is not
recognizable by any $\Sigma$-chain of height $h-1$. This result is
analogous to what is known for cascades of reset
automata (\cf Height-Hierarchy Lemma of~\cite{journalversionofstacs}).

\begin{restatable}{proposition}{heighthierarchylemma}
\label{height:hierarchy:lemma}
  Let $\Sigma$ be a finite alphabet.
  For each $h > 0$, let $\lang_h = \Sigma^{h} \cdot \Sigma^*$. It holds that:
  \begin{itemize}
    \item $\lang_h$ is definable by a $\Sigma$-chain of (two-state) reset
      automata $\cascade_h$ of height $h$;
    \item $\lang_h$ cannot be defined by any $\Sigma$-chain of reset
      automata of height less than $h$.
  \end{itemize}
\end{restatable}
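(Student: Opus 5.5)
The first item needs no new argument: the $\Sigma$-chain $\chain{\autom_1,\dots,\autom_h}$ from the proof sketch of \cref{teo:chain:succinctness} has two-state reset components. In that chain, every symbol of $\Sigma$ resets $\autom_1$ to $q_1^1$. For $i>1$, every symbol $(a,q_1^{i-1})$ resets $\autom_i$ to $q_1^i$, and every other symbol resets it to $q_0^i$. By induction on the length $k$ of the input, after reading $k$ symbols component $\autom_i$ is in $q_1^i$ iff $k\ge i$. Indeed, $\autom_i$ reaches $q_1^i$ at step $k$ exactly when $\autom_{i-1}$ was in $q_1^{i-1}$ after $k-1$ symbols, i.e.\ when $k-1\ge i-1$. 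Hence the product is in $(q_1^1,\dots,q_1^h)$ iff $k\ge h$, which is exactly $\lang_h$.

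For the second item I would use \cref{prop:chain:to:cascades}: a $\Sigma$-chain of reset automata of height $h'<h$ yields an equivalent cascade of reset automata of the same height $h'$. The Height-Hierarchy Lemma of~\cite{journalversionofstacs}, which also underlies the lower bound cited in \cref{teo:chain:succinctness} (Theorem~5.8 there), states that $\lang_h$ is not recognized by any cascade of reset automata of height less than $h$. This gives a contradiction, so the lower bound for $\Sigma$-chains follows directly.

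If one prefers a self-contained argument, I would prove by induction on $i$ the following claim: in any cascade (in particular, any $\Sigma$-chain) of reset automata, the state of the first $i$ components after reading a word $w$ with $|w|\ge i$ depends only on the last $i$ letters of $w$.
\begin{itemize}
  \item Base case: component $1$ is either reset by the last letter or, if every letter acts as the identity on it, stays constant, so its state depends on at most the last letter.
  \item Inductive step: component $i+1$ is determined by the last position at which it was reset, given that earlier components are already determined by a bounded suffix.
\end{itemize}
This naive step is where I expect the main obstacle. A reset component can be "stuck" in a state for a long time, so depending on the last $i$ letters is not literally true; what is true is that the state depends on a bounded suffix together with a distinguished "no reset so far" case. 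The real invariant must be tailored to words over a single letter $a^k$: the state sequence of the first $i$ components along $a^0,a^1,a^2,\dots$ is constant from step $i$ onwards, because each component's input eventually stabilises and the component then stabilises at most one step later. Consequently, a height $h'<h$ product cannot distinguish $a^{h-1}\notin\lang_h$ from $a^h\in\lang_h$. I would carry out this stabilisation argument on unary words, which suffices for the lower bound.
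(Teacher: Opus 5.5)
Your proposal is correct and follows the paper's proof: the first item reuses the two-state reset chain from the succinctness theorem, and the second item converts a hypothetical shorter $\Sigma$-chain into a cascade of reset automata of the same height via the chain-to-cascade proposition, contradicting Theorem~5.8 of the cited work. Your optional unary stabilisation argument is also sound and would make the lower bound self-contained: after reading $a^k$ with $k\ge i$, the first $i$ components of any cascade of reset automata are always in the same configuration, so a product of height $h'<h$ cannot separate $a^{h-1}$ from $a^h$.
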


As shown by  \cref{height:hierarchy:lemma} and by the Height-Hierarchy Lemma of~\cite{journalversionofstacs}, both $\Sigma$-chains and cascades of reset automata admit a height hierarchy: for every $h \geq 2$ there exist languages definable by a product of height $h$ that cannot be defined by any product of smaller height.
In contrast, such a hierarchy collapses for direct products, as formalized by the following lemma.

\begin{restatable}{lemma}{lemmanoheight}
Let $\Sigma = \set{a,b}$. Every language definable by a direct product of reset automata over $\Sigma$ is also definable by a direct product of two reset automata over $\Sigma$.
\end{restatable}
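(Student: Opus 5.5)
The plan is to reduce the statement to a finite combinatorial check, using the fact that over $\Sigma=\set{a,b}$ a reset automaton can only distinguish very few kinds of words.

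\textbf{Step 1 (normal form of one component).} In a reset automaton $\autom$ with reset letters $R\subseteq\Sigma$ and identity letters $I=\Sigma\setminus R$, the state reached on $w$ is fixed by the last occurrence in $w$ of a letter of $R$: it is the target of that reset, or $q_0$ if no letter of $R$ occurs. Hence $\lang(\autom)$ is a union of the sets $\Sigma^*cI^*$ for the reset letters $c$ whose target is final, together with $I^*$ when $q_0\in F$.

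\textbf{Step 2 (a partition of $\Sigma^*$).} Partition $\Sigma^*$ into five classes:
\begin{itemize}
\item $E=\set{\epsilon}$, $A=a^+$ and $B=b^+$;
\item $X_a$, the words containing both letters and ending in $a$;
\item $X_b$, the words containing both letters and ending in $b$.
\end{itemize}
By Step 1, the language of any reset automaton over $\Sigma$ is a union of these classes, of one of three kinds according to $R$.
\begin{itemize}
\item If $R=\set{a,b}$ (call these type $K$), the language is any union of the blocks $\set{E}$, $\set{A,X_a}$ and $\set{B,X_b}$, since membership depends only on the last letter.
\item If $R=\set{a}$ (type $K_a$), the language is any union of $\set{A,X_a,X_b}$ and $\set{E,B}$.
\item If $R=\set{b}$ (type $K_b$), the language is any union of $\set{B,X_a,X_b}$ and $\set{E,A}$.
\end{itemize}
The case $R=\emptyset$ gives $\emptyset$ or $\Sigma^*$, which is covered by the types above.

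\textbf{Step 3 (merging components of the same type).} The language of a direct product is the intersection of the component languages. Each of the three families is closed under intersection, since each consists of all unions of the blocks of a fixed partition of the five classes. Moreover, every member of a family is realised by a reset automaton with the corresponding reset set: use one state per block and let each reset letter reset to the state of its block. Therefore any direct product is equivalent to $K\cap K_a\cap K_b$ for one language of each type, and it remains to express such a triple intersection as the intersection of only two languages from these families. This step is the crux.

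\textbf{Step 4 (case analysis).} First compute $K_a\cap K_b$. Discarding the trivial choices $\emptyset$ and $\Sigma^*$, which already leave at most two factors, the possible values are $\set{X_a,X_b}$, $\set{A}$, $\set{B}$ and $\set{E}$.
\begin{itemize}
\item In the last three cases, intersecting with $K$ yields a single class or $\emptyset$. One of the two sets $\set{E,B}$ or $\set{E,A}$ meets $K$ in exactly the right way, so two factors suffice. For instance, $\set{A}=\set{E,A}\cap\set{A,X_a}$.
\item In the case $\set{X_a,X_b}$, intersecting with $K$ gives $\set{X_a}$, $\set{X_b}$, $\set{X_a,X_b}$ or $\emptyset$. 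These are obtained respectively as $\set{B,X_a,X_b}\cap\set{A,X_a}$, as $\set{A,X_a,X_b}\cap\set{B,X_b}$, as $K_a\cap K_b$ itself, and trivially.
\end{itemize}
In every case the language is an intersection of two languages from the families, hence the language of a direct product of two reset automata. I expect the main care to be in checking exhaustively that Step 4 misses no combination, including the cases where one of $K,K_a,K_b$ is trivial.
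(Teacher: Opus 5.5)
Your argument is correct, but it takes a different route from the paper. The paper obtains this lemma as a by-product of the proof of \cref{lemma:directprodct}. It lists the twelve languages recognised by a single reset automaton over $\set{a,b}$, using the normal form of Theorem~4.7 of \cite{DBLP:conf/stacs/BorelliGMM25}. It then states that it ``can be algorithmically verified'' that intersecting two of them yields exactly five new languages, and that a third intersection yields nothing more. Hence all higher levels coincide with the second.

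You replace this unexhibited computer check with a structural argument in three parts:
\begin{itemize}
\item the five-class partition $E,A,B,X_a,X_b$;
\item the observation that the languages with a fixed reset set are exactly the unions of blocks of a fixed coarsening of that partition, so they are closed under intersection and components of the same type merge into one;
\item a short hand analysis to pass from three factors to two.
\end{itemize}
This makes the proof verifiable by hand and explains why the collapse occurs. The merging step also works over any alphabet: with reset set $R$, the reached state depends only on the last letter of $R$ read, if any. So in general every direct product of reset automata over $\Sigma$ is equivalent to one with at most $2^{|\Sigma|}-1$ components, whereas the paper's enumeration is specific to $\set{a,b}$. As a consistency check, your five new languages $\set{A}$, $\set{B}$, $\set{X_a}$, $\set{X_b}$, $\set{X_a,X_b}$ are exactly the paper's $\lang^{13},\dots,\lang^{17}$.

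One wording point in Step~4. Read literally with the given $K$, the claim that one of $\set{E,B}$, $\set{E,A}$ ``meets $K$ in exactly the right way'' fails, for instance when $K=\Sigma^*$ and $K_a\cap K_b=\set{E}$. The clean observation is this: when $K_a\cap K_b$ is a single class, $K\cap K_a\cap K_b$ is either $\emptyset$ or $K_a\cap K_b$ itself, and both need at most two factors.
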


We now investigate the effect of increasing the number of states in each
component of a $\Sigma$-chain of reset automata. The following proposition
shows that this does not yield an infinite hierarchy of languages. In
particular, every $\Sigma$-chain of reset automata is equivalent to
a $\Sigma$-chain of the same height in which the number of states in each
component depends only on the size of the alphabet $\Sigma$.

\begin{restatable}{proposition}{lemmaGenWidthCollpase}
\label{lemma:GenWidthCollpase}
  Let $\Sigma$ be a finite alphabet with $|\Sigma| = k \geq 2$ and let
  $\lang \subseteq \Sigma^*$ be a language accepted by a $\Sigma$-chain of
  reset automata $\cascade = \chain{\autom_1,\autom_2,\dots,\autom_h}$, for some $h>0$.
  %
  It holds that $\lang$ is accepted by some $\Sigma$-chain of reset
  automata $\cascade' = \chain{\autom'_1,\autom'_2,\dots,\autom'_h}$ where
  $\autom'_i$ has at most $f(i)=\frac{k^{i+1}-1}{k-1}$ states (for $i
  = 1,\dots,h$).
\end{restatable}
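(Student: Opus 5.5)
The plan is to exploit the fact that the current state of a reset automaton is determined entirely by the last symbol that acted as a reset, or by the initial state if no such symbol has been read yet. At level $1$ there are at most $k$ such symbols, so $k+1=f(1)$ states suffice. At level $i$, a reset symbol is a pair $(a,p)$ with $a\in\Sigma$ and $p$ a state of level $i-1$, giving at most $k\cdot f(i-1)+1$ possibilities. Since $f(i)=k\,f(i-1)+1$, this recurrence produces exactly the bound in the statement. I would build $\cascade'$ level by level. Along the way I would construct maps $g_i : Q'_i \to Q_i$ satisfying the following invariant: for every word $\sigma$, if the run of $\autom'_1\odot\dots\odot\autom'_h$ on $\sigma$ ends in $(p^1,\dots,p^h)$ and the run of $\autom_1\odot\dots\odot\autom_h$ ends in $(q^1,\dots,q^h)$, then $g_i(p^i)=q^i$ for all $i$.

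\emph{Construction.} Let $Q'_0$ be a one-point set, and let $\autom_1$ read $\Sigma$, which I identify with $\Sigma\times Q'_0$, so that levels $i\ge 1$ are handled uniformly. For each $i\ge 1$:
\begin{itemize}
\item Set $Q'_i=\set{\bot}\cup(\Sigma\times Q'_{i-1})$ with initial state $\bot$.
\item For a symbol $(a,p)\in\Sigma\times Q'_{i-1}$: if $\tau^{\autom_i}_{(a,g_{i-1}(p))}$ is a reset, let $\tau^{\autom'_i}_{(a,p)}$ be the reset on $(a,p)$; otherwise let it be the identity. Thus $\autom'_i$ is a reset automaton.
\item Define $g_i(\bot)=q_0^i$, and let $g_i((a,p))$ be the target of the reset $\tau^{\autom_i}_{(a,g_{i-1}(p))}$ when that map is a reset. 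On the remaining states (never reached) $g_i$ is arbitrary.
\item Set $F'_i=g_i^{-1}(F_i)$.
\end{itemize}
Then $|Q'_i|=1+k|Q'_{i-1}|$ and $|Q'_1|=k+1$, so $|Q'_i|=f(i)$.

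\emph{Correctness.} I would prove the invariant by induction on the length of $\sigma$, and within each step by induction on $i$. The base case is the empty word: all components are initial and $g_i(\bot)=q_0^i$. For the step, read symbol $a$, where before the step the states are $p^{i-1},p^i$ and $q^{i-1},q^i$ with $g_{i-1}(p^{i-1})=q^{i-1}$ and $g_i(p^i)=q^i$ by the induction hypothesis. Both products update simultaneously using the \emph{pre-step} state of level $i-1$, so $\autom_i$ reads $(a,q^{i-1})=(a,g_{i-1}(p^{i-1}))$ and $\autom'_i$ reads $(a,p^{i-1})$. If $\autom_i$'s symbol is a reset, then $\autom'_i$ moves to $(a,p^{i-1})$, whose $g_i$-image is exactly the reset target. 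Otherwise both symbols act as the identity, and the invariant is preserved by the hypothesis $g_i(p^i)=q^i$. In addition, every reachable state of $\autom'_i$ is either $\bot$ or a pair $(a,p)$ that was entered via a reset, so $g_i$ is only ever evaluated where it is properly defined. Acceptance then coincides, since $p^i\in F'_i$ iff $g_i(p^i)\in F_i$ for all $i$; hence $\lang(\cascade')=\lang(\cascade)$.

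The step requiring the most care is this simultaneous-update check. Within a single letter one must use the induction hypothesis for level $i-1$ \emph{before} the letter, not after it, and ensure that $g_i$ is well defined on reachable states. Everything else is bookkeeping, and unreachable states may be pruned, which only lowers the state count below $f(i)$.
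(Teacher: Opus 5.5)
Your proof is correct, but it uses a different construction from the paper.

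The paper keeps the original automata and only prunes them. It restricts each $\autom_i$ to the symbols $(a,q)$ with $q$ reachable at level $i-1$. It then observes that a reset automaton over an alphabet of size $m$ reaches at most $m+1$ states, so the reachable part of level $i$ has at most $k\,f(i-1)+1=f(i)$ states. Correctness is immediate there, because only unreachable states and never-read symbols are removed.

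You instead build fresh automata whose states record the last reset symbol (or $\bot$). You then transfer correctness through the maps $g_i$ and a simulation invariant proved by induction on the input length.

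Both arguments rest on the same recurrence, but they trade off differently:
\begin{itemize}
\item Pruning yields sub-automata of the originals, so it never widens a level: the resulting level $i$ has at most $\min(|Q_i|,f(i))$ states.
\item Your level $i$ has exactly $f(i)$ states before pruning. Even after removing unreachable states it can be wider than the original $\autom_i$. For example, a two-state $\autom_1$ in which every letter resets to the non-initial state becomes $\{\bot\}\cup\Sigma$, all reachable.
\item In exchange, your construction does not depend on the original state sets at all, and it explains the shape of the bound. Unfolding $Q'_i=\{\bot\}\cup(\Sigma\times Q'_{i-1})$ identifies the states of level $i$ with the words of length at most $i$ over $\Sigma$, and there are $\sum_{j=0}^{i}k^j=f(i)$ of them.
\end{itemize}

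Your care in using the pre-step state of level $i-1$, and in evaluating $g_i$ only on reachable states, is exactly what makes the simulation argument rigorous.
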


The $f(i) = \frac{k^{i+1}-1}{k-1}$ bound on the number of states in the
$i$-th component of a $\Sigma$-chain, established in the previous
proposition, is -- owing to the restriction that each component depends
only on its immediate predecessor -- significantly smaller than the
corresponding bound for cascades of reset automata (\cf Width-Collapse
Lemma in~\cite{journalversionofstacs}), namely $g^{i-1}(k+1)$ with $g(x)
= x^2 - x + 1$.

\subsection{Comparison with Piecewise Testable Languages}
Piecewise testable languages were introduced by
Simon~\cite{simon2005piecewise} as a natural class of formal languages
whose membership can be decided by inspecting only the scattered
subsequences of bounded length occurring in a given word, regardless of the
positions at which they appear.  Intuitively, two words are
indistinguishable with respect to a piecewise testable language whenever
they share the same set of scattered subsequences up to a fixed length.
This class has since attracted considerable attention in the theory of
formal languages, owing in part to Simon's elegant algebraic
characterization, which identifies piecewise testable languages with
precisely those regular languages whose syntactic monoid is
$\mathcal{J}$-trivial~\cite{simon2005piecewise}, and in part to their
natural logical characterization as exactly the languages definable by
Boolean combinations of existential first-order sentences over
words~\cite{thomas1982classifying}. 

\begin{definition}[Piecewise Testable Languages~\cite{simon2005piecewise,pin1986varieties}]
\label{def:piecewise}
  Let $\Sigma$ be a finite alphabet and let $\sigma = \seq{a_1,\dots,a_n}
  \in \Sigma^n$ be a word of length $n$, for some $n>0$. We define
  $\lang_\sigma \subseteq\Sigma^*$ as the language $\lang_\sigma = \Sigma^*
  a_1 \Sigma^* \dots \Sigma^* a_n \Sigma^*$.  A language $\lang$  is
  \emph{$k$-piecewise testable} if $\lang$  is a Boolean combination of
  languages $\lang_{\sigma_i}$ with $\sigma_i \in \Sigma^{\leq k}$.
  A language $\lang$ is \emph{piecewise testable} if it is $k$-piecewise
  testable for some $k>0$.
\end{definition}

To show that $k$-piecewise testable languages are recognizable by Boolean
combinations of (languages recognized by) $\Sigma$-chains of $k$ reset
automata, we first establish a generalization of the fact that, for every
$n > 0$ and every $\sigma \in \Sigma^+$, there exists a $\Sigma$-chain of
$n$ reset automata recognizing $\lang_\sigma$. In particular, the following
lemma shows that $\Sigma$-chains of reset automata can recognize languages
of the form $\Sigma^* \sigma_1 \Sigma^* \cdots \Sigma^* \sigma_n \Sigma^*$,
where each $\sigma_i$ is not restricted to a single symbol of the alphabet
$\Sigma$ (as in~\cref{def:piecewise}), but may be an arbitrary word. This
is achieved using a number of components equal to the sum of the lengths of
all the $\sigma_i$.

\begin{restatable}{lemma}{lemmagenpiecewiselanguages}
\label{lemma:gen:piecewise:languages}
  Let $\Sigma$ be a finite alphabet and let $\sigma_1,\dots,\sigma_m$ be
  $m$ words in $\Sigma^+$, for some $m>0$. Let
  $\lang_{\sigma_1,\dots,\sigma_m}$ be the language
  $\Sigma^* \sigma_1 \Sigma^* \dots
  \Sigma^* \sigma_m \Sigma^*$. There exists a $\Sigma$-chain $\cascade$ of
  (two-states) reset automata such that:
  \begin{itemize}
    \item $\lang(\cascade) = \lang_{\sigma_1,\dots,\sigma_m}$;
    \item the height of $\cascade$ is $|\sigma_1|+\dots +|\sigma_m|$.
  \end{itemize}
\end{restatable}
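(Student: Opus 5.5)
The plan is to lay out all the letters of $\sigma_1,\dots,\sigma_m$ as a single word $w = c_1 c_2 \cdots c_N$, with $N = |\sigma_1|+\dots+|\sigma_m|$, and to use one two-state reset component $\autom_i$ per letter $c_i$. Each $\autom_i$ has states $\set{0,1}$ with initial state $0$. Call position $i$ a \emph{block end} if $c_i$ is the last letter of some $\sigma_j$. For every $i$, let $\tau_i$ be the prefix of the current block $\sigma_j$ ending at $c_i$, and define
\[
  K_i = \Sigma^*\sigma_1\Sigma^*\cdots\Sigma^*\sigma_{j-1}\Sigma^*\tau_i .
\]
The intended meaning of state $1$ of $\autom_i$ depends on the kind of position:
\begin{itemize}
  \item if $i$ is a block end, then after reading $u$ the component is in $1$ iff \emph{some prefix} of $u$ lies in $K_i$, so state $1$ is sticky;
  \item if $i$ is not a block end, then after reading $u$ the component is in $1$ iff $u$ \emph{itself} lies in $K_i$, i.e.\ the partial match of the current block ends exactly at the last letter read.
\end{itemize}
The distinction is what forces letters inside a block to be contiguous, while gaps $\Sigma^*$ are allowed only between blocks.

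\textbf{Transitions.} For $\autom_1$, the symbol $c_1$ induces a reset on $1$. Every other symbol induces the identity if $1$ is a block end, and a reset on $0$ otherwise. For $i>1$, the symbol $(c_i,1)$ induces a reset on $1$, where $1$ refers to the current state of $\autom_{i-1}$, i.e.\ its state \emph{before} the current step, as in \cref{def:chain:dfa}. Every other symbol $(a,q)$ induces the identity if $i$ is a block end, and a reset on $0$ otherwise. All induced functions are identities or resets, so every component is a reset automaton, and the height is $N$ as required.

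\textbf{Acceptance.} The product accepts only tuples in $F^1\times\dots\times F^N$. I take $F^N=\set{1}$ and $F^i=\set{0,1}$ for all $i<N$. This is allowed by the definition and avoids requiring the non-sticky components to be in $1$ simultaneously. Position $N$ is a block end, and $K_N = \lang_{\sigma_1,\dots,\sigma_m}$, which is closed under right extension. Hence the product accepts $u$ iff some prefix of $u$ is in $K_N$ iff $u\in\lang_{\sigma_1,\dots,\sigma_m}$.

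\textbf{Correctness of the invariant.} I prove the invariant by induction on $|u|$, and for fixed $u$ by induction on $i$. Write $u = u'a$. The new state of $\autom_i$ depends on $a$, on the old state of $\autom_i$, and on the state of $\autom_{i-1}$ after $u'$. There are three cases:
\begin{itemize}
  \item If $i$ is not a block start, then $i-1$ lies in the same block and is not a block end. So $\autom_{i-1}$ is in $1$ after $u'$ iff $u'\in K_{i-1}$, and $u'a\in K_i$ iff $u'\in K_{i-1}$ and $a=c_i$.
  \item If $i>1$ is a block start, then $i-1$ is a block end. By sticky semantics, $\autom_{i-1}$ is in $1$ after $u'$ iff some prefix of $u'$ lies in $K_{i-1}$. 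In that case $u'a\in K_i = K_{i-1}\Sigma^* c_i$ iff $a = c_i$.
  \item If $i=1$, then $K_1 = \Sigma^* c_1$ and the same reasoning applies without a predecessor.
\end{itemize}
In each case the sticky (identity) or non-sticky (reset on $0$) behaviour on the remaining symbols matches the two clauses of the invariant.

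The main obstacle is bookkeeping the timing. The chain product feeds $\autom_i$ the predecessor's state \emph{before} the step, and one must check that this is exactly the semantics needed, namely "$u'\in K_{i-1}$". One must also handle blocks of length one, which are both start and end, and check that they fall under the sticky clause consistently.
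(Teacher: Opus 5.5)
Your proof is correct. The automata you build are essentially the paper's:
\begin{itemize}
  \item one two-state reset component per letter;
  \item exact-match resets inside a block;
  \item a sticky component (identity on failure) at each block end;
  \item the first component of each block gated on the previous block's end.
\end{itemize}

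What differs is how correctness is organized. The paper argues by induction on $m$. It first proves a single-block claim for $\Sigma^* s I^*$ with an arbitrary trailing set $I$, which forces a case split on whether the last letter of $s$ lies in $I$. It then glues a chain for $\lang_{\sigma_1,\dots,\sigma_j}$ to a chain for $\Sigma^*\sigma_{j+1}\Sigma^*$. The gluing modifies the new block's first component so that it resets to its initial state whenever the previous block's last component is in its initial state.

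You instead write the whole chain down at once and prove one invariant, stated through the languages $K_i$, by a single induction on $|u|$. The inner induction on $i$ is not actually needed. The new state of $\autom_i$ after $u'a$ depends only on the states of $\autom_i$ and $\autom_{i-1}$ after $u'$. Your route is more self-contained and avoids the $I$-subcases entirely, since you only ever need the case $I=\Sigma$. The paper's modular route yields the more general statement for $\Sigma^* s I^*$ and a reusable gluing step.

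One small slip: $K_N=\Sigma^*\sigma_1\Sigma^*\cdots\Sigma^*\sigma_m$ is neither equal to $\lang_{\sigma_1,\dots,\sigma_m}$ nor closed under right extension. The correct identity is $K_N\Sigma^*=\lang_{\sigma_1,\dots,\sigma_m}$. This gives exactly what your acceptance argument uses: some prefix of $u$ lies in $K_N$ if and only if $u\in\lang_{\sigma_1,\dots,\sigma_m}$. Your conclusion is therefore unaffected.
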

\begin{proof}[Proof (sketch)]
  We first construct, for each $1\le i\le m$, a $\Sigma$-chain for the
  language $\Sigma^* \cdot \sigma_i \cdot \Sigma^*$. Let $\sigma_i = a_1
  a_2 \dots a_n$. The $\Sigma$-chain consists of $n$ two-state reset
  automata: for each $1\le j\le n$, the component $\autom_j$ transitions
  from its initial to its noninitial state (by means of a reset function)
  if and only if the current symbol is $a_j$ and the previous component (if
  any) is in its noninitial state.  Moreover, the last component
  ($\autom_n$) must remain in its current state (by means of an identity
  function) for all combinations of symbols in $\Sigma\setminus\set{a_n}$
  and states of $\autom_{n-1}$, in order to ensure that the last $\Sigma^*$
  in $\Sigma^* \cdot \sigma_i \cdot \Sigma^*$. Finally, we set the final
  states of $\autom_j$ to be the singleton containing only its
  noninitial state if $j=n$, otherwise we define it as the set of all the
  two states of $\autom_j$.

  We can then compose the $\Sigma$-chains for each $\sigma_i$ obtained in
  this fashion to obtain a $\Sigma$-chain of $\sum_{i=1}^{m} |\sigma_i|$
  components recognizing $\Sigma^* \sigma_1 \Sigma^* \dots \Sigma^*
  \sigma_m \Sigma^*$.
\end{proof}

As a straightforward corollary of~\cref{lemma:gen:piecewise:languages},
applied to the case where $|\sigma_i| = 1$ for every $1 \le i \le n$, we
obtain the following result.

\begin{restatable}{corollary}{lemmakpiecewise}
\label{lemma:k:piecewise}
  Let $\Sigma$ be a finite alphabet and let $\lang\subseteq \Sigma^*$. If
  $\lang$ is a $k$-piecewise testable language, then $\lang$ is a Boolean
  combination of languages, each of which is definable by a $\Sigma$-chain
  of (two-states) reset automata of height at most $k$.
\end{restatable}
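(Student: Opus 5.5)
This follows directly from \cref{def:piecewise} and \cref{lemma:gen:piecewise:languages}. Since $\lang$ is $k$-piecewise testable, by \cref{def:piecewise} it is a Boolean combination of finitely many languages $\lang_{\sigma_1},\dots,\lang_{\sigma_r}$ with each $\sigma_j \in \Sigma^{\le k}$. It therefore suffices to show that every such $\lang_{\sigma_j}$ is recognized by a $\Sigma$-chain of two-state reset automata of height at most $k$. The Boolean structure is inherited unchanged: the same Boolean combination, applied to the languages of these chains, yields $\lang$.

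Fix $\sigma = \seq{a_1,\dots,a_n}$ with $1 \le n \le k$. Then $\lang_\sigma = \Sigma^* a_1 \Sigma^* \cdots \Sigma^* a_n \Sigma^*$. This is exactly the language $\lang_{\sigma'_1,\dots,\sigma'_n}$ of \cref{lemma:gen:piecewise:languages} for the one-letter words $\sigma'_i = a_i \in \Sigma^+$, with $m = n$. The lemma gives a $\Sigma$-chain of two-state reset automata recognizing $\lang_\sigma$ whose height is $\sum_i |\sigma'_i| = n \le k$.

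The only step needing care is the degenerate case $\sigma = \varepsilon$, which \cref{def:piecewise} formally permits via $\Sigma^{\le k}$ although it defines $\lang_\sigma$ only for $n>0$. If $\varepsilon$ is admitted, then $\lang_\varepsilon = \Sigma^*$. This language is recognized by a single two-state reset automaton in which every symbol induces the identity and whose initial state is final, so it is a chain of height $1 \le k$. Alternatively, since $\Sigma^*$ is the complement of $\emptyset$, it can be absorbed into the Boolean combination as the empty intersection. Either way the claimed bound on the height is preserved, which completes the argument.
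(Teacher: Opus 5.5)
Your proposal is correct and matches the paper's proof: write $\lang$ as a Boolean combination of languages $\lang_\sigma$ with $|\sigma|\le k$, then apply \cref{lemma:gen:piecewise:languages} with one-letter words to obtain, for each $\lang_\sigma$, a $\Sigma$-chain of two-state reset automata of height $|\sigma|\le k$. Your separate handling of $\sigma=\varepsilon$ (giving $\lang_\varepsilon=\Sigma^*$) is a harmless refinement of a degenerate case that the paper leaves implicit.
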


\subsection{Comparison with \rtrivial Languages}

\rtrivial languages are  those regular languages whose syntactic monoid is
$\mathcal{R}$-trivial and are dual to $\mathcal{L}$-trivial
languages~\cite{brzozowski1980languages}.  Beyond their algebraic
characterizations, \rtrivial languages are deeply connected to automata
theory. In particular, it holds that a language $\lang$ is \rtrivial if and
only if the minimum automaton accepting $\lang$ is partially
ordered~\cite{brzozowski1980languages}.  Surprisingly, there is also
a characterization in terms of the cascade product. A language $\lang$ is
\rtrivial if and only if the minimum automaton accepting $\lang$ is covered
by a cascade product of half-resets where an half-reset is a two-state
semiautomaton $\autom = \seq{\Sigma, \set{q_0, q_1}, \delta}$ such that
$\Sigma$ can be partitioned in two non-empty sets $B_0$ and $B_1$ such that
$\delta(q_0,a_0) = q_0$ for every $a_0 \in B_0$, $\delta(q_0,a_1) = q_1$
for every $a_1 \in B_1$ and $\delta(a,q_1) = q_1$ for every $a \in \Sigma$
~\cite{brzozowski1980languages}.

As with piecewise-testable languages, we start by defining \rtrivial languages in terms of
regular expressions.

\begin{definition}[\rtrivial Languages~\cite{brzozowski1980languages,pin1986varieties}]
\label{def:rtrivial}
  Let $\Sigma$ be a finite alphabet. A language $\lang \subseteq \Sigma^*$
  is \rtrivial if $\lang$ is a disjoint union of languages of the form
  $I_0^*a_1I_1^*a_2I_2^*\dots a_nI_n^*$ where:
  \begin{enumerate*}[label=(\roman*)]
    \item $n \geq 0$;
    \item $a_i \in \Sigma$, for each $1\le i\le n$;
    \item $I_i\subseteq \Sigma \setminus \set{a_{i+1}}$, for each $0\le i<
      n$; and
    \item $I_n \subseteq \Sigma$.
  \end{enumerate*}
\end{definition}

To show that \rtrivial languages are recognizable by finite unions of
(languages recognized by) $\Sigma$-chains of reset automata, we first
establish a more general result.  In particular, we show that
$\Sigma$-chains of reset automata can recognize languages of the form
$I_0^*B_1I_1^*B_2\dots B_nI_n^*$ where each $I_i$ and each $B_j$ is
a subset of $\Sigma$ without further restrictions.  Moreover, the obtained
$\Sigma$-chain is of height $n+1$. 

\begin{restatable}{lemma}{lemmagentriviallanguages}
\label{lemma:gen:trivial:languages}
  Let $\Sigma$ be a finite alphabet. For some $m>0$, let $I_0, \dots, I_m,
  B_1,\dots,B_m$ be subsets of $\Sigma$.  Let
  $\lang_{B_1,\dots,B_m}^{I_0,\dots,I_m} = I_0^* B_1 I_1^* \dots I_{m-1}^*
  B_m I_m^*$. There exists a $\Sigma$-chain of two-states reset automata
  $\cascade$ such that:
  \begin{itemize}
    \item $\lang(\cascade) = \lang_{B_1,\dots,B_m}^{I_0,\dots,I_m}$;
    \item $\cascade$ has height $m+1$.
  \end{itemize}
\end{restatable}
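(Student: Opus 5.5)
We build the $\Sigma$-chain $\cascade=\chain{\autom_0,\autom_1,\dots,\autom_m}$ (indexed from $0$ for convenience, so the height is $m+1$) so that each component tracks membership of the current prefix in a prefix of the target expression. For $0\le i\le m$, let
\[
L_i = I_0^* B_1 I_1^* \cdots B_i I_i^* ,
\]
so that $L_0=I_0^*$ and $L_m=\lang_{B_1,\dots,B_m}^{I_0,\dots,I_m}$. Each $\autom_i$ has the two states $\set{0,1}$. The invariant we aim for is: after reading $w$, component $\autom_i$ is in state $1$ iff $w\in L_i$.

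\textbf{The components.} The first component $\autom_0$ reads $\Sigma$. Its initial state is $1$, since $\epsilon\in I_0^*$. Each $a\in I_0$ induces the identity, and each $a\notin I_0$ induces a reset on $0$.

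For $1\le i\le m$, $\autom_i$ reads $\Sigma\times\set{0,1}$ and has initial state $0$, since $\epsilon\notin L_i$ (every word of $L_i$ contains a $B_i$-letter). Its transitions are:
\begin{itemize}
\item $(a,1)$ with $a\in B_i$ induces a reset on $1$;
\item $(a,1)$ with $a\notin B_i$ and $a\in I_i$ induces the identity;
\item $(a,1)$ with $a\notin B_i\cup I_i$ induces a reset on $0$;
\item $(a,0)$ with $a\in I_i$ induces the identity;
\item $(a,0)$ with $a\notin I_i$ induces a reset on $0$.
\end{itemize}
Every induced function is an identity or a reset, so all components are two-state reset automata. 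For final states, take $\set{1}$ for $\autom_m$ and $\set{0,1}$ for every other component. Acceptance then reduces to $\autom_m$ being in state $1$ at the end of the run.

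\textbf{Key step: the one-letter recurrence.} For $i\ge1$, $w\in\Sigma^*$ and $a\in\Sigma$,
\[
wa\in L_i \iff (w\in L_{i-1}\wedge a\in B_i)\ \vee\ (w\in L_i\wedge a\in I_i).
\]
For the forward direction, take a factorization of $wa$ witnessing membership in $L_i$ and look at where its last letter $a$ falls. Either $a$ is the letter chosen from $B_i$ and the trailing $I_i^*$ block is empty, in which case $w\in L_{i-1}$. Or $a$ is the last letter of the nonempty $I_i^*$ block, in which case $w\in L_i$ and $a\in I_i$. The backward direction is immediate by appending $a$ to the corresponding factorization. Similarly, for $i=0$ we have $wa\in L_0$ iff $w\in L_0$ and $a\in I_0$.

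\textbf{Induction.} We prove the invariant by induction on $|w|$, and within each step by induction on $i$. The base case holds by the choice of initial states. For the step, note that when reading $a$ after $w$, component $\autom_i$ receives the symbol $(a,q)$ where $q$ is the state of $\autom_{i-1}$ \emph{before} the update. By the induction hypothesis, $q=1$ iff $w\in L_{i-1}$, which is exactly the information the recurrence needs.

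We check the five transition cases against the recurrence:
\begin{itemize}
\item Reset on $1$ in the case $q=1,a\in B_i$: the first disjunct holds.
\item Identity in the cases $a\in I_i$ where the first disjunct fails: the new state equals the old one, i.e.\ $[w\in L_i]$, which equals $[wa\in L_i]$ in this case.
\item Reset on $0$ in the remaining cases: both disjuncts fail.
\end{itemize}

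\textbf{Main obstacle.} The delicate point is reconciling the disjunctive recurrence with the constraint that each induced function be only an identity or a reset. This is why the case $(a,1)$ with $a\in B_i\cap I_i$ must be a reset on $1$ rather than an identity, and why the guarded identity is used only when the first disjunct is false. With this settled, the invariant at $i=m$ gives $\lang(\cascade)=L_m$, and $\cascade$ has height $m+1$.
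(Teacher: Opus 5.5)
Your proof is correct and uses essentially the paper's construction up to renaming states: each new component resets to its accepting state on $(a,q)$ with $a\in B_i$ and $q$ the predecessor's ``in $L_{i-1}$'' state, acts as the identity on the remaining letters of $I_i$, resets to the other state otherwise, and only the last component has a restricted set of final states. The one difference is the verification: the paper inducts on $m$, characterizes the new component's accepted inputs as $\Sigma'^* R_1 I_B^*$, and then argues via a maximal-index factorization, whereas your one-letter recurrence proves the state invariant for all levels at once by induction on $|w|$, which is shorter and avoids the separate base cases.
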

\begin{proof}[Proof (sketch)]
  The proof follows an approach similar to the proof
  of~\cref{lemma:gen:piecewise:languages}.
\end{proof}

Applying~\cref{lemma:gen:trivial:languages} to the case where $|B_i| = 1$
(for every $1 \le i \le n$) and $I_i\subseteq \Sigma \setminus
\set{a_{i+1}}$ (for $0 \leq i \leq  n-1$) leads to the following result.

\begin{restatable}{corollary}{lemmartrivial}
\label{lemma:r:trivial}
Let $\Sigma$ be a finite alphabet. Let $\lang\subseteq \Sigma^*$ be an \rtrivial language. $\lang$ is recognized by a finite union of $\Sigma$-chains of two-states reset automata.
\end{restatable}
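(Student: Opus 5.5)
The corollary follows from Definition~\ref{def:rtrivial} together with Lemma~\ref{lemma:gen:trivial:languages}. Let $\lang$ be \rtrivial. By Definition~\ref{def:rtrivial}, $\lang$ is a disjoint union of languages of the form $I_0^*a_1I_1^*a_2I_2^*\dots a_nI_n^*$. I take this union to be finite. This is implicit in the definition, and it can be justified independently: $\lang$ is regular, so it is recognized by a partially ordered minimal automaton, and each accepted word determines a simple path through that automaton. There are only finitely many simple paths, and each path contributes exactly one such monomial, with $I_i$ the set of self-loop letters at the $i$-th visited state. It therefore suffices to show that each single monomial is recognized by a $\Sigma$-chain of two-state reset automata.

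Fix a monomial $I_0^*a_1I_1^*\dots a_nI_n^*$. If $n\geq 1$, apply Lemma~\ref{lemma:gen:trivial:languages} with $m=n$, $B_i=\set{a_i}$ for $1\le i\le n$, and the same sets $I_0,\dots,I_n$. The lemma places no restriction on the $I_i$ and $B_j$, so the side condition $I_i\subseteq\Sigma\setminus\set{a_{i+1}}$ is not needed here. It only guarantees disjointness of the union, which is irrelevant for recognizability. The lemma yields a $\Sigma$-chain of two-state reset automata of height $n+1$ recognizing exactly $\lang^{I_0,\dots,I_n}_{\set{a_1},\dots,\set{a_n}}$, which is the monomial.

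The degenerate case $n=0$ is the language $I_0^*$, which Lemma~\ref{lemma:gen:trivial:languages} does not cover since it requires $m>0$. I handle it with a single two-state reset automaton. It has initial and accepting state $q_0$ and a sink $q_1$. Every symbol in $I_0$ induces the identity, and every symbol in $\Sigma\setminus I_0$ induces a reset on $q_1$. This automaton is a $\Sigma$-chain of height one and recognizes $I_0^*$. This covers also $I_0=\emptyset$, where the language is $\set{\epsilon}$.

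Putting the pieces together, $\lang$ is the finite union of the languages of these $\Sigma$-chains, which proves the corollary. The only step needing care is the finiteness of the decomposition. I plan to rely on the standard reading of Definition~\ref{def:rtrivial} as a finite union, citing \cite{brzozowski1980languages,pin1986varieties}, and to use the partially ordered automaton argument above only as a fallback.
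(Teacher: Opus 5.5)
Your proposal is correct and matches the paper's proof: decompose $\lang$ via \cref{def:rtrivial} into finitely many monomials $I_0^*a_1I_1^*\dots a_nI_n^*$, then apply \cref{lemma:gen:trivial:languages} to each one with singleton sets $B_i=\set{a_i}$. Your separate automaton for $n=0$ is the same as the base case $m=0$ in the paper's proof of that lemma. Your remarks that the union is finite and that the side condition $I_i\subseteq\Sigma\setminus\set{a_{i+1}}$ is not needed are sound extra care, not a different route.
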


\subsection{$\Sigma$-Chains of reset automata and \ppLTL}

In \cite{DBLP:conf/focs/MalerP90}, Maler proposed a procedure to compile cascades of reset semi-automata into equivalent formulae of \ppLTL, namely formulae of \LTL with only past temporal operators. We call such a procedure a \textit{pastification} procedure. In \cite{journalversionofstacs}, this pastification procedure was applied to cascades of reset automata, providing bounds on the complexity of the resulting formula $\pastify{(\cascade)}$ (where $\cascade$ is a cascade of reset automata) under two measures:
\begin{itemize}
\item The tree-complexity (denoted $\treesize{\pastify{(\cascade)}}$): the size of the formula $\pastify{(\cascade)}$;
\item The dag-complexity (denoted $\dagsize{\pastify{(\cascade)}}$): the number of syntactically distinct sub-formulas of $\pastify{(\cascade)}$. The dag-complexity measures the space needed to represent the formula using the sub-term sharing technique.
\end{itemize}
Theorem 8.4 of \cite{journalversionofstacs} provides bounds on both measures:
while the tree-complexity of the obtained \ppLTL formula may be exponential in the size of $\cascade$, the dag-complexity has only a linear bound.

Given a $\Sigma$-chain $\cascade$ of $n$ reset automata, \Cref{prop:chain:to:cascades} allows us to first transform it into an equivalent cascade $\cascade'$ of $n$ reset automata, whose size may however be exponential in $n$. Applying Theorem 8.4 of \cite{journalversionofstacs} then yields an equivalent \ppLTL formula, but whose dag-complexity is exponential in $n$.
A direct translation, however, achieves a significantly better bound. Indeed, a straightforward extension of the techniques of \cite{journalversionofstacs} yields a transformation from a $\Sigma$-chain of $n$ reset automata into an equivalent \ppLTL formula with only quadratic dag-complexity.

\begin{restatable}{theorem}{pastification}
\label{pastification}
  Let $\AP$ be a set of atomic propositions and let $\cascade
  = \chain{\autom_1, \dots, \autom_n}$ be a $\Sigma$-chain of reset automata over
  $\Sigma = 2^{\AP}$. There exists a \ppLTL formula $\pastify(\cascade)$ such that:
  \begin{itemize}
      \item $\lang(\pastify(\cascade)) = \lang(\cascade) \setminus
        \set{\varepsilon}$;
      \item $\dagsize{\pastify(\cascade)} \in \mathcal{O}(|\cascade|\cdot ( |\AP| + n)) \in  \mathcal{O}(|\cascade|\cdot( |\AP| + |\cascade|))$. 
      \end{itemize}
\end{restatable}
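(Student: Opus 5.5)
We build, by induction on the level $i$, for every component $\autom_i$ and every state $q\in Q^i$, a \ppLTL formula $\varphi_{i,q}$ such that, for every nonempty word $\sigma$ and every position $j$ of $\sigma$, $\sigma,j\models\varphi_{i,q}$ iff the $i$-th component of the run of $\autom_1\odot\dots\odot\autom_n$ is in state $q$ after reading the prefix $\sigma_0\dots\sigma_j$. We then set $\pastify(\cascade)=\bigvee_{(q^1,\dots,q^n)\in F^1\times\dots\times F^n}\bigwedge_i\varphi_{i,q^i}$. To avoid the exponentially large disjunction we use the equivalent form $\bigwedge_{i=1}^n\bigvee_{q\in F^i}\varphi_{i,q}$. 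This rewriting is sound because the $\varphi_{i,q}$ for distinct $q$ are mutually exclusive and exhaustive at each position. Since we only reason about positions of a nonempty word, $\varepsilon$ is excluded, which matches the first item of the statement.

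\emph{Construction of $\varphi_{i,q}$.} We follow the scheme of~\cite{journalversionofstacs} (Maler's construction), adapted to the chain. For a symbol $a\in\Sigma=2^{\AP}$ let $\chi_a=\bigwedge_{p\in a}p\wedge\bigwedge_{p\notin a}\neg p$, of dag-size $O(|\AP|)$. At level $i$, every letter $(a,q')$ of $\autom_i$ (with $q'\in Q^{i-1}$, or no $q'$ when $i=1$) is either the identity or a reset on some state. We write $\mathit{Res}_{i,q}=\bigvee\{\chi_a\wedge\varphi^{\mathsf{pre}}_{i-1,q'} : \tau_{(a,q')}\text{ is a reset on }q\}$ and let $\mathit{Res}_i$ be the disjunction over all $q$. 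Here $\varphi^{\mathsf{pre}}_{i-1,q'}$ expresses that component $i-1$ was in $q'$ \emph{before} reading the current letter: it is $\Y\varphi_{i-1,q'}$ at non-initial positions and $[q'=q_0^{i-1}]$ at position $0$, i.e.\ $\Y\varphi_{i-1,q'}\vee(\neg\Y\top\wedge\mathsf{c})$ with $\mathsf{c}\in\{\top,\bot\}$. Then
\[\varphi_{i,q}=\mathit{Res}_{i,q}\vee(\neg\mathit{Res}_i\wedge\Y(\neg\mathit{Res}_i\,\mathsf{S}\,\mathit{Res}_{i,q}))\vee\neg(\mathit{Res}_i\wedge\ldots)\]
More precisely, $\varphi_{i,q}$ says: the last position at which some reset fired was a reset on $q$, or no reset ever fired and $q=q_0^i$. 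This is expressed with $\mathsf{S}$ and $\mathsf{H}$ as in~\cite{journalversionofstacs}.

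\emph{Correctness and size.} Correctness follows by induction on $i$, with an inner induction on positions. It uses that identity letters keep the state and reset letters determine it, and that the chain structure means level $i$ reads only $\Sigma$ and level $i-1$. For size, we count distinct subformulas with sharing:
\begin{itemize}
\item the formulas $\chi_a$ contribute $O(|\Sigma|\cdot|\AP|)$ in total;
\item the formulas $\varphi^{\mathsf{pre}}_{i-1,q'}$ are shared across all letters;
\item each reset transition of $\autom_i$ contributes $O(1)$ nodes to $\mathit{Res}_{i,q}$, so summing over $i$ gives $O(|\cascade|)$;
\item each $\varphi_{i,q}$ adds $O(1)$ nodes on top of $\mathit{Res}_{i,q}$ and $\mathit{Res}_i$.
\end{itemize}
Altogether the count is $O(|\cascade|+|\Sigma||\AP|)\subseteq O(|\cascade|\cdot|\AP|)$, since $|\Sigma|\le|\cascade|$, with the final conjunction adding $O(|\cascade|)$. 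This is within the claimed $O(|\cascade|(|\AP|+n))$.

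\emph{Main obstacle.} The delicate step is the $n$ factor in the bound, which arises if big disjunctions are not shared. For instance, writing $\neg\mathit{Res}_i$ or the exhaustive-state disjunctions separately for every $q$, or re-expanding them per level, can cost a factor $n$ each. If a finer sharing argument fails, I would fall back to bounding each $\varphi_{i,q}$ by $O(|\autom_i|+|\AP|+n)$ new nodes and summing. A second subtlety is the treatment of position $0$ (the initial states of the automata, and the $\Y$ operator at the first position). This is handled uniformly via $\neg\Y\top$, following~\cite{journalversionofstacs}.
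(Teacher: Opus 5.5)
Your proposal is correct and follows essentially the paper's proof: Maler-style per-state formulas built with $\mathsf{S}$ and $\mathsf{H}$, and letter formulas at level $i$ that consult only level $i-1$ through a yesterday operator. Your $\mathsf{Y}\varphi\vee(\neg\mathsf{Y}\top\wedge\mathsf{c})$ is the paper's $\mathsf{Y}$/$\mathsf{Z}$ case split, and $(\neg\mathit{Res}_i)\,\mathsf{S}\,\mathit{Res}_{i,q}$ is an equivalent variant of its $(\neg \mathit{out}_q)\,\mathsf{S}\,\mathit{in}_q$. You then argue by induction on the level and use the final formula $\bigwedge_i\bigvee_{q\in F^i}\varphi_{i,q}$; its equivalence with the large disjunction is plain distributivity, so mutual exclusivity is not needed there.

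Your sharper count $O(|\cascade|+|\Sigma|\cdot|\AP|)$ is also sound: the paper's extra factor $n$ comes from its more generous per-level accounting, not from the construction, so no fallback is needed. You should, however, replace the garbled display by $\varphi_{i,q}=\bigl((\neg\mathit{Res}_i)\,\mathsf{S}\,\mathit{Res}_{i,q}\bigr)\vee\bigl([q=q^i_0]\wedge\mathsf{H}\,\neg\mathit{Res}_i\bigr)$.
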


By \Cref{teo:chain:succinctness} and \Cref{pastification}, the language $\Sigma^{h}\Sigma^*$ admits a  pastification of dag-complexity $\mathcal{O}(h \cdot |\Sigma| \cdot (|\AP|+h))$ after a prior transformation into a $\Sigma$-chain of reset automata.
In contrast, translating the same language into a cascade of reset automata yields a $\ppLTL$ formula of dag-complexity $\mathcal{O}(2^h \cdot |\Sigma| \cdot |\AP|)$.
Therefore, $\Sigma$-chains can produce exponentially more succinct pastifications than cascades of reset automata.

\section{Conclusions and Future Work}
\label{sec:conclusions}

In this paper, we have introduced the notion of $\Sigma$-chain and the associated $\Sigma$-chain product operation, in which the component automata are combined in a feed-forward fashion, with each component depending solely on the initial alphabet $\Sigma$ and the states of the immediately preceding component.
We have investigated the expressive power of $\Sigma$-chains, establishing that every regular language can be decomposed into a $\Sigma$-chain of permutation-reset automata, and that every cascade of permutation-reset (pure-reset, and permutation) automata can be translated into an equivalent $\Sigma$-chain of the same class and height. While an analogous result for reset automata remains open, we have shown that (Boolean combinations of) $\Sigma$-chains of reset automata capture piecewise testable and \rtrivial languages. Finally, we addressed the pastification problem, providing a translation of $\Sigma$-chains of reset automata into \ppLTL formulas of quadratic dag-complexity.

A natural direction for future work is to determine whether every star-free language -- that is, a language recognizable by a cascade of reset automata -- is also recognizable by some $\Sigma$-chain of reset automata, which requires a complete characterization of the languages recognizable by $\Sigma$-chains.
A further open problem is whether, as in the case of cascades of reset automata, every
$\Sigma$-chain of reset automata can be translated into an equivalent one in which each component has only two states, possibly at the cost of increasing the height.

Several questions remain open regarding the pastification problem. It would be interesting to investigate whether the quadratic dag-complexity bound is tight, and whether the structural properties of $\Sigma$-chains can be exploited to obtain more efficient pastification procedures for \LTL formulas in general.

\roberto{Finally, the cascade product admits the wreath product as its algebraic counterpart. Identifying and studying the algebraic structure corresponding to the $\Sigma$-chain product is an interesting direction for future research.}

\newpage

\bibliography{biblio}

@book{gecseg2012products,
  title={{Products of automata}},
  author={G{\'e}cseg, Ferenc},
  volume={7},
  year={2012},
  publisher={Springer Science \& Business Media}
}

@article{imreh1977alphai,
  title={{On {$\alpha_i$}-products of automata}},
  author={Imreh, Bal{\'a}zs},
  journal={Acta Cybernetica},
  volume={3},
  number={4},
  pages={301--307},
  year={1977},
  publisher={University of Szeged}
}

@article{domosi1983vi,
  title={{On {$v_i$}-products of automata}},
  author={D{\"o}m{\"o}si, P{\'a}l and Imreh, Bal{\'a}zs},
  journal={Acta Cybernetica},
  volume={6},
  number={2},
  pages={149--162},
  year={1983},
  publisher={University of Szeged}
}

@article{gecseg1991,
title = {{On {$\alpha_0$-$v_1$}-products of automata}},
journal = {Theoretical Computer Science},
volume = {80},
number = {1},
pages = {35-51},
year = {1991},
issn = {0304-3975},
doi = {10.1016/0304-3975(91)90204-F},
author = {F. Gécseg and H. Jürgensen}
}

@inproceedings{gelderie2011classifying,
  title={{Classifying regular languages via cascade products of automata}},
  author={Gelderie, Marcus},
  booktitle={International Conference on Language and Automata Theory and Applications},
  pages={286--297},
  year={2011},
  organization={Springer}
}

@book{pin1986varieties,
  title={Varieties of formal languages},
  author={Pin, Jean Eric},
  year={1986},
  publisher={Plenum Publishing Co.}
}

@inproceedings{DBLP:conf/focs/MalerP90,
  author       = {Oded Maler and
                  Amir Pnueli},
  title        = {{Tight Bounds on the Complexity of Cascaded Decomposition of Automata}},
  booktitle    = {31st Annual Symposium on Foundations of Computer Science, St. Louis,
                  Missouri, USA, October 22-24, 1990, Volume {II}},
  pages        = {672--682},
  publisher    = {{IEEE} Computer Society},
  year         = {1990},
  doi          = {10.1109/FSCS.1990.89589},
  bibsource    = {dblp computer science bibliography, https://dblp.org}
}

@inproceedings{DBLP:conf/birthday/Maler10,
  author       = {Oded Maler},
  editor       = {Zohar Manna and
                  Doron A. Peled},
  title        = {{On the Krohn-Rhodes Cascaded Decomposition Theorem}},
  booktitle    = {Time for Verification, Essays in Memory of Amir Pnueli},
  series       = {Lecture Notes in Computer Science},
  volume       = {6200},
  pages        = {260--278},
  publisher    = {Springer},
  year         = {2010},
  doi          = {10.1007/978-3-642-13754-9_12},
  bibsource    = {dblp computer science bibliography, https://dblp.org}
}

@article{krohn1965algebraic,
  title={{Algebraic theory of machines. I. Prime decomposition theorem for finite semigroups and machines}},
  author={Krohn, Kenneth and Rhodes, John},
  journal={Transactions of the American Mathematical Society},
  volume={116},
  pages={450--464},
  year={1965}
}

@article{DBLP:journals/corr/abs-2010-16235,
  author       = {Karl{-}Heinz Zimmermann},
  title        = {{On Krohn-Rhodes theory for semiautomata}},
  journal      = {CoRR},
  volume       = {abs/2010.16235},
  year         = {2020},
  url          = {https://arxiv.org/abs/2010.16235},
  eprinttype    = {arXiv},
  eprint       = {2010.16235},
  bibsource    = {dblp computer science bibliography, https://dblp.org}
}

@inproceedings{DBLP:conf/stacs/BorelliGMM25,
  author       = {Roberto Borelli and
                  Luca Geatti and
                  Marco Montali and
                  Angelo Montanari},
  editor       = {Olaf Beyersdorff and
                  Michal Pilipczuk and
                  Elaine Pimentel and
                  Kim Thang Nguyen},
  title        = {{On Cascades of Reset Automata}},
  booktitle    = {42nd International Symposium on Theoretical Aspects of Computer Science,
                  {STACS} 2025, Jena, Germany, March 4-7, 2025},
  series       = {LIPIcs},
  pages        = {20:1--20:22},
  publisher    = {Schloss Dagstuhl - Leibniz-Zentrum f{\"{u}}r Informatik},
  year         = {2025},
  doi          = {10.4230/LIPICS.STACS.2025.20},
  bibsource    = {dblp computer science bibliography, https://dblp.org}
}

@article{journalversionofstacs,
  author  = {Roberto Borelli and Luca Geatti and Marco Montali and Angelo Montanari},
  title   = {{Cascades of Reset Automata}},
  note    = {Submitted to ACM Transactions on Computational Logic},
  year    = {2026}
}

@inproceedings{DBLP:conf/overlay/Geatti25,
  author       = {Luca Geatti},
  editor       = {Angelo Montanari and
                  Andrea Orlandini and
                  Nicola Saccomanno and
                  Stefano Tonetta},
  title        = {{Automata Cascades for Model Checking}},
  booktitle    = {Short Paper Proceedings of the 7th International Workshop on Artificial
                  Intelligence and Formal Verification, Logic, Automata, and Synthesis,
                  {OVERLAY} 2025, Bologna, Italy, October 26, 2025},
  series       = {{CEUR} Workshop Proceedings},
  pages        = {49--57},
  publisher    = {CEUR-WS.org},
  year         = {2025},
  url          = {https://ceur-ws.org/Vol-4142/paper6.pdf},
  bibsource    = {dblp computer science bibliography, https://dblp.org}
}

@inproceedings{DBLP:conf/kr/ArtaleGGMM23,
  author       = {Alessandro Artale and
                  Luca Geatti and
                  Nicola Gigante and
                  Andrea Mazzullo and
                  Angelo Montanari},
  editor       = {Pierre Marquis and
                  Tran Cao Son and
                  Gabriele Kern{-}Isberner},
  title        = {{A Singly Exponential Transformation of LTL[X, {F]} into Pure Past
                  {LTL}}},
  booktitle    = {Proceedings of the 20th International Conference on Principles of
                  Knowledge Representation and Reasoning, {KR} 2023, Rhodes, Greece,
                  September 2-8, 2023},
  pages        = {65--74},
  year         = {2023},
  doi          = {10.24963/KR.2023/7},
  bibsource    = {dblp computer science bibliography, https://dblp.org}
}

@inproceedings{simon2005piecewise,
  title={Piecewise testable events},
  author={Simon, Imre},
  booktitle={Automata Theory and Formal Languages: 2nd GI Conference Kaiserslautern, May 20--23, 1975},
  pages={214--222},
  year={2005},
  organization={Springer}
}

@article{brzozowski1980languages,
  title={Languages of {R}-trivial monoids},
  author={Brzozowski, Janusz A. and Fich, Faith E},
  journal={Journal of Computer and System Sciences},
  volume={20},
  number={1},
  pages={32--49},
  year={1980},
  publisher={Elsevier}
}

@article{thomas1982classifying,
  title={Classifying regular events in symbolic logic},
  author={Thomas, Wolfgang},
  journal={Journal of Computer and System Sciences},
  volume={25},
  number={3},
  pages={360--376},
  year={1982},
  publisher={Elsevier}
}

\ifextended
\clearpage
\appendix

\section{Proofs of \cref{sec:chains}}

\chaintocascades*
\begin{proof}
For $n=2$ the claim is trivially satisfied since a $\Sigma$-chain $\chain{\autom_1,  \autom_2}$ is also a cascade. We now consider the case $n\geq 3$. We build the cascade $\cascade'$ as follows:
\begin{itemize}
    \item $\autom'_1 \coloneqq \autom_1$;
    \item $\autom'_2 \coloneqq \autom_2$;
    \item for every $i \geq 3$, $\autom'_i \coloneqq \seq{\Sigma \times Q_1 \times \dots \times Q_{i-1}, Q_i, \delta'_i, q_i^0, F_i }$ where: \begin{itemize}
        \item for each $j = 1,\dots, n$, $Q_j$ is the set of states of $\autom_j$;
        \item $q_i^0$ is the initial state of $\autom_i$;
        \item $F_i$ is the set of final states of $\autom_i$;
        \item $\delta_i'$ satisfies $\tau_{(a, q_1, \dots, q_{i-1})}^{\autom'_i} \coloneqq \tau^{\autom_i}_{(a, q_{i-1})}$ for every $(a, q_1, \dots, q_{i-1}) \in \Sigma  \times Q_1 \times \dots \times Q_{i-1}$. 
    \end{itemize}
\end{itemize}

  From the definition of $\delta'_i$ it follows that if $\autom_i$ is
  a pure-reset (resp. reset, permutation, permutation-reset) automaton,
  then also $\autom'_i$ is a pure-reset  (resp. reset, permutation,
  permutation-reset) automaton.

We now have to prove that $\lang(\cascade) = \lang(\cascade')$.
Let $\sigma = \sigma_1\dots \sigma_k \in \Sigma^*$ be a word. The following holds:
\begin{itemize}
    \item[] $\sigma \in \lang(\cascade)$ 
    \item[$\Leftrightarrow$] $\sigma$ induces the run \(
    \tau = (q^0_1, \dots, q^0_n), \dots, (q^k_1, \dots, q^k_n)
    \)  on the $\Sigma$-chain product $\autom_1 \odot \dots \odot \autom_n$, 
    where $(q^k_1, \dots, q^k_n)$ is an accepting state
    \item[$\Leftrightarrow$] $\sigma$ induces 
    the run \(
    \tau = (q^0_1, \dots, q^0_n), \dots, (q^k_1, \dots, q^k_n)
    \) on the cascade product $\autom'_1 \circ \dots \circ \autom'_n$,
    where $(q^k_1, \dots, q^k_n)$ is an accepting state
    \item[$\Leftrightarrow$] $\sigma \in \lang(\cascade')$.
\end{itemize}
This concludes the proof of the Proposition.
\end{proof}

\chainsize*
\begin{proof}
  We start with the first point.  From the definition of $\Sigma$-chain
  product (\cref{def:chain:dfa}), for all $1\le i\le n$, the number of
  states of $\autom_1 \odot \dots \odot \autom_i$ is $\prod_{j=1}^i |Q_i|$
  and, since for each state there must be an outgoing transition labeled
  with every symbol in $\Sigma$, the number of transitions in it (\ie
  $|\autom_1 \odot \dots \odot \autom_i|$) is $|\Sigma| \cdot
  \prod_{j=1}^{i} |Q_j|$. It follows that $|\autom_1 \odot \dots \odot
  \autom_n| = |\Sigma| \cdot \prod_{i=1}^{n} |Q_i|$.  Since we can assume,
  without loss of generality\footnote{
    In fact, if $|Q_i|=1$, the component $\autom_i$ can be removed and the
    alphabet and the transitions of $\autom_{i+1}$ can be trivially
    modified in a way that the language of the resulting $\Sigma$-chain
    does not change.
  },
  that $|Q_i| \ge 2$ for all $1\le i\le n$, we have that $|\autom_1 \odot
  \dots \odot \autom_n| \ge |\Sigma| \cdot 2^n$.

  As for the second statement, by definition of
  $\Sigma$-chain product, we have that the alphabet of $\autom_i$ is
  $\Sigma \times Q_{i-1}$ for every $2\le i\le n$.
  Since, for each symbol in this alphabet, there
  must be an outgoing transition from every state, the
  total number of transitions in $\autom_i$ (\ie $|\autom_i|$) is $|\Sigma|
  \times |Q_{i-1}| \times |Q_i|$. The size of $\autom_1$ is
  $|\Sigma|\times|Q_1|$, since the first component of a $\Sigma$-chain reads only the input alphabet $\Sigma$.  	 By definition, $|\cascade| = \sum_{i=1}^n
  |\autom_i|$, and thus we have that $|\cascade| = (|\Sigma|\cdot|Q_1|)
  + \sum_{i=2}^n (|\Sigma| \cdot |Q_i| \cdot|Q_{i-1}| )$.  Since for every
  $1 \leq i \leq n$ we have that $|Q_i| \leq m$, we obtain $|\cascade|
  \leq \sum_{i=1}^n (|\Sigma| \cdot m \cdot m ) = n \cdot
  |\Sigma| \cdot m^2$.
\end{proof}

\chainsuccintness*
\begin{proof}
    Theorem 5.8 of~\cite{journalversionofstacs} proves that, for any $h>1$,
    the language $\lang_h$ is not definable by any cascade of reset
    automata of height less than $h$, but it can be captured with a cascade
    $\cascade_h$ of two-state reset automata of height $h$.  This implies
    that $\cascade_h$ is the smallest cascade of reset automata recognizing
    $\lang_h$.  By \cref{prop:size:cascades} it follows that $|\cascade_h|
    \ge 2\cdot |\Sigma|\cdot (2^h-1) \in \Omega(2^h\cdot|\Sigma|)$. This
    proves the first item of the claim.

    Consider now the $\Sigma$-chain $\cascade'_h
    = \chain{\autom_1,\dots,\autom_h}$, defined as follows:
    \begin{itemize}
        \item the first automaton is defined as $\autom_1=\seq{\Sigma,
          \set{q_{0}^1, q_1^1}, \delta^1, q_0^1, \set{q_1^1}}$ where for
          every $a \in \Sigma$, the function $\tau_{a}$ is a reset on state
          $q_1^1$;
        \item for every $1\le i\le h$, the automaton $\autom_i$ is defined
          as $\autom_i = \seq{\Sigma\times\set{q_{0}^{i-1}, q_1^{i-1}},
          \set{q_{0}^i, q_1^i}, \delta^i, q_0^1, \set{q_1^i}}$ where for
          every $a \in \Sigma$, the function $\tau_{(a,q_{1}^{i-1})}$ is
          a reset on state $q_1^i$ and the function
          $\tau_{(a,q_{0}^{i-1})}$ is a reset on state $q_0^i$.
    \end{itemize}
    It holds that:
    \begin{enumerate}
        \item $\cascade'_h$ is a two-state $\Sigma$-chain of reset automata of height $h$;
        \item $|\cascade'_h| \leq |m|^2 \cdot h \cdot |\Sigma|$, where
          $m=\max\set{|Q_i| , 1\le i\le h}$; since $m=2$, it follows that
          $|\cascade'_h| \leq 4 \cdot h \cdot |\Sigma| \in O(h \cdot
          |\Sigma|)$;
        \item $\lang(\cascade'_h) = \lang_h$.
    \end{enumerate}
    Point 1 follows directly by definition of $\cascade'_h$. Point
    2 follows from \cref{prop:chain:size}. We now turn to point 3.  We
    prove by induction on $i = 1, \dots,h$ that the $\Sigma$-chain
    $\chain{\autom_1,\dots,\autom_i}$ recognizes the language $\lang_i$
    and that its last component $\autom_i$ does not reach its final state
    $q_1^i$ for every word $\sigma \notin \lang_i$.
    
    \proofcase{{Base step ($i = 1$).}} The reset automaton $\autom_1$ recognizes the language $\Sigma\Sigma^* = \lang_1$. Moreover, $\autom_1$ does not reach its final state $q_1^1$ for every word in $\Sigma^* \setminus \lang_i = \set{\varepsilon}$.

    \proofcase{{Inductive step.}} Assume that (a) the $\Sigma$-chain of reset automata $\chain{\autom_1,\dots,\autom_i}$ recognizes the language $\lang_i$ and (b) $\autom_i$ does not reach its final state $q_1^i$ for every word in $\Sigma^* \setminus \lang_i$.
    By point (b), the automaton $\autom_i$ reaches the state $q_1^i$ only after reading words having a length of at least $i$. In particular the automaton $\autom_i$ transitions from $q_0^i$ to $q_1^i$ when the $i$-th symbol is consumed.
    This means that for every word with a length less or equal to $i$ (\ie for every word in $\Sigma^* \setminus \lang_{i+1})$ the automaton $\autom_{i+1}$ reads symbol of type $(\cdot, q_0^i)$. When the $i+1$-th symbol is consumed the automaton $\autom_{i+1}$ transitions from state $q_0^{i+1}$ to state $q_1^{i+1}$ and stays on $q_1^{i+1}$ for every subsequent symbol.
    This implies that the $\Sigma$-chain of reset automata $\chain{\autom_1,\dots,\autom_i, \autom_{i+1}}$ recognizes the language $\lang_{i+1}$ and $\autom_{i+1}$ does not reach its final state $q_1^{i+1}$ for every word in $\Sigma^* \setminus \lang_{i+1}$.

    This concludes the proof of the Theorem.
\end{proof}

\lemmadirectprodct*
\begin{proof}
  We first prove that $\DP[A] \subseteq \SCH[A]$, for each $A \in
  \set{\PureResets, \Resets, \Permutations, \PermutationResets}$.
  Since the direct product implements the closure under intersection, it
  suffices to prove that $\SCH[A]$ is closed under intersection, for each
  class $A$. 
  Let $\cascade = \chain{\autom_1, \dots, \autom_m}$ and $\cascade'
  = \chain{\automb_1, \dots, \automb_{n}}$ be two $\Sigma$-chains of
  automata.
  Let $Q_m$ be the set of states of $\autom_m$. We define the automaton
  $\automb_1'$ as the automaton obtained from $\automb_1$ by changing its
  alphabet to $\Sigma \times Q_m$ and by defining $\tau^{\automb_1'}_{(a,q)}
  = \tau^{\automb_1}_{a}$, for each $a \in \Sigma$ and each $q \in Q_m$. We define
  $\cascade'' = \chain{\autom_1, \dots,
  \autom_m,\automb'_1,\dots,\automb_{n}}$.
  It is easy to see (for example by Proposition 4.5 of
  \cite{DBLP:conf/stacs/BorelliGMM25}) that $\autom_1 \odot \dots \odot
  \autom_m \odot \automb'_1 \odot \dots \odot \automb_{n}$ is isomorphic to
  $(\autom_1 \odot \dots \odot \autom_m) \times (\automb_1 \odot \dots
  \odot \automb_n)$. It follows that $\lang(\cascade'') = \lang(\cascade)
  \cap \lang(\cascade')$. Moreover, it is easy to see that if $\cascade$
  and $\cascade'$ are $\Sigma$-chains of automata in the class $A \in
  \set{\PureResets, \Resets, \Permutations, \PermutationResets}$, then
  $\cascade''$ is a $\Sigma$-chain of automata in the class $A$.

  We now prove that $\DP[\Resets] \subsetneq \SCH[\Resets]$ and
  $\DP[\PureResets] \subsetneq \SCH[\PureResets]$. The fact that $\lang_1
  \in \SCH[\Resets]$ follows directly by \cref{lemma:k:piecewise}.

   We prove that $\lang_1 \notin \DP[\Resets]$. Let $\DP[\Resets,
   \Sigma]^i$ be the set of languages recognizable by a direct product of
   $i$ reset automata over the alphabet $\Sigma$.
   First we prove that 
     \[\DP[\Resets, \Sigma]^1 \subsetneq \DP[\Resets, \Sigma]^2
     = \DP[\Resets, \Sigma]^3 = \DP[\Resets, \Sigma]^4 = \dots\]
   Second, we show that $\lang \notin \DP[\Resets, \set{a,b}]^2$. This
   implies that $\lang \notin \DP[\Resets]$.
    
    Let $\Sigma = \set{a,b}$. The class $\DP[\Resets, \Sigma]^1$ is the class of languages recognizable by reset automata over $\Sigma$. Recall that, from Theorem 4.7 \cite{DBLP:conf/stacs/BorelliGMM25}, a language $\lang$ is recognized by a reset automaton if and only if $\lang = J \cup (\Sigma^* R I^*)$ for some $I, R \subseteq \Sigma$ such that $I \cap R = \emptyset$ and either $J = I^*$ or $J = \emptyset$.
    It follows that, for $\Sigma=\set{a,b}$, the class $\DP[\Resets,
    \Sigma]^1$ consists of the following 12 languages:
    \begin{itemize}
    \item $\lang^1 = \emptyset$
    \item $\lang^2 = \set{\varepsilon}$
    \item $\lang^3 = \Sigma^*$
    \item $\lang^4 = \Sigma^+$
    \item $\lang^5 = a^*$
    \item $\lang^6 = b^*$
    \item $\lang^7 = \Sigma^* a$
    \item $\lang^8 = \Sigma^* b$
    \item $\lang^9 = \Sigma^* a \cup \set{\varepsilon}$
    \item $\lang^{10} = \Sigma^* b \cup \set{\varepsilon}$
    \item $\lang^{11} = \Sigma^* b a^*$
    \item $\lang^{12} = \Sigma^* a b^*$
\end{itemize}
The class of languages $\DP[\Resets, \Sigma]^2$ is the class $\DP[\Resets, \Sigma]^2 = \set{\lang_1 \cap \lang_2 \suchthat \lang_1, \lang_2 \in \DP[\Resets, \Sigma]^1}$.
It can be algorithmically verified that $\DP[\Resets, \Sigma]^2 = \DP[\Resets, \Sigma]^1 \cup \set{\lang^{13}, \lang^{14}, \lang^{15}, \lang^{16}, \lang^{17}}$ where:
\begin{itemize}
    \item $\lang^{13} = a^+$
    \item $\lang^{14} = b^+$
    \item $\lang^{15} = \Sigma^*a \cap \Sigma^*ba^*$
    \item $\lang^{16} =  \Sigma^*b \cap \Sigma^*ab^*$
    \item $\lang^{17} = \Sigma^*ab^* \cap \Sigma^*ba^*$
\end{itemize}
  Moreover, it can be algorithmically verified that $\DP[\Resets, \Sigma]^2
  = \DP[\Resets, \Sigma]^3$ and so $\DP[\Resets, \Sigma]^2 = \DP[\Resets,
  \Sigma]^i$ for every $i > 2$. Now, since it is easy to see that $\lang$
  is not of any the forms of $\lang_i$ (with $i=1,\dots,17$), it follows
  that $\lang \notin \DP[\Resets, \set{a,b}]^2$ and thus $\lang \not\in
  \DP[\Resets]$.

  We now prove that $\lang_2 \in \SCH[\PureResets]$ but $\lang_2 \notin
  \DP[\PureResets]$. It can be algorithmically verified that
  $\DP[\PureResets] = \DP[\PureResets, \Sigma]^i$ for every $i >1$ and in
  particular $\DP[\PureResets, \Sigma]^1 = \set{\lang^1, \lang^2, \lang^3,
  \lang^4,\lang^7, \lang^8, \lang^9, \lang^{10} }$. Since $\lang_2 \notin
  \DP[\PureResets, \Sigma]^1$, this proves that  $\lang_2 \notin
  \DP[\PureResets]$. Consider now the $\Sigma$-chain
  $\chain{\autom,\automb}$ such that:
\begin{itemize}
    \item $\autom = \set{\set{a,b}, \set{q_0, q_1}, \delta_1, q_0, \set{q_1}}$ where $\tau_a$ is a reset on state $q_1$ and $\tau_b$ is a reset on state $q_0$;
     \item $\automb = \set{\set{a,b}\times \set{q_0,q_1}, \set{s_0, s_1}, \delta_2, s_0, \set{s_1}}$ where $\tau_{(a,q_1)}$ is a reset on state $s_1$ every other symbol induces a reset on state $s_0$.
\end{itemize}
Since it holds that $\lang(\chain{\autom, \automb}) = \Sigma^*aa$, we conclude that $\lang_2 \in \SCH[\PureResets]$.
\end{proof}

\cascadethreetochainthree*
\begin{proof}
Let $\autom_1,\autom_2,\autom_3$ be of the form:
\begin{itemize}
\item $\autom_1 = \seq{\Sigma, Q, \delta_1, q_0, F_1}$;
\item $\autom_2 = \seq{\Sigma \times Q, S, \delta_2, s_0, F_2}$;
\item $\autom_3 = \seq{\Sigma \times Q \times S, T, \delta_3, t_0, F_3}$
\end{itemize}
We define $\autom'_1,\autom'_2,\autom'_3$ as follows:
\begin{itemize}
\item $\autom'_1 \coloneqq  \autom_1$;
\item $\autom'_2 \coloneqq \seq{\Sigma \times Q, Q \times S, \delta'_2, (q_0, s_0), F_1 \times F_2}$ where $\delta'_2$ is  such that $(\star)$ for every $a \in \Sigma, q \in Q, s \in S$ it holds $\delta'_2((q,s),(a,q)) = (q',s')$ if and only if $\delta_1(q,a) = q'$ and $\delta_2(s,(a,q)) = s'$;
\item $\autom'_3 \coloneqq  \seq{\Sigma \times (Q \times S), T, \delta'_3, t_0, F_3}$ where for every $a \in \Sigma, q \in Q, s \in S, t \in T$ we have that $\delta'_3(t,(a,(q,s))) = t'$ if and only if $\delta_3(t,(a,q,s)) = t'$.
\end{itemize}
First we prove that $\cascade'$ is a $\Sigma$-chain of permutation-reset automata. Clearly $\autom'_1$ and $\autom'_3$ are permutation-reset automata since $\autom_1$ and $\autom_3$ are. The next claim proves that the same holds for $\autom'_2$.
\begin{claim}
\label{claim:pr:a2}
The automaton $\autom'_2$ can be chosen to be permutation-reset.
\end{claim}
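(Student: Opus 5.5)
The plan is to define $\delta'_2$ symbol by symbol. Fix $(a,q)\in\Sigma\times Q$. Condition $(\star)$ only fixes $\tau^{\autom'_2}_{(a,q)}$ on the "slice" $\set{q}\times S$: it must send $(q,s)$ to $(\delta_1(q,a),\delta_2(s,(a,q)))$. So the task is to extend this partial map to all of $Q\times S$ so that the extension is a permutation or a reset. Since $\autom_2$ is permutation-reset, it is enough to split on the type of $\tau^{\autom_2}_{(a,q)}$.

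\emph{Reset case.} Suppose $\tau^{\autom_2}_{(a,q)}$ is a reset on some $s^*\in S$. Then on the slice every $(q,s)$ is mapped to the single state $(\delta_1(q,a),s^*)$. I will define $\tau^{\autom'_2}_{(a,q)}$ as the reset on $(\delta_1(q,a),s^*)$ over all of $Q\times S$; this clearly satisfies $(\star)$. If $\autom_2$ is pure-reset, only this case occurs, so $\autom'_2$ is pure-reset as well.

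\emph{Permutation case.} Suppose $\pi=\tau^{\autom_2}_{(a,q)}$ is a bijection of $S$, and let $q'=\delta_1(q,a)$. The restriction to the slice is then the injective map $(q,s)\mapsto(q',\pi(s))$ from $\set{q}\times S$ onto $\set{q'}\times S$. I will extend it by choosing any bijection $\rho$ of $Q$ with $\rho(q)=q'$, for example the transposition of $q$ and $q'$ (the identity if $q=q'$). Then I set $\tau^{\autom'_2}_{(a,q)}(p,s)=(\rho(p),\pi(s))$. This is a product of two bijections, hence a permutation of $Q\times S$, and it agrees with $(\star)$ on the slice. In the permutation class every symbol falls into this case, so $\autom'_2$ is a permutation automaton. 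The construction does not need $\autom_1$ to be a permutation automaton, since $\rho$ is chosen freely and does not have to match $\tau^{\autom_1}_a$ off the slice.

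Combining the two cases, each symbol induces either a reset or a permutation, and the class is preserved in each variant. The only step that needs care is the permutation case, where the slice map has to be extended to a bijection. The step above handles this by observing that the slice map is "rectangular", sending one full slice onto one full slice. The complement $(Q\setminus\set{q})\times S$ can then be matched bijectively with $(Q\setminus\set{q'})\times S$ because the two sets have the same cardinality.
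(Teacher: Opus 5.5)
Your proof is correct and follows essentially the same route as the paper: the same case split on whether $\tau^{\autom_2}_{(a,q)}$ is a reset or a permutation, and the reset case is handled identically. In the permutation case, the paper extends the injective slice map by an arbitrary bijection between the complement of the slice and the complement of its image, justified only by equal cardinality; your map $(p,s)\mapsto(\rho(p),\pi(s))$ is one explicit instance of such an extension.
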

\begin{claimproof}

For every $(a,q) \in \Sigma \times Q$ we prove that the function $\tau_{(a,q)}^{\autom'_2}$ can be chosen either as a reset or a permutation, while still satisfying the property $(\star)$ required by the definition of $\autom'_2$. We divide in cases depending on the nature of $\tau_{(a,q)}^{\autom_2}$.

\proofcase{{Case $\tau_{(a,q)}^{\autom_2}$ is a reset on $s' \in S$.}} In
  this case, it suffices to set the function  $\tau_{(a,q)}^{\autom'_2}$ as
  a reset on state $(q',s')$ where $q' = \delta_1(q,a)$.  For every $s \in
  S$ we have that $\delta_2(s,(a,q)) = s'$ and $\delta'_2((q,s),(a,q))
  = (q',s')$ and so the property $(\star)$ is satisfied.

\proofcase{{Case $\tau_{(a,q)}^{\autom_2}$ is a permutation.}} In this
  case, we show that there exists a function $\tau_{(a,q)}^{\autom'_2}$
  which satisfies property $(\star)$ and is a permutation function. 
  Let $P \subseteq Q \times S$ be the set of states of $\autom'_2$ of the
  form $(q,s)$ (where $q$ is precisely the state in the symbol $(a,q)$),
  for some $s \in S$. We define $\tau_{(a,q)}^{\autom'_2}$ in this way:
  \begin{itemize}
    \item For each $(q,s) \in P$, we define $\tau_{(a,q)}^{\autom'_2}$ so
      as to satisfy $(\star)$, \ie $\tau_{(a,q)}^{\autom'_2}(q,s)=(q',s')$
      if and only if $\delta_1(q,a) = q'$ and $\delta_2(s,(a,q)) = s'$.
      Since $\tau_{(a,q)}^{\autom_2}$ is a permutation in $\autom_2$, it
      cannot be the case that two states $(q,s_1),(q,s_2) \in P$ are mapped
      by $\tau_{(a,q)}^{\autom'_2}$ into the same state $(q',s_3) \in
      Q \times S$, because that would mean that $\delta_2(s_1,(a,q))
      = \delta_2(s_2,(a,q)) = s_3$, contradicting the fact that
      $\tau_{(a,q)}^{\autom_2}$ is a permutation.  Therefore,
      $\tau_{(a,q)}^{\autom'_2}$ restricted to $P$ is a permutation.
    \item For the remaining states, we define $\tau_{(a,q)}^{\autom'_2}$ to
      be any permutation from $(Q \times S)\setminus P$ to $(Q \times S)
      \setminus \tau_{(a,q)}^{\autom'_2}(P)$, which is guaranteed to exist
      since $|(Q \times S)\setminus P| = |(Q \times S) \setminus
      \tau_{(a,q)}^{\autom'_2}(P)|$.
  \end{itemize}
  By construction, $\tau_{(a,q)}^{\autom'_2}$ satisfies property $(\star)$
  and is a permutation.
\end{claimproof}

  Moreover, it is easy to notice that if $\cascade$ is a cascade of
  pure-reset (resp. permutation) automata, then $\cascade'$ is
  a $\Sigma$-chain of pure-reset (resp. permutation) automata. We now prove
  that $\lang(\cascade) = \lang(\cascade')$.

\begin{claim}
$\lang(\cascade) \subseteq \lang(\cascade')$
\end{claim}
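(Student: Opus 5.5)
The plan is to show that the two product automata simulate each other along every input word, by induction on the length of the word. The key invariant is the following. Let $\sigma = \sigma_1\dots\sigma_k \in \Sigma^*$, and let $(q^j,s^j,t^j)$, for $0\le j\le k$, be the run of $\autom_1 \circ \autom_2 \circ \autom_3$ induced by $\sigma$, starting from $(q_0,s_0,t_0)$. Then the run of $\autom'_1 \odot \autom'_2 \odot \autom'_3$ induced by $\sigma$ is exactly $(q^j,(q^j,s^j),t^j)$, for $0\le j\le k$. In words, the second component of the chain always carries, in its first coordinate, the current state of $\autom'_1$.

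The base case $j=0$ holds by the choice of initial states: $\autom'_1$ starts in $q_0$, $\autom'_2$ starts in $(q_0,s_0)$, and $\autom'_3$ starts in $t_0$.

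For the inductive step, assume the invariant at step $j$ and let $a=\sigma_{j+1}$. We treat each component in turn.
\begin{itemize}
\item Component $\autom'_1$ equals $\autom_1$, so it moves to $q^{j+1}=\delta_1(q^j,a)$.
\item Component $\autom'_2$ is in state $(q^j,s^j)$ and reads the symbol $(a,q^j)$. The state of $\autom'_1$ matches the first coordinate of $(q^j,s^j)$, so property $(\star)$ applies. It gives the successor $(\delta_1(q^j,a),\delta_2(s^j,(a,q^j))) = (q^{j+1},s^{j+1})$.
\item Component $\autom'_3$ reads $(a,(q^j,s^j))$. By definition of $\delta'_3$, it moves to $\delta_3(t^j,(a,q^j,s^j)) = t^{j+1}$.
\end{itemize}
Hence the invariant holds at step $j+1$.

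It follows that the last state of the chain run is $(q^k,(q^k,s^k),t^k)$. Now suppose $\sigma\in\lang(\cascade)$. Then $q^k\in F_1$, $s^k\in F_2$ and $t^k\in F_3$. This gives $q^k \in F_1$, $(q^k,s^k)\in F_1\times F_2$ and $t^k\in F_3$. So the final chain state lies in $F_1\times(F_1\times F_2)\times F_3$, and therefore $\sigma\in\lang(\cascade')$. The converse inclusion follows from the same invariant, read backwards: acceptance in the chain forces $q^k\in F_1$, $s^k\in F_2$ and $t^k\in F_3$.

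The only delicate point is that $\delta'_2$ was chosen arbitrarily on mismatched pairs, namely on symbols $(a,q'')$ read from states $(q,s)$ with $q''\neq q$, as in the proof of \cref{claim:pr:a2}. We must make sure these transitions are never used. The invariant settles this: the state of $\autom'_1$ always equals the first coordinate of the state of $\autom'_2$, so only the transitions constrained by $(\star)$ are ever taken along a run from the initial state. This is the step I would state explicitly, as part of the induction hypothesis, rather than leave implicit.
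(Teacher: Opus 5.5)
Your proof is correct and follows the paper's argument: the paper likewise takes the accepting cascade run $(q_i,s_i,t_i)$ and uses $(\star)$ together with the definitions of $\autom'_1$ and $\autom'_3$ to show that the word induces the chain run $(q_i,(q_i,s_i),t_i)$. It then concludes acceptance from the final-state set $F_1 \times F_2$ of $\autom'_2$. Your added remark is also sound and a genuine shortcut: by determinism the invariant holds for every word, so it immediately yields the reverse inclusion as well, which the paper proves separately with its own induction showing that the first coordinate of $\autom'_2$ tracks the state of $\autom'_1$.
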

\begin{claimproof}
Let $s = \sigma_1\dots\sigma_n \in \lang(\cascade)$. The word $s$ induces on the cascade product $\autom_1 \circ \autom_2 \circ \autom_3$ the accepting run $\seq{(q_0, s_0, t_0),\dots,(q_n,s_n,t_n)}$. By definition of the run it holds that $\delta((q_i,s_i,t_i),a_{i+1}) = (q_{i+1},s_{i+1},t_{i+1})$ for every $i=0,\dots, n-1$ (where $\delta$ is the transition function of the automaton $\autom_1 \circ \autom_2 \circ \autom_3$).
This means (by definition of cascade product) that:
\begin{enumerate}
    \item $\delta_1(q_i,a_{i+1}) = q_{i+1}$ for every $i=0,\dots, n-1$
    \item $\delta_2(s_i,(a_{i+1},q_i)) = s_{i+1}$ for every $i=0,\dots, n-1$
    \item $\delta_3(t_i,(a_{i+1},q_i,s_i)) = t_{i+1}$ for every $i=0,\dots, n-1$
\end{enumerate}
By point (1) and from the definition of $\autom'_1$ it holds that (4) $\delta'_1(q_i,a_{i+1}) = q_{i+1}$ for every $i=0,\dots, n-1$.
By point (3) and by definition of $\autom'_3$ it holds that (5) $\delta'_3(t_i,(a_{i+1},(q_i,s_i))) = t_{i+1}$ for every $i=0,\dots, n-1$.
From points (1), (2) and from property ($\star$) it holds that (6) $\delta'_2((q_i,s_i),(a_{i+1},q_i)) = (q_{i+1},s_{i+1})$ for every $i=0,\dots, n-1$.
By points (4), (5) and (6) it holds that $s$ induces the run $\seq{(q_0, (q_0,s_0), t_0),\dots,(q_n,(q_n,s_n),t_n)}$ on the $\Sigma$-chain product $\autom'_1 \odot \autom'_2 \odot \autom'_3$. Since $(q_n,(q_n,s_n),t_n) \in F_1 \times (F_1 \times F_2)\times F_3$, the run is accepting and so $s \in \lang(\cascade')$.
\end{claimproof}

\begin{claim}
$\lang(\cascade') \subseteq \lang(\cascade)$
\end{claim}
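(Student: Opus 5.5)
We show the reverse inclusion by running the forward argument backwards. The one extra fact needed is an invariant on reachable states of the $\Sigma$-chain product $\autom'_1 \odot \autom'_2 \odot \autom'_3$: every state reachable from the initial state has the form $(q,(q,s),t)$. That is, the $Q$-coordinate stored inside $\autom'_2$ always coincides with the current state of $\autom'_1$. Once this holds, property $(\star)$ applies along every transition of the run, and the rest is a direct translation.

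Fix $s = \sigma_1\dots\sigma_n \in \lang(\cascade')$ and its accepting run $\seq{(p_0,(q_0',s_0'),t_0'),\dots,(p_n,(q_n',s_n'),t_n')}$ on the $\Sigma$-chain product. We prove by induction on $i$ that $p_i = q_i'$ for all $0 \le i \le n$.

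\textbf{Base case.} The initial state is $(q_0,(q_0,s_0),t_0)$, so $p_0 = q_0'$.

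\textbf{Inductive step.} Assume $p_i = q_i'$. The component $\autom'_2$ reads the symbol $(\sigma_{i+1},p_i) = (\sigma_{i+1},q_i')$, whose state argument matches the first coordinate of its current state $(q_i',s_i')$. Property $(\star)$ then gives
\[
\delta'_2((q_i',s_i'),(\sigma_{i+1},q_i')) = (\delta_1(q_i',\sigma_{i+1}),\, \delta_2(s_i',(\sigma_{i+1},q_i'))).
\]
Meanwhile $p_{i+1} = \delta'_1(p_i,\sigma_{i+1}) = \delta_1(q_i',\sigma_{i+1})$, because $\autom'_1 = \autom_1$. Hence $p_{i+1} = q_{i+1}'$. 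The same computation shows
\[
s_{i+1}' = \delta_2(s_i',(\sigma_{i+1},p_i)).
\]

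Next we unfold $\autom'_3$. By its definition,
\[
t_{i+1}' = \delta'_3(t_i',(\sigma_{i+1},(q_i',s_i'))) = \delta_3(t_i',(\sigma_{i+1},p_i,s_i')).
\]
Together with the previous step, this shows that $\seq{(p_0,s_0',t_0'),\dots,(p_n,s_n',t_n')}$ is exactly the run that $s$ induces on the cascade product $\autom_1 \circ \autom_2 \circ \autom_3$, by \cref{def:cascade:dfa}.

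\textbf{Acceptance.} Acceptance in the chain product gives $p_n \in F_1$, $(q_n',s_n') \in F_1 \times F_2$, and $t_n' \in F_3$. In particular $s_n' \in F_2$, so $(p_n,s_n',t_n') \in F_1 \times F_2 \times F_3$ and $s \in \lang(\cascade)$.

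The main point to handle carefully is the invariant. Property $(\star)$ constrains $\delta'_2$ only on "matching" inputs, while on non-matching pairs $\delta'_2$ was chosen arbitrarily (to preserve the permutation or reset type). Those arbitrary transitions never fire in a reachable run, and the induction above is exactly what guarantees this.
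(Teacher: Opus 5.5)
Your proof is correct and follows essentially the same route as the paper. You run an induction along the accepting run to show that the $Q$-coordinate stored in $\autom'_2$ coincides with the current state of $\autom'_1$, and then use $(\star)$ and the definition of $\autom'_3$ to read off the run of $\autom_1 \circ \autom_2 \circ \autom_3$. Your explicit acceptance check, obtaining $s'_n \in F_2$ from $(q'_n,s'_n) \in F_1 \times F_2$, spells out a step the paper leaves implicit.
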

\begin{claimproof}
Let $s = \sigma_1\dots\sigma_n \in \lang(\cascade')$.
The word $s$ induces on the $\Sigma$-chain product $\autom'_1 \odot \autom'_2 \odot \autom'_3$ the accepting run $\seq{(q_0, (q_0',s_0), t_0),\dots,(q_n,(q'_n,s_n),t_n)}$.
By definition of the run it holds that $\delta'((q_i,(q'_i,s_i),t_i),a_{i+1}) = (q_{i+1},( q'_{i+1},s_{i+1}),t_{i+1})$ for every $i=0,\dots, n-1$ (where $\delta'$ is the transition function of the automaton $\autom'_1 \odot \autom'_2 \odot \autom'_3$).
This means (by definition of $\Sigma$-chain product) that:
\begin{enumerate}
    \item $\delta'_1(q_i,a_{i+1}) = q_{i+1}$ for every $i=0,\dots, n-1$
    \item $\delta'_2((q'_i,s_i),(a_{i+1},q_i)) = (q'_{i+1},s_{i+1})$ for every $i=0,\dots, n-1$
    \item $\delta'_3(t_i,(a_{i+1},(q'_i,s_i))) = t_{i+1}$ for every $i=0,\dots, n-1$
\end{enumerate}
We now prove by induction that for every $i=0,\dots,n$ it holds $q_i = q'_i$.

\proofcase{{Base step ($i = 0$).}}
In this case, it holds $q_0 = q'_0$ by definition of $\autom'_2$.

\proofcase{{Inductive step.}}
Suppose that $q_m = q'_m$ for every $m \leq i$. We show that $q_{i+1} = q'_{i+1}$. By point (2) we have $\delta'_2((q'_i,s_i),(a_{i+1},q_i)) = (q'_{i+1},s_{i+1})$. Since $q'_i = q_i$, we obtain (4) $\delta'_2((q_i,s_i),(a_{i+1},q_i)) = (q'_{i+1},s_{i+1})$. By (4) and by property $(\star)$, it holds (5) $\delta_1(q_i,a_{i+1}) = q'_{i+1}$. However, by point (1) it follows (6) $\delta_1(q_i,a_{i+1}) = q_{i+1}$. Since the automaton $\autom_1$ is deterministic, by points (5) and (6) it must hold $q_{i+1} = q'_{i+1}$. This concludes the proof that (7) for every $i=0,\dots,n$ it holds $q_i = q'_i$.

By points (2), (7) and by the property ($\star$) it follows (9) $\delta_2(s_i,(a_{i+1},q_i)) = s_{i+1}$ for every $i=0,\dots, n-1$.
Moreover, by definition of $\autom'_3$ and by point (3), it holds that (10) $\delta_3(t_i,(a_{i+1},q'_i,s_i)) = \delta_3(t_i,(a_{i+1},q_i,s_i)) = t_{i+1}$ for every $i=0,\dots, n-1$.

By points (6),(9) and (10), the word $s$ induces on the cascade product $\autom_1 \circ \autom_2 \circ \autom_3$ the accepting run $\seq{(q_0, s_0, t_0),\dots,(q_n,s_n,t_n)}$ and so $s \in \lang(\cascade)$.
\end{claimproof}
    
\end{proof}

\chainification*
\begin{proof}
    If $n \le 2$ the claim is trivially verified since any cascade of
    height $2$ is also a $\Sigma$-chain.
    The case $n = 3$ is verified by \Cref{lemma:cascade3:to:chain3}.
    From now on, let $n > 3$.
    For every $i = 1, \dots, n$, let $\autom_i$ be of the form $\autom_i = \seq{\Sigma \times Q_1 \times \dots \times Q_{i-1}, Q_i, \delta_i, q^0_i, F_i}$.
    We define $\autom_1', \dots, \autom_n'$ as follows:
    \begin{itemize}
       \item $\autom'_1 \coloneqq  \autom_1$;
       \item For every $i = 2, \dots, n-1$, $\autom'_i \coloneqq \seq{\Sigma \times Q'_{i-1}, Q'_i , \delta'_i, {q^0}'_i,F_i'}$ where:
       \begin{itemize}
           \item $Q_i' = Q_1 \times \dots \times Q_i$;
           \item ${q^0}'_i = (q^0_1, \dots, q^0_i)$;
           \item $F'_i = F_1 \times \dots \times F_i$;
           \item[$(\star)$] $\delta'_i$ is such that for every $(a, q_1, \dots, q_i) \in \Sigma \times Q_1 \times \dots \times Q_i$ it holds $\delta'_i((q_1, \dots, q_i),(a,(q_1, \dots, q_{i-1}))) = (\tilde{q_1},\dots,\tilde{q_i})$ if and only if: 
           \begin{enumerate}
                \item $\delta_{i-1}' ((q_1, \dots, q_{i-1}), (a, (q_1, \dots, q_{i-2}))) = (\tilde{q_1},\dots,\tilde{q_{i-1}})$ and
               \item $\delta_i (q_i, (a, q_1, \dots, q_{i-1})) = \tilde{q_i}$
           \end{enumerate}
       \end{itemize}
        \item $\autom'_n \coloneqq  \seq{\Sigma \times Q'_{n-1}, Q_n, \delta'_n, q^0_n, F_n}$ where for every $(a, q_1, \dots, q_n) \in \Sigma \times Q_1 \times \dots \times Q_n$ it holds $\delta'_n(q_n,(a,(q_1,\dots , q_{n-1}))) = \tilde{q_n}$ if and only if $\delta_n(q_n,(a,q_1,\dots , q_{n-1})) = \tilde{q_n}$
    \end{itemize}
    \roberto{First notice that, by definition, it holds that $|\autom'_i| \in \mathcal{O}(|\autom_i|)$. This implies that $\cascade' \in \mathcal{O}(|\cascade|)$.}
    
    We begin by  proving that $\cascade'$ is a $\Sigma$-chain of permutation-reset automata. Clearly $\autom'_1$ and $\autom'_n$ are permutation reset automata since $\autom_1$ and $\autom_n$ are. The next claim proves that the same holds for the other automata of the chain.
    \begin{claim}
    For every $i = 2, \dots, n-1$, the automaton $\autom'_i$ can be chosen to be permutation-reset.
    \end{claim}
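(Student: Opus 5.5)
I will argue by induction on $i=2,\dots,n-1$, mirroring \cref{claim:pr:a2} in the proof of \cref{lemma:cascade3:to:chain3}. For each $i$ I will show that every symbol $(a,\bar q)$, with $\bar q=(q_1,\dots,q_{i-1})\in Q'_{i-1}$, can be given a transformation $\tau^{\autom'_i}_{(a,\bar q)}$ on $Q'_i=Q_1\times\dots\times Q_i$ that satisfies $(\star)$ and is a reset or a permutation. The inductive hypothesis is used in a specific way. The tuple $(\tilde q_1,\dots,\tilde q_{i-1})$ prescribed by condition 1 of $(\star)$ is the value of $\delta'_{i-1}$ at the \emph{diagonal} state/symbol pair $((q_1,\dots,q_{i-1}),(a,(q_1,\dots,q_{i-2})))$. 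By the construction at level $i-1$, that value is exactly the cascade update $\delta_j(q_j,(a,q_1,\dots,q_{j-1}))$ for each $j\le i-1$. So it depends only on $a$ and $\bar q$, and I denote it $\bar q^{\,a}$. In other words, $(\star)$ only constrains $\tau^{\autom'_i}_{(a,\bar q)}$ on the ``consistent'' slice
\[
P_{\bar q}=\set{(q_1,\dots,q_{i-1},s)\suchthat s\in Q_i},
\]
where it must map $(\bar q,s)\mapsto(\bar q^{\,a},\delta_i(s,(a,\bar q)))$. Every state outside $P_{\bar q}$ is unconstrained.

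The key steps then split on the type of $\tau^{\autom_i}_{(a,\bar q)}$. If $\tau^{\autom_i}_{(a,\bar q)}$ is a reset on $s'$, I make $\tau^{\autom'_i}_{(a,\bar q)}$ the reset on $(\bar q^{\,a},s')$; this agrees with $(\star)$ on $P_{\bar q}$. If it is a permutation, the restriction to $P_{\bar q}$ is injective, because the second coordinate is a bijection of $Q_i$ and the first coordinate is the constant $\bar q^{\,a}$. I then extend it by an arbitrary bijection from $Q'_i\setminus P_{\bar q}$ onto $Q'_i\setminus\tau(P_{\bar q})$; the two sets have equal cardinality, so the extension yields a permutation of $Q'_i$. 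For the pure-reset and permutation classes, the same two cases give resets only and permutations only, respectively, so class membership is preserved.

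The main obstacle is making precise that condition 1 of $(\star)$ is well defined. Because it refers to $\delta'_{i-1}$, the previous level has its own arbitrary off-slice choices, and I must check that these never leak into level $i$. The argument is that condition 1 only evaluates $\delta'_{i-1}$ at a state lying in the slice $P$ of its own symbol. On such states, $\delta'_{i-1}$ was fixed by $(\star)$ at level $i-1$, so the off-slice choices are irrelevant. I would state this as an auxiliary invariant, proved alongside the claim by the same induction: for all $j\le i$, $\delta'_j$ restricted to consistent pairs coincides with the componentwise cascade update. The claim then follows, and the same invariant is reused for the later reachability argument showing $\lang(\cascade)=\lang(\cascade')$.
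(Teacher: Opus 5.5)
Your proposal is correct and follows the paper's proof essentially step for step. You make the same case split on whether $\tau^{\autom_i}_{(a,q_1,\dots,q_{i-1})}$ is a reset or a permutation. In the reset case you use the reset on $(\tilde q_1,\dots,\tilde q_i)$; in the permutation case you define the map injectively on the slice $P$ and extend it by an arbitrary bijection between the complements.

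Your auxiliary invariant is not actually needed for this claim. Condition 1 of $(\star)$ only reads a value of the already-fixed function $\delta'_{i-1}$, whatever its off-slice choices were. The invariant is correct, though, and it is essentially what the paper uses in the subsequent language-equality claims (points (4) and (6) there).
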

    \begin{claimproof}
    The proof is similar to the one of \cref{claim:pr:a2}.

    For every $(a,q_1,\dots,q_{i-1}) \in \Sigma \times Q_1 \times \dots \times Q_{i-1}$ we prove that the function $\tau_{(a,(q_1,\dots, q_{i-1}))}^{\autom'_i}$ can be chosen either as a reset or a permutation, while still satisfying the property $(\star)$ required by the definition of $\autom'_i$. We divide in cases depending on the nature of $\tau_{(a,q_1,\dots, q_{i-1})}^{\autom_i}$.

    \proofcase{{Case  $\tau_{(a,q_1,\dots, q_{i-1})}^{\autom_i}$ is a reset on $\tilde{q_i} \in Q_i$.}} In this case, it suffices to set the function  $\tau_{(a,(q_1,\dots, q_{i-1}))}^{\autom'_i}$ as a reset on state $(\tilde{q_1}, \dots, \tilde{q_i})$ where $(\tilde{q_1}, \dots, \tilde{q_{i-1}}) = \delta'_{i-1}((q_1, \dots, q_{i-1}),(a, (q_1, \dots, q_{i-2})))$. 
    For every $q \in Q_i$ we have that $\delta_i(q,(a,q_1,\dots, q_{i-1})) = \tilde{q_i}$ and $\delta'_i((q_1,\dots, q_{i-1},q),(a,(q_1,\dots, q_{i-1}))) = (\tilde{q_1}, \dots, \tilde{q_i})$ and so the property $(\star)$ is satisfied.
    
    \proofcase{{Case  $\tau_{(a,q_1,\dots, q_{i-1})}^{\autom_i}$ is a permutation.}} 
     In this
  case, we show that there exists a function $\tau_{(a,(q_1,\dots, q_{i-1}))}^{\autom'_i}$ which satisfies property $(\star)$ and is a permutation function. 
  Let $P \subseteq Q_1 \times \dots \times Q_i$ be the set of states of $\autom'_i$ of the form $(q_1,\dots,q_i)$ (where $q_1,\dots,q_{i-1}$ are precisely the states in the symbol $(a,(q_1,\dots, q_{i-1}))$),
  for some $q_i \in Q_i$. We define $\tau_{(a,(q_1,\dots, q_{i-1}))}^{\autom'_i}$ in this way:
  \begin{itemize}
    \item For each $(q_1,\dots,q_i) \in P$, we define $\tau_{(a,(q_1,\dots, q_{i-1}))}^{\autom'_i}$ so
      as to satisfy $(\star)$, \ie $\tau_{(a,(q_1,\dots, q_{i-1}))}^{\autom'_i}(q_1,\dots, q_i)=(q_1',\dots,q_i')$
      if and only if $\delta'_{i-1}((q_1,\dots,q_{i-1}),(a,(q_1,\dots,q_{i-2}))) = (q'_1,\dots,q'_{i-1})$ and\\ $\delta_i(q_i,(a,q_1,\dots,q_{i-1})) = q_i'$.
      Since $\tau_{(a,q_1,\dots, q_{i-1})}^{\autom_i}$ is a permutation in $\autom_i$, it
      cannot be the case that two states $(q_1,\dots, q_{i-1},q_{i}^1),(q_1,\dots, q_{i-1},q_{i}^2) \in P$ are mapped
      by $\tau_{(a,(q_1,\dots, q_{i-1}))}^{\autom'_i}$ into the same state $(q'_1,\dots, q'_{i-1},q'_{i}) \in
      Q_1 \times\dots \times Q_i$, because that would mean that $\delta_i(q_i^1,(a,q_1,\dots,q_{i-1}))
      = \delta_i(q_i^2,(a,q_1,\dots,q_{i-1})) = q'_i$, contradicting the fact that
      $\tau_{(a,q_1,\dots, q_{i-1})}^{\autom_i}$ is a permutation.  Therefore,
      $\tau_{(a,(q_1,\dots, q_{i-1}))}^{\autom'_i}$ restricted to $P$ is a permutation.
    \item For the remaining states, we define $\tau_{(a,(q_1,\dots, q_{i-1}))}^{\autom'_i}$ to
      be any permutation from $(Q_1\times \dots \times Q_1)\setminus P$ to $(Q_1 \times \dots \times Q_i)
      \setminus \tau_{(a,(q_1,\dots, q_{i-1}))}^{\autom'_i}(P)$, which is guaranteed to exist
      since $|(Q_1 \times\dots\times Q_i)\setminus P| = |(Q_1 \times \dots \times Q_i) \setminus
      \tau_{(a,(q_1,\dots, q_{i-1}))}^{\autom'_i}(P)|$.
  \end{itemize}
  By construction, $\tau_{(a,(q_1,\dots, q_{i-1}))}^{\autom'_i}$ satisfies property $(\star)$
  and is a permutation.
    
    \end{claimproof}

    Moreover, it is easy to notice that if $\cascade$ is a cascade of pure-reset (resp. permutation) automata, then $\cascade'$ is a $\Sigma$-chain of pure-reset (resp. permutation) automata. We now prove that it holds $\lang(\cascade) = \lang(\cascade')$.

    \begin{claim}
    $\lang(\cascade) \subseteq \lang(\cascade')$
    \end{claim}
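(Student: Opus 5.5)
We follow a word $s = a_1\dots a_k \in \lang(\cascade)$ through both products and show that the chain product tracks the cascade product exactly. Let $\seq{(q^0_1,\dots,q^0_n),\dots,(q^k_1,\dots,q^k_n)}$ be the accepting run of $\autom_1\circ\dots\circ\autom_n$ on $s$. By the definition of the cascade product, for every $0\le j<k$ and every $1\le i\le n$ we have
\[
\delta_i\bigl(q^j_i,(a_{j+1},q^j_1,\dots,q^j_{i-1})\bigr)=q^{j+1}_i .
\]
The candidate run of $\autom'_1\odot\dots\odot\autom'_n$ is the sequence whose $j$-th state has first component $q^j_1$, $i$-th component $(q^j_1,\dots,q^j_i)$ for $2\le i\le n-1$, and last component $q^j_n$.

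The proof that this is indeed the run induced by $s$ is an induction on the level $i$, carried out for each fixed step $j$.

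\begin{itemize}
\item For level $1$, $\autom'_1=\autom_1$, so the first component agrees immediately.
\item For $2\le i\le n-1$, the symbol read by $\autom'_i$ at step $j$ is $(a_{j+1},(q^j_1,\dots,q^j_{i-1}))$, because $\autom'_{i-1}$ is in state $(q^j_1,\dots,q^j_{i-1})$ (for $i=2$ the symbol is simply $(a_{j+1},q^j_1)$). The current state of $\autom'_i$ is $(q^j_1,\dots,q^j_i)$, whose prefix matches the tuple in the symbol. Hence property $(\star)$ applies and gives $\delta'_i=(\tilde q_1,\dots,\tilde q_i)$, where:
  \begin{itemize}
  \item the prefix $(\tilde q_1,\dots,\tilde q_{i-1})$ is the value of $\delta'_{i-1}$ on the analogous matching pair, which equals $(q^{j+1}_1,\dots,q^{j+1}_{i-1})$ by the inductive hypothesis on level $i-1$ (for $i=2$ it is $\delta_1(q^j_1,a_{j+1})=q^{j+1}_1$);
  \item the last entry is $\tilde q_i=\delta_i(q^j_i,(a_{j+1},q^j_1,\dots,q^j_{i-1}))=q^{j+1}_i$.
  \end{itemize}
\item For level $n$, $\autom'_n$ reads $(a_{j+1},(q^j_1,\dots,q^j_{n-1}))$. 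By its definition it moves exactly as $\delta_n$ does on $(a_{j+1},q^j_1,\dots,q^j_{n-1})$, that is, to $q^{j+1}_n$.
\end{itemize}

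To make the induction self-contained, the inductive hypothesis should be stated as: for every $i$ and every matching pair, $\delta'_i$ maps $(q_1,\dots,q_i)$ under symbol $(a,(q_1,\dots,q_{i-1}))$ to $(\delta_1(q_1,a),\delta_2(q_2,(a,q_1)),\dots,\delta_i(q_i,(a,q_1,\dots,q_{i-1})))$. This is a purely local statement about $\delta'_i$ that follows from $(\star)$ by induction on $i$, and the run-level claim above is a direct consequence of it.

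Acceptance then follows. The final state of the chain run has components in $F_1$, in $F_1\times\dots\times F_i$ for $2\le i\le n-1$, and in $F_n$, because each $q^k_i\in F_i$. So it lies in the final-state set $F_1\times F'_2\times\dots\times F'_{n-1}\times F_n$, and $s\in\lang(\cascade')$.

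The main obstacle is bookkeeping rather than any real difficulty. Property $(\star)$ constrains $\delta'_i$ only on matching pairs, where the state prefix equals the tuple in the symbol, and it does so recursively through $\delta'_{i-1}$. The work lies in stating the unfolded local form of $\delta'_i$ correctly and checking that the run stays on matching pairs at every level. The non-matching transitions, which were chosen arbitrarily in the permutation case, never fire along this run, so they play no role in this inclusion.
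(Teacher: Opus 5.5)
Your proposal is correct and follows essentially the same route as the paper. You build the same candidate chain run, whose $i$-th component is $(q^j_1,\dots,q^j_i)$, prove the per-level transition identity (the paper's point (4)) by induction on $i$ using $(\star)$, and conclude acceptance from $F'_i = F_1\times\dots\times F_i$. Phrasing the inductive statement as a local identity about $\delta'_i$ on matching pairs is a slightly cleaner packaging of the same argument: the run is then obtained by a routine induction on $j$ from the initial states $(q^0_1,\dots,q^0_i)$, which both you and the paper leave implicit.
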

    \begin{claimproof}

    Let $s = \sigma_1\dots\sigma_m \in \lang(\cascade)$. The word $s$ induces on the cascade product $\autom_1 \circ \dots \circ \autom_n$ the accepting run $\seq{(q^0_1,\dots, q^0_n),\dots,(q^m_1,\dots, q^m_n)}$. By definition of the run it holds that $\delta((q^j_1,\dots, q^j_n),a_{j+1}) = (q^{j+1}_1,\dots, q^{j+1}_n)$ for every $j=0,\dots, m-1$ (where $\delta$ is the transition function of the automaton $\autom_1 \circ \dots \circ \autom_n$).
    This means (by definition of cascade product) that:
    \begin{center}
        (1) \phantom{aaaa}$ \delta_i(q_i^j,(a_{j+1},q_1^j, \dots, q_{i-1}^j)) = q_i^{j+1} $\phantom{aaaa} for $i=1\dots, n$, and $j = 0,\dots,m-1$
    \end{center}
    By point (1) and by definition of $\autom'_1$ it holds that (2) $\delta'_1(q_1^{j},a_{j+1}) = q_1^{j+1}$ for every $j=0,\dots, m-1$.
    By point (1) and by definition of $\autom'_n$ it holds that (3) $\delta'_n(q_n^{j},(a_{j+1},(q_1^j,\dots,q_{n-1}^j))) = q_n^{j+1}$ for every $j=0,\dots, m-1$.
    We now prove (by induction on $i$) that it holds (4) \[\delta'_i((q^j_1, \dots, q^j_i),(a_{j+1},(q^j_1, \dots, q^j_{i-1}))) = (q^{j+1}_1, \dots, q^{j+1}_i)\] for every $i=2,\dots, n-1$, for every $j = 0, \dots, m-1$.

    \proofcase{{Base step ($i = 2$).}} In this case we have that $\delta_2(q_2^j,(a_{j+1},q_1^j)) = q_2^{j+1}$ for every $j = 0,\dots,m-1$. Using point (2) and the property ($\star$) we obtain that $\delta_2'((q_1^j,q_2^j),(a_{j+1},q_1^j)) = (q_1^{j+1}, q_2^{j+1})$ for every $j = 0, \dots, m-1$.

    \proofcase{{Inductive step.}} Assume that it holds  $\delta'_{i-1}((q^j_1, \dots, q^j_{i-1}),(a_{j+1},(q^j_1, \dots, q^j_{i-2}))) = (q^{j+1}_1, \dots, q^{j+1}_{i-1})$ for every $j = 0, \dots, m-1$. Using point (1) and the property ($\star$) we obtain that\\ $\delta'_i((q^j_1, \dots, q^j_i),(a_{j+1},(q^j_1, \dots, q^j_{i-1}))) = (q^{j+1}_1, \dots, q^{j+1}_i)$ for every $j = 0, \dots, m-1$.
 
    By points (2), (3) and (4) it holds that the word $s$ induces the run: \[\seq{\left(q^0_1, (q^0_1,q^0_2), \dots, (q^0_1,\dots,q^0_{n-1}), q_n^0\right),\dots,\left(q^m_1, (q^m_1,q^m_2), \dots, (q^m_1,\dots,q^m_{n-1}), q^m_n\right)}\] on the $\Sigma$-chain product $\autom'_1 \odot \dots \odot \autom'_n$. Since $\left(q^m_1, (q^m_1,q^m_2), \dots, (q^m_1,\dots,q^m_{n-1}), q^m_n\right) \in F_1 \times (F_1 \times F_2)\times \dots \times (F_1 \times \dots \times F_{n-1}) \times F_n$, the run is accepting and so $s \in \lang(\cascade')$.
    \end{claimproof}

    \begin{claim}
    $\lang(\cascade') \subseteq \lang(\cascade)$
    \end{claim}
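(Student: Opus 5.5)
The plan mirrors the second claim in the proof of \cref{lemma:cascade3:to:chain3}, lifted to arbitrary height. Let $s = \sigma_1\dots\sigma_m \in \lang(\cascade')$, with $\sigma_j = a_j$. Then $s$ induces an accepting run on $\autom'_1 \odot \dots \odot \autom'_n$ of the form
\[
\left(p^j_1, P^j_2, \dots, P^j_{n-1}, p^j_n\right)_{j=0,\dots,m},
\]
where $p^j_1 \in Q_1$, $P^j_i \in Q_1\times\dots\times Q_i$ for $2\le i\le n-1$, and $p^j_n \in Q_n$. It is convenient to set $P^j_1 \coloneqq p^j_1$.

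\textbf{Consistency of the tuples.} The first and central step is to prove that the tuples stay consistent along the run. Concretely, for every $j$ and every $2\le i\le n-1$, the projection of $P^j_i$ onto its first $i-1$ coordinates should equal $P^j_{i-1}$. In other words, there are states $q^j_1,\dots,q^j_{n-1}$ with $P^j_i = (q^j_1,\dots,q^j_i)$ for all $i$.

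I would prove this by induction on $j$, for all $i$ simultaneously.
\begin{itemize}
  \item \emph{Base case ($j=0$).} This holds because the initial state of $\autom'_i$ is $(q^0_1,\dots,q^0_i)$.
  \item \emph{Inductive step.} Assume consistency at time $j$. Then component $\autom'_i$ sits in $(q^j_1,\dots,q^j_i)$ and reads the symbol $(a_{j+1},(q^j_1,\dots,q^j_{i-1}))$. Its prefix therefore matches the letter read, which is exactly the situation constrained by $(\star)$.
\end{itemize}

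Applying $(\star)$, the first $i-1$ coordinates of $P^{j+1}_i$ are given by $\delta'_{i-1}\bigl((q^j_1,\dots,q^j_{i-1}),(a_{j+1},(q^j_1,\dots,q^j_{i-2}))\bigr)$. This is precisely the transition that $\autom'_{i-1}$ actually performs in the run, so it equals $P^{j+1}_{i-1}$. For $i=2$, $\delta'_1=\delta_1$ plays the same role. Determinism of each $\delta'_{i-1}$ closes the step.

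This step is the main obstacle. The transitions of $\autom'_i$ on ``mismatched'' states were chosen arbitrarily, to keep them permutations, so the argument must show that such states are never visited. The way to do this is to run the induction on $j$ with the full consistency invariant for all levels at once, not level by level.

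\textbf{Recovering a cascade run.} Once consistency holds, $(\star)$ (clause 2) gives
\[
\delta_i\bigl(q^j_i,(a_{j+1},q^j_1,\dots,q^j_{i-1})\bigr) = q^{j+1}_i \qquad \text{for } 2\le i\le n-1,
\]
and $\delta_1(q^j_1,a_{j+1})=q^{j+1}_1$ holds by definition of $\autom'_1$. Set $q^j_n \coloneqq p^j_n$. By the definition of $\delta'_n$, the last component satisfies $\delta_n\bigl(q^j_n,(a_{j+1},q^j_1,\dots,q^j_{n-1})\bigr)=q^{j+1}_n$.

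Hence $(q^j_1,\dots,q^j_n)_{j}$ is the run of $s$ on the cascade product $\autom_1\circ\dots\circ\autom_n$. It starts from $(q^0_1,\dots,q^0_n)$, and it is accepting: the final state of $\cascade'$ lies in $F_1\times(F_1\times F_2)\times\dots\times(F_1\times\dots\times F_{n-1})\times F_n$, which forces $q^m_i\in F_i$ for every $i$. Therefore $s\in\lang(\cascade)$.
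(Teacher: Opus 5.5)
Your proposal is correct and follows the paper's proof. It uses the same consistency invariant (the paper's $q_i^j = q_{i,i+1}^j = \dots = q_{i,n-1}^j$), proved by induction on $j$ for all levels at once, and then reads off the accepting cascade run from clause 2 of $(\star)$ and the definition of $\autom'_n$. Your inductive step is slightly leaner than the paper's: you note that $\autom'_{i-1}$ is itself in a matched configuration at time $j$, so the prefix that $(\star)$ prescribes is literally its actual transition in the run. The paper reaches the same equality through an auxiliary induction on $i$.
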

    \begin{claimproof}
    Let $s = \sigma_1\dots\sigma_m \in \lang(\cascade')$.
    The word $s$ induces on the $\Sigma$-chain product $\autom'_1 \odot \dots \odot \autom'_n$ the accepting run $\tau$ \[\seq{\left(q^0_1, (q^0_{1,2},q^0_2), \dots, (q^0_{1,n-1},\dots,q^0_{n-2,n-1}, q^0_{n-1}), q_n^0\right),\dots,\left(q^m_1, (q^m_{1,2},q^m_2), \dots, (q^m_{1,n-1},\dots,q^m_{n-2,n-1}, q^m_{n-1}), q_n^m\right)}\].
    By definition of the run it holds that:
    {\small
    \begin{equation}
        \delta'\left(\left(q^j_1, (q^j_{1,2},q^j_2),  \dots, (q^j_{1,n-1},\dots,q^j_{n-2,n-1}, q^j_{n-1}), q_n^j\right), a^{j+1}\right) = \left(q^{j+1}_1, (q^{j+1}_{1,2},q^{j+1}_2), \dots, (q^{j+1}_{1,n-1},\dots,q^{j+1}_{n-2,n-1}, q^{j+1}_{n-1}), q_n^{j+1}\right)
    \end{equation}
    }
    
     for every $j=0,\dots, m-1$ (where $\delta'$ is the transition function of the automaton $\autom'_1 \odot \dots \odot \autom'_n$).
    This means (by definition of $\Sigma$-chain product) that:
    \begin{enumerate}
        \item $\delta'_1(q_1^j,a^{j+1}) = q_1^{j+1}$ for every $j=0,\dots, m-1$
        \item $\delta'_i((q_{1,i}^j,\dots,q_{i-1,i}^j,q_i^j),(a^{j+1},(q_{1,i-1}^j,\dots,q_{i-2,i-1}^j,q_{i-1}^j))) = (q_{1,i}^{j+1},\dots,q_{i-1,i}^{j+1},q_i^{j+1})$\\ for every $i=2,\dots, n-1$, for every $j=0,\dots, m-1$
        \item $\delta'_n(q_n^j,(a^{j+1},(q_{1,n-1}^j,\dots,q_{n-2,n-1}^j,q_{n-1}^j))) = q_n^{j+1}$ for every $j=0,\dots, m-1$
  \end{enumerate}

    For every $i = 1,\dots, n-2$, we have to prove that it holds $q_i^j = q_{i,i+1}^j=\dots = q_{i,n-1}^j$ for every $j = 0, \dots, m$. We proceed by induction on $j$.

    \proofcase{{Base step ($j=0$).}} By definition of the automata $\autom_{i+1}, \dots,\autom_{n-1}$ we have that $q_i^0 = q_{i,i+1}^0 = \dots = q_{i,n-1}^0$ for every $i = 1,\dots, n-2$.

    \proofcase{{Inductive step.}} Suppose that for every $i = 1, \dots, n-2$ it holds that $q_i^{k} = q_{i,i+1}^{k} =\dots = q_{i,n-1}^{k}$ for every $k \leq j$. By point (2) we have that \[\delta'_i((q_{1,i}^j,\dots,q_{i-1,i}^j,q_i^j),(a^{j+1},(q_{1,i-1}^j,\dots,q_{i-2,i-1}^j,q_{i-1}^j))) = (q_{1,i}^{j+1},\dots,q_{i-1,i}^{j+1},q_i^{j+1})\] for every $i=2,\dots, n-1$. Using the inductive hypothesis we obtain (4) \[\delta'_i((q_{1}^j,\dots,q_{i-1}^j,q_i^j),(a^{j+1},(q_{1}^j,\dots,q_{i-2}^j,q_{i-1}^j))) = (q_{1,i}^{j+1},\dots,q_{i-1,i}^{j+1},q_i^{j+1})\] for every $i=2,\dots, n-1$.
    By (4) and by the property ($\star$), it holds (5): 
    \begin{itemize}
        \item[(a)] $ \delta_{i-1}' ((q_1^j, \dots, q_{i-1}^j), (a, (q_1^j, \dots, q_{i-2}^j))) = ({q_{1,i}^{j+1}},\dots,{q_{i-1,i}^{j+1}})$
        \item [(b)] $\delta_{i} (q_{i}^j, (a,q_1^j, \dots, q_{i-1}^j)) = q_i^{j+1}$
    \end{itemize}

    for every $i=2,\dots, n-1$. We now prove (by induction on $i$) that it holds (6)\[
        \delta_{i-1}' ((q_1^j, \dots, q_{i-1}^j), (a^{j+1}, (q_1^j, \dots, q_{i-2}^j))) = ({q_{1}^{j+1}},\dots,{q_{i-1}^{j+1}})
    \]
    for every $i = 2, \dots, n-1$.  The base step ($i = 2$) is verified by point (1). For the inductive step, assume that it holds $\delta_{i-1}' ((q_1^j, \dots, q_{i-1}^j), (a^{j+1}, (q_1^j, \dots, q_{i-2}^j))) = ({q_{1}^{j+1}},\dots,{q_{i-1}^{j+1}})$. Using point (5b) and the property ($\star$) we conclude that $\delta'_i((q_{1}^j,\dots,q_{i-1}^j,q_i^j),(a^{j+1},(q_{1}^j,\dots,q_{i-2}^j,q_{i-1}^j))) = (q_{1}^{j+1},\dots,q_{i-1}^{j+1},q_i^{j+1})$.

     By points (5a) and (6) it must hold that \(({q_{1}^{j+1}},\dots,{q_{i-1}^{j+1}}) = ({q_{1,i}^{j+1}},\dots,{q_{i-1,i}^{j+1}})\) for every $i = 2,\dots, n-1$.
     This implies that it holds $q_i^{j+1} = q_{i,i+1}^{j+1} =\dots = q_{i,n-1}^{j+1}$ for every $i = 1,\dots, n-2$.

     This concludes the proof that (7) for every $i = 1,\dots, n-2$ for every $j = 0, \dots, m$ it holds $q_i^j = q_{i,i+1}^j=\dots = q_{i,n-1}^j$.

    Notice that the run $\tau$ can be rewritten as: 
    \[\tau = \seq{\left(q^0_1, (q^0_{1},q^0_2), \dots, (q^0_{1},\dots, q^0_{n-1}), q_n^0\right),\dots,\left(q^m_1, (q^m_{1},q^m_2), \dots, (q^m_{1},\dots, q^m_{n-1}), q_n^m\right)}\]
    Moreover, using point (7), points 1-3 can be rewritten as follows.
    \begin{enumerate}
        \item[(1')] $\delta'_1(q_1^j,a^{j+1}) = q_1^{j+1}$ for every $j=0,\dots, m-1$
        \item[(2')] $\delta'_i((q_{1}^j,\dots,q_i^j),(a^{j+1},(q_{1}^j,\dots,q_{i-1}^j))) = (q_{1}^{j+1},\dots,q_i^{j+1})$ \\for every $i=2,\dots, n-1$, for every $j=0,\dots, m-1$
        \item[(3')] $\delta'_n(q_n^j,(a^{j+1},(q_{1}^j,\dots,q_{n-1}^j))) = q_n^{j+1}$ for every $j=0,\dots, m-1$
  \end{enumerate}

  By point (1') and by definition of $\autom_1'$ it holds (8) $\delta_1(q_1^j,a^{j+1}) = q_1^{j+1}$ for every $j=0,\dots, m-1$.

  By point (2') and the property ($\star$) we obtain (9) $\delta_i(q_i^j, (a^{j+1}, q_1^j, \dots, q_{i-1}^j)) = q^{j+1}_i$ for every $i=2,\dots, n-1$, for every $j=0,\dots, m-1$.

  Finally, by definition of $\autom'_n$ and by point (3'), it holds that (10) $\delta_n(q_n^j,(a^{j+1},q_{1}^j,\dots,q_{n-1}^j)) = q_n^{j+1}$ for every $j=0,\dots, m-1$.

By points (8),(9) and (10), the word $s$ induces on the cascade product $\autom_1 \circ \dots \circ \autom_n$ the accepting run $\seq{(q_1^0, \dots, q_n^0),\dots,(q_1^m,\dots,q_n^m)}$ and so $s \in \lang(\cascade)$.
    \end{claimproof}
\end{proof}

\corocascadeschains*
\begin{proof}
    Let $\lang \in \SCH[\PermutationResets]$. There exists a $\Sigma$-chain
    $\cascade$ of permutation-reset automata such that $\lang
    = \lang(\cascade)$. By \Cref{prop:chain:to:cascades} there exists
    a cascade $\cascade'$ of permutation-reset automata such that
    $\lang(\cascade) = \lang(\cascade') = \lang$. This implies that $\lang
    \in \C[\PermutationResets]$ and so it holds that
    $\SCH[\PermutationResets]\subseteq \C[\PermutationResets]$. The same
    holds also the classes $\Resets$, $\PureResets$, and $\Permutations$.

    Viceversa, let $\lang \in \C[\PermutationResets]$, then there exists
    a cascade $\cascade$ of permutation-reset automata such that $\lang
    = \lang(\cascade)$. By \Cref{teo:chainification} there exists
    a $\Sigma$-chain $\cascade'$ of permutation-reset automata such that
    $\lang(\cascade) = \lang(\cascade') = \lang$. This implies that $\lang
    \in \SCH[\PermutationResets]$ and so it holds that
    $\C[\PermutationResets] = \SCH[\PermutationResets]$. The same holds for
    the classes $\PureResets$ and $\Permutations$.

    Moreover, since we established that $\SCH[\PermutationResets]
    = \C[\PermutationResets]$, and by \cref{lemma:star:free}, a language
    $\lang$ is regular if and only if $\lang \in \C[\PermutationResets]$,
    we obtain that $\lang$ is regular if and only if $\lang \in
    \SCH[\PermutationResets]$.  Finally, since $\C[\Resets]$ is exactly the
    set of star-free languages and since $\SCH[\Resets] \subseteq
    \C[\Resets]$, we have that if $\lang \in \SCH[\Resets]$ then $\lang$ is
    star-free.
\end{proof}

\section{Proofs of \cref{sec:chains:reset}}

\heighthierarchylemma*
\begin{proof}
    The first item follows from the proof of \Cref{teo:chain:succinctness}.
    Consider now the second item. Assume by contradiction that there exists a $\Sigma$-chain $\cascade_h$ of reset automata of height $m <h$ recognizing the language $\lang_h$.
    By \Cref{prop:chain:to:cascades}, there exists a cascade $\cascade'_h$ of reset automata of height $m$ recognizing $\lang_h$. This is in contradiction with Theorem 5.8 of \cite{journalversionofstacs}. We conclude that such a $\Sigma$-chain $\cascade_h$ cannot exists.
\end{proof}

\lemmanoheight*
\begin{proof}
    In the proof of \cref{lemma:directprodct}, we established that  $\DP[\Resets, \set{a,b}]^2 = \DP[\Resets, \set{a,b}]^i$ for every $i > 2$. This means that every language definable by a direct product of $i$ reset automata over $\Sigma = \set{a,b}$ is also definable by a direct product of two reset automata over $\Sigma = \set{a,b}$.
\end{proof}

\lemmaGenWidthCollpase*
\begin{proof}
	Before proceeding with the proof, we rewrite the function $f$.
	\begin{claim}
    \label{functionf}
	It holds that 
	\begin{enumerate*}[label=(\roman*)]
	\item $f(1) = k+1$;
	\item for $i > 1$, $f(i) = (f(i-1)\cdot k)+1$.
	\end{enumerate*} 
	\end{claim}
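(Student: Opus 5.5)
The claim is a purely arithmetic rewriting of the closed form $f(i)=\frac{k^{i+1}-1}{k-1}$. I would verify both items directly from this formula, using only $k\ge 2$ so that the denominator $k-1$ is nonzero.

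For item (i), I substitute $i=1$:
\[
f(1)=\frac{k^2-1}{k-1}=\frac{(k-1)(k+1)}{k-1}=k+1 .
\]

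For item (ii), I fix $i>1$ and start from the right-hand side:
\[
f(i-1)\cdot k+1=\frac{k^{i}-1}{k-1}\cdot k+1=\frac{k^{i+1}-k+(k-1)}{k-1}=\frac{k^{i+1}-1}{k-1}=f(i).
\]
The only step that needs care is placing $1$ over the common denominator $k-1$. The terms $-k$ and $+(k-1)$ then combine to $-1$.

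An alternative, and a useful consistency check, is to recognise $f(i)=\sum_{j=0}^{i}k^j$ as a geometric sum. With this form, (i) is $1+k$. For (ii), multiplying $\sum_{j=0}^{i-1}k^j$ by $k$ shifts every exponent up by one, and adding $1$ restores the $j=0$ term. I do not expect any real obstacle here.

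The role of the claim is that it supplies the recursive form used later in the width-collapse argument. There, the $i$-th component of the new $\Sigma$-chain should need at most one state for each pair consisting of a state of the previous component (at most $f(i-1)$ of them) and a symbol of $\Sigma$, plus one extra state, giving $f(i-1)\cdot k+1$.
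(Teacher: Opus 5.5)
Your proof is correct and takes the same approach as the paper: both items are checked by direct algebraic manipulation of the closed form $f(i)=\frac{k^{i+1}-1}{k-1}$, and (ii) follows from rewriting it as $\frac{k^{i}-1}{k-1}\cdot k+1$. Your geometric-sum cross-check and your account of how the recursion gives the $|\Sigma|\cdot f(i-1)+1$ bound in the width-collapse proof both agree with the paper.
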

	\begin{claimproof}
    It holds that $f(1) = \frac{k^2-1}{k-1} = k+1$ and so point (i) is
    verified. Point (ii) is verified as follows:
	$$f(i) = \frac{k^{i+1}-1}{k-1} = \frac{k^i-1}{k-1}k+1
    = \frac{k^{(i-1)+1}-1}{k-1}k+1 = f(i-1) \cdot k + 1 $$
	\end{claimproof}

    We now proceed with the proof of the Lemma.  Let $\lang$ be accepted by
    some $\Sigma$-chain of reset automata $\cascade
    = \chain{\autom_1,\autom_2,\dots,\autom_h}$. We construct a new $\Sigma$-chain of
    reset automata $\cascade'
    = \chain{\autom'_1,\autom'_2,\dots,\autom'_h}$ where $\autom'_i$ has
    $f(i)=\frac{k^{i+1}-1}{k-1}$ states and such that $\lang(\cascade)
    = \lang(\cascade')$. The proof follows the same construction of Lemma
    5.9 of \cite{journalversionofstacs}. 

    We first show that, since each $\autom_i$ is a reset automaton, it can
    reach at most $|\Sigma_i|+1$ states, where $\Sigma_i$ is the input
    alphabet of $\autom_i$.

    \begin{claim}
    \label{claim:f}
      For every $i = 1, \dots, h$, it holds that $\autom_i$ can reach at
      most $|\Sigma_i|+1$ states.
    \end{claim}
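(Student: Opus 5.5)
The claim to prove is Claim~\ref{claim:f}: a reset automaton $\autom_i$ over input alphabet $\Sigma_i$ can reach at most $|\Sigma_i|+1$ distinct states from its initial state. The plan is to characterise the reachable states directly from the structure of reset automata. By \cref{def:classes}, every symbol $x \in \Sigma_i$ induces a function $\tau_x$ that is either the identity or a reset on some state $q_x$.

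Write $R_i \subseteq \Sigma_i$ for the set of symbols inducing a reset. I claim that the set of states of $\autom_i$ reachable from its initial state $q^0_i$ is contained in $\set{q^0_i} \cup \set{q_x \suchthat x \in R_i}$. I will argue by induction on the length of an input word $w \in \Sigma_i^*$, showing that $\delta_i^*(q^0_i, w)$ lies in this set.
\begin{enumerate*}[label=(\roman*)]
  \item For the empty word, the state reached is $q^0_i$ itself.
  \item For $wx$, let $q = \delta_i^*(q^0_i, w)$. If $\tau_x$ is the identity, the state reached is $q$, which lies in the set by the inductive hypothesis. Otherwise the state reached is $q_x$, which lies in the set by construction.
\end{enumerate*}
The set in question has at most $1 + |R_i| \le 1 + |\Sigma_i|$ elements, which gives the bound.

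This claim concerns only the behaviour of $\autom_i$ on words over its own alphabet $\Sigma_i$. In the $\Sigma$-chain product, $\autom_i$ reads only symbols of the form $(a, q)$ with $q$ a state of $\autom_{i-1}$. So its reachable states inside the product form a subset of the states reachable in isolation, and the bound carries over.

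There is no real obstacle in this step; the only care needed is that the bound counts reachable states rather than all of $Q_i$. This is exactly what will be used next in the proof of the proposition. After pruning the unreachable states of $\autom_{i-1}$, the effective alphabet of $\autom_i$ becomes $\Sigma$ times the reachable states of $\autom_{i-1}$, which has size at most $k\cdot f(i-1)$. Applying the claim then gives at most $k\cdot f(i-1)+1 = f(i)$ reachable states for $\autom_i$, matching Claim~\ref{functionf}; the base case is $f(1) = k+1$.
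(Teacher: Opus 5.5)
Your proof is correct and follows the paper's approach: identity symbols never produce a new state, so every reachable state is either the initial state or the target of one of the at most $|\Sigma_i|$ reset symbols. Your induction on word length makes the paper's informal ``worst case'' argument explicit, and your remark that running inside the $\Sigma$-chain product can only shrink the reachable set is a harmless addition.
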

    \begin{proof}
      Since $\autom_i$ is a reset automaton, for each $a \in \Sigma_i$,
      $\tau_a$ is either a reset function on a state $q_a$ or the identity
      function. Since symbols that induce the identity function do not
      contribute to the reachability of new states, the worst case is the
      one in which each $a \in \Sigma_i$ induces a reset function on
      a different state $q_a$. In this case, in addition to the initial
      state of $\autom_i$, there are $|\Sigma|$ other states that are
      reachable, and thus $\autom_i$ can reach at most $|\Sigma_i|+1$
      states.
    \end{proof}

    In the automaton obtained by applying the $\Sigma$-chain product to
    $\chain{\autom_1,\autom_2,\dots,\autom_h}$, the states of
    $\autom_{i-1}$ (for $1< i\le h$) that are not reachable in
    $\autom_{i-1}$ cannot contribute for the reachability of a state in
    $\autom_{i}$, that is, any transition $(a,q)$, such that $a \in \Sigma$
    and $q \in Q_{i-1}$ is not reachable in $\autom_{i-1}$, is never taken
    $\autom_{i}$.  This implies that we can modify each component
    $\autom_i$ of the starting $\Sigma$-chain $\cascade$ into a new one
    $\autom'_i$ by restricting it only to symbols $(a,q)$ such that $q$ is
    reachable in $\autom_{i-1}$. By induction and by~\cref{claim:f}, it
    holds that the number of states reachable in $\autom'_i$ is at most
    $(|\Sigma| \cdot f(i-1))+1$ which, by~\cref{functionf}, is exactly
    $f(i)$.

    The new $\Sigma$-chain $\cascade' = \chain{\autom'_1, \dots,
    \autom'_{h}}$ obtained in this way is such that $\lang(\cascade)
    = \lang(\cascade')$ and $\autom'_i$ has $f(i)$ states, for $i
    = 1,\dots, h$.
\end{proof}

\lemmagenpiecewiselanguages*
\begin{proof}
  The proof of the lemma is a proof by induction on $m$. For the base case,
  we use the construction of \cref{claim:lj}. For the inductive step, we
  use the construction of \cref{claim:inductive}.
  \begin{claim}
  \label{claim:lj}
    Let $s \in \Sigma^+$. Let $I$ a subset of $\Sigma$. There exists
    a $\Sigma$-chain $\cascade$ of two-state reset automata such that:
    \begin{itemize}
      \item $\lang(\cascade) = \Sigma^* s I^*$;
      \item $\cascade$ has height $|s|$.
    \end{itemize}
  \end{claim}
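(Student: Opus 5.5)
We build the chain explicitly, one two-state component per letter of $s$, and then prove an invariant by induction on the length of the input word. Write $s = a_1 a_2 \cdots a_n$ with $n = |s| \geq 1$. For every $1\le j\le n$, the component $\autom_j$ has states $\set{p^j_0, p^j_1}$ with initial state $p^j_0$. The intended meaning is that, after reading a word $w$, $\autom_j$ is in $p^j_1$ iff $w$ ends with $a_1\cdots a_j$, for $j<n$; and $\autom_n$ is in $p^n_1$ iff $w \in \Sigma^* s I^*$.

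The construction is as follows.
\begin{enumerate*}[label=(\roman*)]
  \item For $1 \le j < n$: the symbol $a_j$ (if $j=1$), or the symbol $(a_j, p^{j-1}_1)$ (if $j>1$), induces a reset on $p^j_1$. Every other symbol induces a reset on $p^j_0$. These components are in fact pure-reset, and we take all their states as final, so they never constrain acceptance.
  \item For the last component $\autom_n$, reading either $a_n$ (if $n=1$) or pairs $(b,q)$: the symbol $(a_n, p^{n-1}_1)$ (resp.\ $a_n$ when $n=1$) induces a reset on $p^n_1$. Any other symbol whose $\Sigma$-part $b$ lies in $I$ induces the identity. Any remaining symbol induces a reset on $p^n_0$. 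The only final state of $\autom_n$ is $p^n_1$.
\end{enumerate*}
Every component is thus a two-state reset automaton, and the height is $n=|s|$. In the case $a_n \in I$, the symbol $(a_n,p^{n-1}_0)$ is treated as an identity; this is consistent with the definition above.

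The first invariant, for $j<n$, is proved by induction on $j$ and, inside that, on $|w|$. Since every symbol of $\autom_j$ is a reset, the state after $w = w'b$ depends only on $b$ and on the state of $\autom_{j-1}$ after $w'$. By the induction hypothesis on $j-1$, $\autom_j$ is in $p^j_1$ iff $b=a_j$ and $w'$ ends with $a_1\cdots a_{j-1}$, that is, iff $w$ ends with $a_1\cdots a_j$. For $\autom_n$, the same argument shows that the reset to $p^n_1$ fires exactly at positions where an occurrence of $s$ ends. Hence $\autom_n$ is in $p^n_1$ after $w$ iff some occurrence of $s$ ends at a position $t$ and every later letter either belongs to $I$ or is itself the end of an occurrence of $s$.

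The main obstacle is the last step: matching this characterization with $\Sigma^* s I^*$, because the identity and reset behaviours of $\autom_n$ overlap on the letter $a_n$. If $w = u s v$ with $v \in I^*$, choose the decomposition in which this occurrence of $s$ is the last one. After it, each letter of $v$ either induces the identity (it is in $I$) or resets again to $p^n_1$, so $\autom_n$ ends in $p^n_1$. Conversely, suppose $\autom_n$ ends in $p^n_1$, and take the last position at which the reset to $p^n_1$ fired. That position ends an occurrence of $s$. Every subsequent letter did not reset $\autom_n$ to $p^n_0$, and it did not fire the reset to $p^n_1$ either, since this was the last such position. So it induced the identity, which means its $\Sigma$-part lies in $I$. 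Therefore $w \in \Sigma^* s I^*$.

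Since the final-state sets of $\autom_1,\dots,\autom_{n-1}$ are all of their states, acceptance of the $\Sigma$-chain product is determined solely by $\autom_n$. This yields $\lang(\cascade)=\Sigma^* s I^*$.
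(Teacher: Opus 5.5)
Your proposal is correct and follows essentially the same route as the paper. Your chain is exactly the paper's construction up to reversing the indices (the paper writes $s = a_n \cdots a_1$ and orders the components $\autom_n, \dots, \autom_1$), and you use the same inductive invariant for the intermediate components; the only difference is the final step, where your ``last reset to $p^n_1$'' argument treats $a_n \in I$ and $a_n \notin I$ uniformly instead of the paper's case split and sub-cases on later occurrences of $s$.
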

  \begin{proof}
        Let $s$ be the word $a_n \dots a_1$ (where $n$ is a positive integer).
        We show that there exists a $\Sigma$-chain of two-states reset automata $\cascade_{a_n,\dots,a_1} = \chain{\autom_n, \dots, \autom_1}$ such that $\lang(\cascade_{a_n,\dots,a_1}) = \Sigma^* a_n \dots a_1 I^*$.
        The automata of the $\Sigma$-chain are defined as follows:
        \begin{itemize}
            \item if $n=1$, then  $\autom_1 = \seq{\Sigma, \set{q_0^1, q_1^1}, \delta^1, q_0^1, \set{q_1^1}}$ where:
                \begin{itemize}
                    \item the function $\tau_{a_1}^{\autom_{1}}$ is a reset on state $q_1^1$;
                    \item for every $a \in I \setminus \set{a_1}$, the function  $\tau_a^{\autom_{1}}$ is the identity function;
                    \item for every $a \in \Sigma \setminus (I \cup \set{a_1})$, the function $\tau_{a_1}^{\autom_{1}}$ is a reset on state $q_0^1$;
                \end{itemize}

            \item if $n>1$, then  $\autom_1 = \seq{\Sigma \times \set{q_0^2, q_1^2}, \set{q_0^1, q_1^1}, \delta^1, q_0^1, \set{q_1^1}}$ where:
                \begin{itemize}
                    \item the function $\tau_{(a_1,q_1^2)}^{\autom_{1}}$ is a reset on state $q_1^1$;
                    \item for every $a \in I \setminus \set{a_1}$, the function  $\tau_{(a,q_1^2)}^{\autom_{1}}$ is the identity function;
                    \item for every $a \in I$, the function  $\tau_{(a,q_0^2)}^{\autom_{1}}$ is the identity function;

                    \item for every $a \in \Sigma \setminus (I \cup \set{a_1})$, the function $\tau_{(a,q_1^2)}^{\autom_{1}}$ is a reset on state $q_0^1$;
                    \item for every $a \not\in I$, the function  $\tau_{(a,q_0^2)}^{\autom_{1}}$ is a reset on state $q_0^1$;
                \end{itemize}

            \item for $i = 2,\dots,n-1$ we define $\autom_i$  as $\autom_i = \seq{\Sigma\times \set{q_0^{i+1}, q_1^{i+1}}, \set{q_0^i, q_1^i}, \delta^i, q_0^i, \set{q_0^i, q_1^i}}$ where:
            \begin{itemize}
                    \item the function $\tau_{(a_i, q_1^{i+1})}^{\autom_{i}}$ is a reset on state $q_1^i$;
                    \item every other symbol induces a reset function on state $q_0^i$;
            \end{itemize}

            \item if $n>1$, then we define $\autom_n$  as $\autom_n = \seq{\Sigma, \set{q_0^n, q_1^n}, \delta^n, q_0^n, \set{q_0^n, q_1^n}}$ where:
            \begin{itemize}
                    \item the function $\tau_{a_n}^{\autom_{n}}$ is a reset on state $q_1^n$;
                    \item every other symbol induces a reset function on state $q_0^n$.
            \end{itemize}
        \end{itemize}

        First, notice that $\cascade_{a_n\dots a_1}$  is a $\Sigma$-chain
        of two-state reset automata of height $n$.  Now, we have to show
        that $\lang(\cascade_{a_n\dots a_1}) = \Sigma^* a_n \dots a_1 I^*$.
        We divide in cases, depending on the value of $n$.

        \proofcase{{Case $n = 1$.}} Using Theorem 4.7 of \cite{DBLP:conf/stacs/BorelliGMM25}, we obtain that $\lang(\cascade_{a_{n=1}}) =\lang(\autom_1) = \Sigma^* a_1 (I \setminus \set{a_1})^* $. In the case $a_1 \not\in I$, we trivially obtain that $\lang(\autom_1) = \Sigma^* a_1 I^*$. In the case $a_1 \in I$, let $w \in \Sigma^* a_1 (I \setminus \set{a_1})^*$, then $w \in \Sigma^* a_1 I^*$; vice-versa let $w \in \Sigma^* a_1 I^*$, there exists a partition of $w = w_1 \cdot w_2$ such $w_1$ contains the last occurrence of $a_1$ and so $w_1 \in \Sigma^* a_1$ and $w_2 \in (I \setminus \set{a_1})^*$ which implies $w \in \Sigma^* a_1 (I \setminus \set{a_1})^*$. This proves that even in the case $a_1 \in I$ it holds $\lang(\autom_1) = \Sigma^* a_1 I^*$.

        \proofcase{{Case $n > 1$.}} First, we prove that for $i = n,\dots,2$ the automaton $\autom_i$ is in state $q_1^i$ if and only if the $\Sigma$-chain has read a word of type $w \cdot a_n \dots a_i$ where $w \in \Sigma^*$. We proceed by induction on $i$.

        \proofcase{{Base step $(i = n)$.}} Using Theorem 4.7 of \cite{DBLP:conf/stacs/BorelliGMM25} and the definition of $\autom_n$ it holds that $\lang_{q_1^n}(\autom_n) = \Sigma^* a_n$.

        \proofcase{{Inductive step.}} For an index $2 < i \leq n$, assume that it holds that the automaton $\autom_i$ is in state $q_1^i$ if and only if the $\Sigma$-chain has read a word of type $w \cdot a_n \dots a_i$ where $w \in \Sigma^*$. By definition of the automaton $\autom_{i-1}$, (i) the function $\tau_{(a_{i-1}, q_1^{i})}^{\autom_{i-1}}$ is a reset on state $q_1^{i-1}$ and (ii) every other symbol induces a reset function on state $q_0^{i-1}$.
        This means that $\autom_{i-1}$ arrives in state $q_1^{i-1}$ if and only if $\autom_{i}$ is in state $q_1^{i}$ and  the input symbol is $a_{i-1}$.
        Since $\autom_{i}$ is in state $q_1^{i}$ if and only if the input is of type $w \cdot a_n \dots a_i$ where $w \in \Sigma^*$, we obtain that  $\autom_{i-1}$ arrives in state $q_1^{i-1}$ if and only if the input is of type $w \cdot a_n \dots a_i \cdot {a_{i-1}}$ where $w \in \Sigma^*$. This proves the inductive step.

        By the above reasoning we obtain that the automaton $\autom_2$ is in state $q_1^2$ if and only if the $\Sigma$-chain has read a word of type $w \cdot a_n \dots a_2$ where $w \in \Sigma^*$. We now prove that the automaton $\autom_1$ is in state $q_1^1$ if and only if  the $\Sigma$-chain has read a word of type $w_1 \cdot a_n \dots a_1 \cdot w_2$ where $w_1 \in \Sigma^*$ and $w_2 \in I^*$.
        The only way of entering the state $q_1^1$ is through the reset symbol $(a_1, q_{1}^2)$. This means that the automaton $\autom_2$ must be in state $q_1^2$ (and this happens only if the the $\Sigma$-chain has read a word of type $w_1 \cdot a_n \dots a_2$ where $w_1 \in \Sigma^*$) and the input symbol must be $a_1$.
        We obtain that $\autom_1$ enters the state $q_1^1$ only when the $\Sigma$-chain reads a word of  $w_1 \cdot a_n \dots a_1$ where $w_1 \in \Sigma^*$.
        After arriving in state $q_1^1$, $\autom_1$ can remain in that state only if it reads symbols inducing the identity function. 
        Again, we divide in cases depending on whether or not it holds $a_1 \in I$.

        \proofcase{Case $a_1 \notin I$.} In this case, by definition of $\autom_1$, the set of symbols inducing the identity function is the set $\set{(a,q) : a \in I, q, \in \set{q_0^2,q_1^2}}$. This means that $\autom_1$, no matter the current state of $\autom_2$, can remain in the state $q_1^1$ reading (for any number of times) any of the symbols in $I$. This implies that $\autom_1$ is in state $q_1^1$ if and only if  the $\Sigma$-chain has read a word of type $w_1 \cdot a_n \dots a_1 \cdot w_2$ where $w_1 \in \Sigma^*$ and $w_2 \in I^*$.

        \proofcase{Case $a_1 \in I$.} In this case, by definition of $\autom_1$, the set of symbols inducing the identity function is the set $\set{(a,q) : a \in (I \setminus \set{a_1}), q, \in \set{q_0^2,q_1^2}} \cup \set{(a_1,q_0^2)}$.
        Consider words of type $w_1 \cdot a_n \dots a_1$ (where $w_1 \in \Sigma^*$), we study which types of words $w_2$ are such that the automaton $\autom_1$ is in state $q_1^1$ if and only if the $\Sigma$-chain has read a word of type $w_1 \cdot a_n \dots a_1 \cdot w_2$ where $w_1 \in \Sigma^*$. We divide in three sub-cases depending on the structure of $w_2$.
        
        \indent \proofcase{Sub-case 1. $a_n\dots a_2$ does not appear in $w_2$.} 
        In this case, after reading the prefix $w_1 \cdot a_n \dots a_1 $, the automaton $\autom_1$ is in state $q_1^1$ and the automaton $\autom_2$ will never transition to state $q_1^2$ (otherwise it means that the word $a_n\dots a_2$ appears in $w_2$ which is in contradiction with the hypothesis of this sub-case). The automaton $\autom_1$, no matter the current state of $\autom_2$, remains in its state $q_1^1$ if it reads  (for any number of times) any of the symbols in $I  \setminus \set{a_1}$. Moreover, $\autom_1$ also stays in state $q_1^1$, if it reads (for any number of times) the symbol $a_1$, provided that $\autom_2$ is in state $q_0^2$. Since $\autom_2$ will always be in state $q_0^2$, we obtain that the automaton $\autom_1$ is in state $q_1^1$ if and only if  the $\Sigma$-chain has read a word of type $w_1 \cdot a_n \dots a_1 \cdot w_2$ where $w_1 \in \Sigma^*$ and $w_2 \in I^*$.

        \indent \proofcase{Sub-case 2. $a_n\dots a_2$ appears in $w_2$ but $a_n\dots a_1$ does not appear in $w_2$.} In this case, after reading the prefix $w_1 \cdot a_n \dots a_1 $, the automaton $\autom_1$ is in state $q_1^1$. At some point, the automaton $\autom_2$ will transition to state $q_1^2$, however, the automaton $\autom_1$ will never read a symbol of type $(a_1,q_1^2)$ (otherwise it means that the word $a_n\dots a_1$ appears in $w_2$ which is in contradiction with the hypothesis of this sub-case). 
        In other terms, every time that $\autom_2$ is in state  $q_1^2$, it is guaranteed that the input symbol will not be $a_1$. Proceeding as in Sub-case 1, we obtain that $\autom_1$ stays in its state $q_1^1$ only if it reads (for any number of times) any symbol in $I \setminus \set{a_1}$, no matter the state of $\autom_2$, or it reads  the symbol $a_1$ and the state of $\autom_2$ is $q_0^2$. We obtain that  $\autom_1$ is in state $q_1^1$ if and only if  the $\Sigma$-chain has read a word of type $w_1 \cdot a_n \dots a_1 \cdot w_2$ where $w_1 \in \Sigma^*$ and $w_2 \in I^*$.

        \indent \proofcase{Sub-case 3. $a_n\dots a_1$ appears in $w_2$.} Let $\sigma$ be a word of type $w_1 \cdot a_n\dots a_1 \cdot w_2$ where $w_1 \in \Sigma^*$. In this case, we can partition $w_2$ as $w_2 = w_2^1 \cdot a_n\dots a_1 \cdot w^2_2$ where $a_n\dots a_1$ does not appear in $w_2^2$. This means that we can rewrite $\sigma = w_1 \cdot a_n\dots a_1 \cdot w_2$ as $\sigma = w'_1 \cdot  a_n\dots a_1 \cdot w'_2$ where $w_1' = w_1\cdot a_n\dots a_1\cdot w_2^1$ and $w'_2 = w_2^2$. It holds that (i) $w'_1 \in \Sigma^*$,  and (ii) $a_n\dots a_1$ does not appear in $w_2'$. This means that we can now apply sub-case 1 or sub-case 2. 
        
        In every case, we have proven that the automaton $\autom_1$ is in state $q_1^1$ if and only if  the $\Sigma$-chain has read a word of type $w_1 \cdot a_n \dots a_1 \cdot w_2$ where $w_1 \in \Sigma^*$ and $w_2 \in I^*$.
        Since all the states of the components $\autom_2, \dots, \autom_n$ are final states and the only final state of $\autom_1$ is the state $q_1^1$, it follows that the language accepted by the $\Sigma$-chain is the set of words that lead the automaton $\autom_1$ to the state $q_1^1$.
        We conclude that it holds $\lang(\cascade_{a_n\dots a_1}) = \Sigma^* a_n \dots a_1 I^*$.
    \end{proof}

    \begin{claim}
    \label{claim:inductive}
        For every $j = 1,\dots, m-1$, let $\cascade$ be a $\Sigma$-chain of two-states reset automata such that $\lang(\cascade) = \lang_{\sigma_1, \dots,\sigma_j}$ and $\cascade$ has height $|\sigma_1|+\dots +|\sigma_{j}|$. Let $\cascade'$ be a $\Sigma$-chain of two-states reset automata of height $|\sigma_{j+1}|$ such that $\lang(\cascade') = \Sigma^* \sigma_{j+1}\Sigma^*$.
        There exists a $\Sigma$-chain of two-states reset automata $\cascade''$ such that:
        \begin{itemize}
            \item $\lang(\cascade'') = \lang_{\sigma_1, \dots,\sigma_{j+1}}$;
            \item $\cascade''$ has height $|\sigma_1|+\dots +|\sigma_{j+1}|$.
        \end{itemize}
    \end{claim}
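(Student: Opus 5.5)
The plan is to stack the two $\Sigma$-chains one after the other: first $\cascade$, then a slightly modified copy of $\cascade'$. The first component of the copy will observe the last component of $\cascade$ and will start matching $\sigma_{j+1}$ only once $\cascade$ has detected $\sigma_1,\dots,\sigma_j$. The height is then $|\sigma_1|+\dots+|\sigma_{j+1}|$ by construction, and every component is still a two-state reset automaton.

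Inside the induction I would strengthen the hypothesis to the shape produced by \cref{claim:lj} and by this claim itself. That shape is: all components of $\cascade$ except the last have all their states final, and the last component $\automb$ of $\cascade$ (states $\set{p_0,p_1}$) is a \emph{latch}. By a latch I mean that $\automb$ is in $p_1$ after reading $u$ if and only if $u \in \lang_{\sigma_1,\dots,\sigma_j}$. Since $\lang_{\sigma_1,\dots,\sigma_j}$ is closed under appending arbitrary suffixes, once $p_1$ is reached it is never left. The base case is \cref{claim:lj} with $I=\Sigma$. In that case the last component enters $q_1^1$ exactly after a prefix in $\Sigma^*s$, and every symbol it can read while in $q_1^1$ is the identity or a reset on $q_1^1$. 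So the state of the last component alone encodes membership.

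Concretely, write $\cascade'=\chain{\autom_n,\dots,\autom_1}$ as in \cref{claim:lj}, with $\sigma_{j+1}=a_n\cdots a_1$ and $I=\Sigma$. I would replace $\autom_n$ by $\autom''_n$ over $\Sigma\times\set{p_0,p_1}$. In $\autom''_n$, the symbol $(a_n,p_1)$ resets to $q_1^n$, and every other symbol resets to $q_0^n$. The components $\autom_{n-1},\dots,\autom_1$ are kept unchanged. This defines $\cascade''$ as the concatenation of $\cascade$ with this modified copy, and its final states are as in the construction. Since $\cascade$ accepts in the latch state, I would make its last component's final set $\set{p_0,p_1}$ in $\cascade''$. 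Acceptance is then decided only by the last component of the copy.

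For correctness I would repeat the induction from \cref{claim:lj}. Component $\autom''_n$ is in $q_1^n$ after reading $u a$ if and only if $a=a_n$ and $u\in\lang_{\sigma_1,\dots,\sigma_j}$. The point is that the product reads the predecessor's state \emph{before} the current symbol, so this symbol $a_n$ lies strictly after a prefix already in the language. Propagating this down the copy as in \cref{claim:lj}, the last component is in $q_1^1$ exactly after words of the form $u\,a_n\cdots a_1\,w$ with $u\in\lang_{\sigma_1,\dots,\sigma_j}$ and $w\in\Sigma^*$. This set is $\lang_{\sigma_1,\dots,\sigma_{j+1}}$.

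The main obstacle is the subcase analysis for the latched last component, already delicate in \cref{claim:lj} when $a_1\in I$; here $I=\Sigma$. I would handle it as before: after entering $q_1^1$, every symbol is either the identity or a reset on $q_1^1$, so the state stays $q_1^1$. The same argument shows that the new last component is again a latch, which is the strengthened invariant needed to close the induction.
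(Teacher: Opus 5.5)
\textbf{Verdict.} Your construction is the paper's, but it breaks when $|\sigma_{j+1}|=1$. You stack $\cascade$ with the chain of \cref{claim:lj} for $s=\sigma_{j+1}$ and $I=\Sigma$. You gate the head of the copy on the last component of $\cascade$ and make that component all-final. You also carry the same shape invariant the paper assumes in its points (1)--(4). For $|\sigma_{j+1}|\ge 2$ your $\autom''_n$ coincides with the paper's $\autom'_{n'}$: it resets to $q_0$ on every $(a,p_0)$ and behaves as the original head on every $(a,p_1)$. The lower components are untouched. Your component-wise induction then substitutes fine for the paper's argument, which is that the copy sits in its initial configuration until the latch fires and then simulates $\cascade'$ on the remaining suffix.

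\textbf{The gap.} When $|\sigma_{j+1}|=1$, the copy has a single component. It is at the same time the gated head and the accepting last component. Your rule that every symbol other than $(a_1,p_1)$ resets to $q_0^1$ overwrites the identities that \cref{claim:lj} (with $I=\Sigma$) places on the letters $a\neq a_1$. The component is then no longer a latch: it is in $q_1^1$ exactly after words in $\lang_{\sigma_1,\dots,\sigma_j}\cdot a_1$, rather than $\lang_{\sigma_1,\dots,\sigma_j}\Sigma^* a_1\Sigma^*$.

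\textbf{Counterexample.} Take $\Sigma=\set{a,b}$ and $\sigma_1=\sigma_2=a$. The word $aab$ lies in $\Sigma^*a\Sigma^*a\Sigma^*$, but your chain rejects it: the final $b$ is read as $(b,p_1)$ and resets the last component to $q_0^1$. So your closing assertion fails in this case, namely that after entering $q_1^1$ every symbol is the identity or a reset on $q_1^1$. This case is not marginal. It is exactly the one needed for \cref{lemma:k:piecewise}, where all $|\sigma_i|=1$, and for the witness $\Sigma^*a\Sigma^*a\Sigma^*$ in \cref{lemma:directprodct}.

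\textbf{Repair.} Use the paper's uniform definition of the gated head. Let every $(a,p_0)$ reset to $q_0^n$, and let every $(a,p_1)$ induce $\tau_a$ of the original head $\autom_n$ of $\cascade'$. For $n\ge 2$ this is exactly your automaton. For $n=1$ it keeps the identities on $(a,p_1)$ with $a\neq a_1$, and your latch argument then goes through in all cases.
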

    \begin{claimproof}
    Let $n$ = $|\sigma_1|+ \dots+|\sigma_j|$ and let $n' = |\sigma_{j+1}|$. Let $\cascade = \chain{\automb_1,\dots, \automb_{n}}$ and let $\cascade'$ be the $\Sigma$-chain of \cref{claim:lj} where $s = \sigma_{j+1}$ and $I = \Sigma$. $\cascade'$ is of the form $\cascade' = \chain{\autom_{n'},\dots, \autom_1}$.
    Assume that:
    \begin{enumerate}
        \item the set of states of $\automb_{n}$ is of the form $\set{s_0^n,s_1^n}$ ;
        \item the only final state of $\automb_{n}$ is the state ${s_1^n}$ ;
        \item for every $i = 0,\dots,n-1$, all the states of $\automb_{i}$ are set as final ones.
    \end{enumerate}
    Points (2) and (3) imply that (4) the language of $\cascade$ is the set of words that lead the automaton $\autom_n$ to the state ${s_1^n}$.

    Let $\autom_{n'}'$ be a new automaton such that:
    \begin{itemize}
        \item the set of states of $\autom_{n'}'$ is the set of states of $\autom_{n'}$ (\ie the set $\set{q_0^{n'}, q_1^{n'}}$) ;
        \item the input alphabet of $\autom_{n'}'$  is $\Sigma \times \set{s_0^n, s_1^n}$ ;
        \item the initial state of $\autom_{n'}'$ is the initial state of $\autom_{n'}$ ;
        \item the set of final states of $\autom_{n'}'$ is the set of final states of $\autom_{n'}$ ;
        \item for every $a \in \Sigma$, the function $\tau^{\autom_{n'}'}_{(a,s_0^n)}$ is a reset function on state $q_0^{n'}$ ;
        \item for every $a \in \Sigma$, the function $\tau^{\autom_{n'}'}_{(a,s_1^n)}$ corresponds to the function  $\tau^{\autom_{n'}}_{a}$.
    \end{itemize}
    Moreover, let $\automb'_n$ be a copy of the automaton $\automb_n$ where  all states are set as final ones.
    We prove that the $\Sigma$-chain $\cascade''=\chain{\automb_1,\dots,\automb_{n-1},\automb'_{n}, \autom'_{n'}, \autom_{n'-1}, \dots, \autom_1}$ recognizes the language $\lang_{\sigma_1,\dots, \sigma_{j+1}} = \Sigma^* \sigma_1 \Sigma^* \dots \Sigma^* \sigma_{j+1} \Sigma^*$.

    By point (4) and by definition of $\lang_{\sigma_1,\dots,\sigma_j}$ it follows that (5) the automaton $\automb'_{n}$ transitions from state  $s_0^{n}$ to state $s_1^{n}$ if and only if the $\Sigma$-chain $\cascade$ reads a word of type $w = w_1 \cdot \sigma_1\cdot w_2 \dots w_{j} \cdot \sigma_{j}$ where $w_i \in \Sigma^*$ for $i=1,\dots,j$.
    Moreover, by point (4) and by definition of $\lang_{\sigma_1,\dots,\sigma_j}$ it follows that (6) after that the $\Sigma$-chain has read the word $w$, the automaton $\automb'_{n}$ remains in the state $s_1^{n}$ for every possible continuation $w_{j+1} \in \Sigma^*$ of the word $w$.

    By the construction of the $\Sigma$-chain $\cascade'$ (\cf \cref{claim:lj}), it holds that (7) for each $i = n'-1,\dots,1$, the automaton $\autom_i$ remains in its initial state $q_0^i$ as long as the automaton $\autom_{i+1}$ is in its initial state $q_0^{i+1}$.  
    By construction, (8) the automaton $\autom'_{n'}$ remains in its initial state $q_0^{n'}$ as long as $\automb'_n$ is in its initial state $s_0^n$.      
    By (7) and (8), it follows by induction on the levels of the $\Sigma$-chain that  
    (9) for every prefix of the input during which $\automb'_n$ is in state $s_0^n$, the sub-$\Sigma$-chain $\chain{\autom'_{n'}, \autom_{n'-1}, \dots, \autom_1}$ remains in its initial configuration, \ie, each component $\autom_i$ is in its initial state $q_0^i$.
    Moreover, (10) once $\automb'_n$ reaches state $s_1^n$, the sub-$\Sigma$-chain $\chain{\autom'_{n'}, \autom_{n'-1}, \dots, \autom_1}$ induces the same sequence of states as the $\Sigma$-chain $\cascade'$. Indeed, by point (6), after this transition the automaton $\autom'_{n'}$ only reads symbols of the form $(a, s_1^n)$; by construction, for every $a \in \Sigma$, the transition function $\tau^{\autom'_{n'}}_{(a,s_1^n)}$ coincides with $\tau^{\autom_{n'}}_{a}$, and all the remaining components $\autom_{n'-1}, \dots, \autom_1$ are identical to those of $\cascade'$. Therefore, the two chains induce the same computation on every continuation of the input.
    Since $\cascade'$ (and so the sub-$\Sigma$-chain $\chain{\autom'_{n'}, \autom_{n'-1}, \dots, \autom_1}$) is in its final state only if it reads a word of type $w_{j+1} \cdot \sigma_{j+1} \cdot w_{j+2}$ where $w_{j+1},w_{j+2} \in \Sigma^*$, by points (5),(9) and (10) it follows that the full $\Sigma$-chain $\cascade''$ is in its final state only if it reads a word of type $w_1 \cdot \sigma_1\cdot w_2 \dots w_{j} \cdot \sigma_{j} \cdot w_{j+1} \cdot \sigma_{j+1} \cdot w_{j+2}$ where $w_i \in \Sigma^*$ for $i=1,\dots,j+2$.
    We conclude that $\lang(\cascade'') = \lang_{\sigma_1,\dots, \sigma_{j+1}} = \Sigma^* \sigma_1 \Sigma^* \dots \Sigma^* \sigma_{j+1} \Sigma^*$ and $\cascade''$ is a $\Sigma$-chain of two-states reset automata $\cascade''$ of height $|\sigma_1|+\dots +|\sigma_{j+1}|$.

    Finally, notice that:
    \begin{enumerate}
        \item the set of states of the last level is of the form $\set{t_0^{n+n'},t_1^{n+n'}}$ ;
        \item the only final state of the last level is the state ${t_1^{n+n'}}$ ;
        \item all the states of all the other levels are final ones.
    \end{enumerate}

    These three points ensure that we can iterate the construction of the claim for every $j = 1,\dots, m$.
    
    \end{claimproof}
    
\end{proof}

\lemmakpiecewise*
\begin{proof}
    Let $\sigma = a_1\dots a_n \in \Sigma^*$ be a word. We define
    $\lang_\sigma \subseteq\Sigma^*$ as the language $\lang_\sigma
    = \Sigma^* a_1 \Sigma^* \dots \Sigma^* a_n \Sigma^*$.  Let $\lang$ be
    a $k$-piecewise-testable language.  By \Cref{def:piecewise}, $\lang$
    is a boolean combination of languages of type $\lang_{\sigma}$ with
    $|\sigma|\leq k$.  By \Cref{lemma:gen:piecewise:languages}, each
    language $\lang_{\sigma}$ is recognized by a $\Sigma$-chain of
    two-states reset automata of height $|\sigma| \leq k$.  It follows that
    $\lang$ is recognized by a boolean combination of $\Sigma$-chains of
    two-states reset automata of height at most $k$.
\end{proof}

\lemmagentriviallanguages*
\begin{proof}
    We prove the claim of the lemma by induction on $m$.
    
    \proofcase{Base step ($m=0$).}
    Let  $\autom$ be the automaton $\autom = \seq{\Sigma, \set{q_0, q_1}, \delta_1, q_0, \set{q_0}}$ such that:
        \begin{itemize}
            \item $\tau_a^{\autom} = id$ for every $a \in I_0$;
            \item the function $\tau_a^{\autom}$ is a reset on state $q_1$ for every $a \in \Sigma\setminus I_0$.
        \end{itemize}  It holds that $\lang(\autom) = I_0^* = \lang^{I_0}$.

    \proofcase{Base step ($m=1$).} Let $\automb$ be the automaton $\automb = \seq{\Sigma \times \set{q_0,q_1}, \set{s_0, s_1}, \delta_2, s_0, \set{s_1}}$ such that:
    \begin{itemize}
        \item $\tau_{(a,q_0)}^{\automb}$ is a reset on state $s_1$ for every $a \in B_1$;
        \item $\tau_{(a,q)}^{\automb} = id$ for every $(a,q) \in I_B = (I_1 \times   \set{q_0,q_1}) \setminus \set{(a',q_0):a' \in B_1}$;
        \item every other symbol induces a reset function on state $s_0$. 
    \end{itemize}
    Let $\autom'$ be the automaton $\autom$ (defined in the case $m=0$) with all states set as final. Let $\cascade$ be the $\Sigma$-chain $\cascade=\chain{\autom', \automb}$.

    Let $\sigma = a_1 \dots a_n$ be a word. Let $\tau = \seq{q^0, \dots, q^{n-1}}$ be the run on the automaton $\autom'$ induced by the word $a_1\dots a_{n-1}$.   The automaton $\automb$ reads the word $\sigma \vert\vert \tau'  = (a_1, q^0), \dots, (a_n, q^{n-1})$.
    Since all the states of $\autom'$ are final, the word $\sigma$ is accepted by the $\Sigma$-chain $\cascade$ if and only if the word $\sigma \vert\vert \tau'$ is accepted by the automaton $\automb$.
    Since $\automb$ is a reset automaton, $\sigma\vert \vert \tau$ is accepted by $\automb$ if and only if $\sigma\vert \vert \tau \in {\Sigma'}^*\cdot R_1\cdot  I_B^*$ where $\Sigma' = \Sigma \times \set{q_0, q_1}$ and $R_1$ is the set of symbols inducing a reset on the only final state $s_1$ (notice that $R_1 = B_1 \times \set{q_0}$). We prove that $\sigma\vert \vert \tau \in {\Sigma'}^*\cdot R_1\cdot  I_B^*$ if and only if $\sigma \in I_0^* \cdot B_1 \cdot I_1^*$.
    We first prove the left-to-right direction. Let $\sigma\vert \vert \tau \in {\Sigma'}^*\cdot R_1\cdot  I_B^*$. This implies that:
    \begin{itemize}
        \item there exists an index $1 \leq h \leq n$ such that $(a_h, q^{h-1}) \in R_1$;
        \item for all $k > h$, it holds $(a_k,q^{k-1}) \in I_B$.
    \end{itemize}
    The automaton $\automb$ takes the transition labeled by $(a_{h}, q^{h-1})$, only if the input symbol is $a_{h}$ and the automaton $\autom'$ is in state $q^{h-1}$. By definition of the set $R_1$, we have that $a_{h} \in B_1$ and that $q^{{h}-1} = q_0$.
    This implies that the word $a_1\dots a_h$ is in the language $\lang_{q_0}(\autom') \cdot B_1 = I_0^* \cdot B_1$. 
    Since $I_B \subseteq I_1 \times \set{q_0,q_1}$, for all $k > h$, it holds $a_k \in I_1$. This implies that $\sigma = a_1\dots a_n \in I_0^* \cdot B_1 \cdot I_1^*$.

    Vice-versa, let  $\sigma = a_1\dots a_n \in I_0^* \cdot B_1 \cdot I_1^*$.
    This means that:
    \begin{itemize}
        \item there exists an index $1 \leq h \leq n$ such that $a_h \in B_1$;
        \item for all $1 \leq k < h$, it holds $a_k \in I_0$;
        \item for all $h < k \leq n$, it holds $a_k \in I_1$.
    \end{itemize}
    Let $h'$ be the largest index $h$ satisfying these three conditions. 
    Since for all $1 \leq k < h'$, it holds $a_k \in I_0$, we obtain (by definition of $\autom'$) that $\delta(q^{k-1},a^k) = q^k = q_0$ for all $1 \leq k < h'$. Since $q^0 = q_0$, we have that $q^k = q_0$ for every $0 \leq k \leq h'-1$. Since it holds that $a_{h'} \in B_1$ and $q^{h'-1} = q_0$ we have that $(a^{h'}, q^{h'-1}) \in R_1$.
    This means that $(q^0, a_1), \dots, (q^{h'-1}, a_{h'}) \in {\Sigma'}^* \cdot R_1$.
    For every $h' < k \leq n$, if $a_k \in I_1 \setminus B_1$ then we obtain that for every $q \in \set{q_0, q_1}$ the pair $(a_k,q)$ is in the set $(I_1 \setminus B_1) \times \set{q_0,q_1} \subseteq I_B$.
    If it holds $a_k \in I_1 \cap B_1$, then $(a_k,q_1) \in I_B$ and $(a_k, q_0) \notin I_B$. However, we prove that $(a_k, q_0)$ cannot appear in $\sigma \vert \vert \tau$ for every $k > h'$.
    Suppose, by contradiction that $(a_k, q^{k-1}=q_0)$ appears in $\sigma \vert \vert \tau$ for some $k > h'$. This means that $\delta(q^{k-2}, a^{k-1}) = q_0 = q^{k-1}$ for some $q^{k-2} \in \set{q_0, q_1}$. By definition of $\autom'$, it must be $q^{k-2}=q_0$ which implies that $a_{k-1} \in I_0$, moreover, it holds that $q^x= q_0$ for every $x = 0,\dots,k-1$   and so $a_1\dots a_{k-1} \in I_0^*$ for some $k > h'$. However, this would imply that $h'' = k$ satisfies the three conditions and it holds $h'' > h'$ which contradicts the maximality of  $h'$.
    We conclude that $\sigma \vert \vert \tau = (q^0, a_1), \dots, (q^{h'-1}, a_{h'}),\dots (q^{n-1},a_n) \in {\Sigma'}^* \cdot R_1 \cdot I_B^*$.

    This proves that $\lang(\chain{\autom',\automb}) = I_0^* \cdot B_1 \cdot I_1^* = \lang^{I_0,I_1}_{B_1}$.
    
    \proofcase{Inductive step.} Suppose that there exists a $\Sigma$-chain $\cascade = \chain{\autom_0,\dots,\autom_{j}}$ of two-states reset automata such that $\lang(\cascade) = I_0^*\cdot B_1 \cdot I_1^*  \dots   B_j \cdot I_j^* = \lang^{I_0,\dots, I_j}_{B_1,\dots,B_j}$. Let $Q_i = \set{q_0^i, q_1^i}$ be the set of states of $\autom_i$ for every $i = 0,\dots, j$. Assume that the only final state of $\autom_j$ is the state $q_1^j$ and that all the states of $\autom_i$ are final ones for every $i = 0,\dots, j-1$.

    We prove that there exists a two-state reset automaton $\autom_{j+1} = \seq{\Sigma \times Q_j,\set{q_0^{j+1}, q_1^{j+1}}, \delta^{j+1},q_0^{j+1},\set{q_1^{j+1}}}$ such that the $\Sigma$-chain of two-states reset automata $\cascade' = \chain{\autom_0,\dots, \autom_{j-1},\autom'_{j}, \autom_{j+1}}$ recognizes the language $\lang(\cascade') = \lang(\cascade) \cdot B_{j+1} \cdot I_{j+1}^* = \lang^{I_0,\dots, I_{j+1}}_{B_1,\dots,B_{j+1}}$ where $\autom'_j$ is the automaton $\autom_j$ with all the states set as final ones. The transition function of $\autom_{j+1}$ is built as follows:
    \begin{itemize}
        \item $\tau_{(a,q_1^j)}^{\autom_{j+1}}$ is a reset on state $q_1^{j+1}$ for every $a \in B_{j+1}$;
        \item $\tau_{(a,q)}^{\autom_{j+1}} = id$ for every $(a,q) \in I_B = (I_{j+1} \times   \set{q_0^{j},q_1^{j}}) \setminus \set{(a',q_1^j):a' \in B_{j+1}}$;
        \item every other symbol induces a reset function on state $q_{0}^{j+1 }$. 
    \end{itemize}

    Notice that language of $\autom'_j$ is a subset of $(\Sigma \times \set{q_0^{j-1}, q_1^{j-1}})^*$. For $i = 0,1$, we denote with $\lang'_{q^j_i}(\autom'_j)$ the subset of $\Sigma^*$ of words that leads the $\Sigma$-chain product $\autom_1\odot\dots \odot \autom_{j-1} \odot \autom'_j$ from the initial state $(q_0^0,\dots, q_{0}^j)$ to a state of the form $(q^0,\dots, q^{j-1}, q^j_1)$ where $(q^0,\dots, q^{j-1}) \in Q_0\times \dots \times Q_{j-1}$. Moreover we say that a word $\sigma \in \Sigma^*$ induces a run $\tau = \seq{s_0,\dots, s_i}$ on the automaton $\autom'_j$ if and only if the the word $\sigma$ induces on the $\Sigma$-chain product $\autom_1\odot\dots \odot \autom_{j-1} \odot \autom'_j$ the run $(t_0^0,\dots, t^{j-1}_0, s_0), \dots , (t^0_i,\dots, t^{j-1}_i, s_i)$ where $(q^0_k,\dots, q^{j-1}_k, s_k) \in Q_0\times \dots \times Q_{j}$ for every $k = 0,\dots, i$. Notice that, since the only final state of $\autom_j$ is the state $q_1^j$ and that all the states of $\autom_i$ are final ones for every $i = 0,\dots, j-1$, it holds that $\lang(\cascade) = \lang'_{q_1^j}(\autom'_j)$.

    The proof proceeds exactly like the case $m=1$.
    
    Let $\sigma = a_1 \dots a_n$ be a word. Let $\tau = \seq{q^0, \dots, q^{n-1}}$ be the run on the automaton $\autom_{j}'$ induced by the word $a_1\dots a_{n-1}$. The automaton $\autom_{j+1}$ reads the word $\sigma \vert\vert \tau'  = (a_1, q^0), \dots, (a_n, q^{n-1})$.
    Since all the states of $\autom_i$ are final for every $i = 0,\dots, j-1$ and also all the states of $\autom'_j$ are final, the word $\sigma$ is accepted by the $\Sigma$-chain $\cascade'$ if and only if the word $\sigma \vert\vert \tau'$ is accepted by the automaton $\autom_{j+1}$.
    Since $\autom_{j+1}$ is a reset automaton, $\sigma\vert \vert \tau$ is accepted by $\autom_{j+1}$ if and only if $\sigma\vert \vert \tau \in {\Sigma'}^*\cdot R_1\cdot  I_B^*$ where $\Sigma' = \Sigma \times \set{q_0^j, q_1^j}$ and $R_1$ is the set of symbols inducing a reset on the only final state $q_1^{j+1}$ (notice that $R_1 = B_{j+1} \times \set{q_1^j}$). We prove that $\sigma\vert \vert \tau \in {\Sigma'}^*\cdot R_1\cdot  I_B^*$ if and only if $\sigma \in \lang(\cascade) \cdot B_{j+1} \cdot I_{j+1}^*$.
    We first prove the left-to-right direction. Let $\sigma\vert \vert \tau \in {\Sigma'}^*\cdot R_1\cdot  I_B^*$. This implies that:
    \begin{itemize}
        \item there exists an index $1 \leq h \leq n$ such that $(a_h, q^{h-1}) \in R_1$;
        \item for all $k > h$, it holds $(a_k,q^{k-1}) \in I_B$.
    \end{itemize}
    The automaton $\autom_{j+1}$ takes the transition labeled by $(a_{h}, q^{h-1})$, only if the input symbol is $a_{h}$ and the automaton $\autom'_j$ is in state $q^{h-1}$. By definition of the set $R_1$, we have that $a_{h} \in B_{j+1}$ and that $q^{{h}-1} = q_1^j$.
    This implies that the word $a_1\dots a_h$ is in the language $\lang'_{q_1^j}(\autom'_j) \cdot B_{j+1} = \lang(\cascade) \cdot B_{j+1}$. 
    Since $I_B \subseteq I_{j+1} \times \set{q_0^j,q_1^j}$, for all $k > h$, it holds $a_k \in I_1$. This implies that $\sigma = a_1\dots a_n \in \lang(\cascade) \cdot B_{j+1} \cdot I_{j+1}^*$.

    Vice-versa, let  $\sigma = a_1\dots a_n \in \lang(\cascade) \cdot B_{j+1} \cdot I_{j+1}^*$.
    This means that:
    \begin{itemize}
        \item there exists an index $1 \leq h \leq n$ such that $a_h \in B_{j+1}$;
        \item $a_1 \dots a_{h-1} \in \lang(\cascade) = \lang'_{q_j^1}(\autom_j')$;
        \item for all $h < k \leq n$, it holds $a_k \in I_{j+1}$.
    \end{itemize}
    Let $h'$ be the largest index $h$ satisfying these three conditions. 
    Since $a_1 \dots a_{h-1} \in \lang'_{q_j^1}(\autom_j')$ we have that $q^{h'-1} = q_1^{j}$. Since it holds that $a_{h'} \in B_{j+1}$ and $q^{h'-1} = q_j^1$ we have that $(a^{h'}, q^{h'-1}) \in R_1$.
    This means that $(q^0, a_1), \dots, (q^{h'-1}, a_{h'}) \in {\Sigma'}^* \cdot R_1$.
    For every $h' < k \leq n$, if $a_k \in I_{j+1} \setminus B_{j+1}$ then we obtain that for every $q \in Q_j$ the pair $(a_k,q)$ is in the set $(I_{j+1} \setminus B_{j+1}) \times Q_j \subseteq I_B$.
    If it holds $a_k \in I_{j+1} \cap B_{j+1}$, then $(a_k,q_1^j) \notin I_B$ and $(a_k, q_0^j) \in I_B$. 
    However, we prove that $(a_k, q_1^j)$ cannot appear in $\sigma \vert \vert \tau$ for every $k > h'$. Suppose, by contradiction that $(a_k, q^{k-1}=q_1^{j})$ appears in $\sigma \vert \vert \tau$ for some $k > h'$. This means that $a_1\dots a_{k-1} \in \lang'_{q_j^1}(\autom_j') = \lang(\cascade)$ for some $k > h'$. However, this would imply that $h'' = k$ satisfies the three conditions and it holds $h'' > h'$ which contradicts the maximality of  $h'$. We conclude that $\sigma \vert \vert \tau = (q^0, a_1), \dots, (q^{h'-1}, a_{h'}),\dots (q^{n-1},a_n) \in {\Sigma'}^* \cdot R_1 \cdot I_B^*$.

    This proves that $\lang(\cascade') = \lang(\cascade) \cdot B_{j+1}\cdot I_{j+1}^* = \lang^{I_0,\dots, I_{j+1}}_{B1, \dots, B_{j+1}}$.
\end{proof}

\lemmartrivial*
\begin{proof}
     Let $\lang$ be a \rtrivial language. By \Cref{def:rtrivial}, $\lang$ is a finite union of languages of the form $\lang_{a_1,\dots,a_n}^{I_0,\dots, I_n} = I_0^* a_1 I_1^*\dots a_n I_n^*$ where $a_i \in \Sigma$ for $1 = 0,\dots, n$, $I_i \subseteq \Sigma \setminus_{a_{i+1}}$ for $i = 0,\dots, n-1$ and $I_n \subseteq \Sigma$.
     By \Cref{lemma:gen:trivial:languages}, each language $\lang_{a_1,\dots,a_n}^{I_0,\dots, I_n}$ is recognized by a $\Sigma$-chain of two-states reset automata of height $n+1$.
     It follows that $\lang$ is recognized by a finite union  of $\Sigma$-chains of two-states reset automata.
\end{proof}

\subsection*{$\Sigma$-Chains of reset automata and \ppLTL}

In this section we present the pastification procedure for $\Sigma$-chains of reset automata.

The section is structured as follows:
\begin{itemize}
    \item We recall the syntax and semantics of Pure Past Linear Temporal Logic.
    \item We describe the pastification procedure.
    \item We prove \cref{pastification}.
\end{itemize}

\paragraph*{Pure Past Linear Temporal Logic}
\label{sub:ppLTL}

In the following, we introduce syntax and semantics of \emph{Pure Past
Linear Temporal Logic} (\ppLTL). Let $\AP$ be a finite set of atomic
propositions. The syntax of \ppLTL formulae over $\AP$ is generated by the
following grammar:
\begin{align}
  \phi \coloneqq p &  \choice \lnot p
                      \choice \phi\lor\phi 
                      \choice \phi\land\phi  & \text{Boolean connectives} \\
                   &  \choice \ltl{Y\phi} 
                      \choice \ltl{Z\phi} 
                      \choice \ltl{\phi S \phi}
                      \choice \ltl{\phi T \phi} & \text{past operators}
\end{align}
where $p\in\AP$. 
The temporal operators are respectively called:
$\ltl{Y}$, \emph{yesterday};
$\ltl{Z}$, \emph{weak yesterday};
$\ltl{S}$, \emph{since};
$\ltl{T}$, \emph{trigger}.

For every formula $\phi$, we define the \emph{tree complexity} (\ie the
size of $\phi$), denoted by $\treesize{\phi}$, inductively as follows:
\begin{enumerate*}[label=(\roman*)]
  \item $\treesize{p} \coloneqq 1$ and $\treesize{\lnot p} \coloneqq 1$,
  \item $\treesize{\otimes \phi} \coloneqq \treesize{\phi} + 1$, for $\otimes \in
    \set{\ltl{Y},\ltl{Z}}$, and 
  \item $\treesize{\phi_1 \oplus \phi_2} \coloneqq \treesize{\phi_1} + \treesize{\phi_2} + 1$ for
    $\oplus \in \set{\ltl{&},\ltl{||},\ltl{S},\ltl{T}}$.
\end{enumerate*}
Moreover, for every formula $\phi$, we define the \emph{dag complexity},
denoted by $\dagsize{\phi}$, as the number of syntactically different
subformulas of $\phi$. Moreover, the dag complexity of a set of formulas
$\set{\phi_1,\dots,\phi_n}$ is given by the sum of the number of
syntactically different subformulas in $\set{\phi_1,\dots,\phi_n}$ and $n$.
While the tree complexity measures the size of the formula, the dag
complexity measures the space needed to represent the formula with the
\emph{sub-term sharing} technique.  For every formula $\phi$, it holds
$\dagsize{\phi} \leq \treesize{\phi}$, while $\dagsize{\phi}
= \treesize{\phi}$ holds if and only if there is no subformula of $\phi$
appearing multiple times. As an example consider the formulas $\phi'
= \ltl{X p \land X p}$ and $\phi'' = {X p \land X q}$. It holds
$\dagsize{\phi'} = 3$, $\treesize{\phi'} = 5$, $\dagsize{\phi''}
= \treesize{\phi''} = 5$ and $\dagsize{\set{\phi', \phi''}}
= 8$.

Formulae of \ppLTL over the alphabet $\AP$ are interpreted over
\emph{finite nonempty words} over the alphabet $\Sigma \coloneqq 2^{\AP}$.
Given a word $\sigma\in(2^\AP)^+$, the \emph{satisfaction} of a formula
$\phi$ by $\sigma$ at position $i \in \set{0,\dots,|\sigma|-1}$, denoted by
$\sigma,i\models\phi$, is defined as follows:
\begin{itemize}
  \item $\sigma,i \models p$                 $\Leftrightarrow$ $p\in\sigma_i$;
  \item $\sigma,i \models \ltl{\neg p}$      $\Leftrightarrow$ $p\not\in\sigma_i$;
  \item $\sigma,i \models \ltl{\phi_1 || \phi_2}$  $\Leftrightarrow$
          $\sigma,i \models \phi_1$ or $\sigma,i \models \phi_2$;
  \item $\sigma,i \models \ltl{\phi_1 && \phi_2}$ $\Leftrightarrow$
          $\sigma,i \models \phi_1$ and $\sigma,i \models \phi_2$;
  \item $\sigma,i \models \ltl{Y\phi}$    $\Leftrightarrow$
          $i > 0$ and $\sigma,i-1\models \phi$;
  \item $\sigma,i \models \ltl{Z\phi}$    $\Leftrightarrow$
          either $i = 0$ or $\sigma,i-1\models \phi$;
  \item $\sigma,i \models \ltl{\phi_1 S \phi_2}$    $\Leftrightarrow$
        there exists $j\le i$ such that $\sigma,j\models\phi_2$,
        and $\sigma,k\models\phi_1$ for all $k$, with $j < k \le i$;
\item $\sigma,i \models \ltl{\phi_1 T \phi_2}$  $\Leftrightarrow$
        either $\sigma,j\models\phi_2$ for all $0\le j \leq i$, or there
        exists $k \le i$ such that $\sigma,k\models\phi_1$ and 
        $\sigma,j\models\phi_2$ for all $j\le k \le i$
\end{itemize}
For every \ppLTL formula $\phi$, we say that a word $\sigma$ satisfies
$\phi$, written $\sigma\models\phi$, if $\phi$ is true at the last position
of $\sigma$, that is, $\sigma,n\models\phi$ with $n=|\sigma|-1$. The
\emph{language} of $\phi$, denoted by $\lang(\phi)$, is the set of words
$\sigma\in\Sigma^+$ such that $\sigma\models\phi$.

\paragraph*{The pastification procedure for $\Sigma$-chains of reset automata}

Given a $\Sigma$-chain $\cascade$ of $n$ reset automata, using \Cref{prop:chain:to:cascades}, we can transform it into an equivalent cascade $\cascade'$ of $n$ reset automata and then, applying Theorem 8.4 of \cite{journalversionofstacs}, we can compile it into an equivalent $\ppLTL$ formula of dag-complexity exponential in $n$.

In this section we show that, given a $\Sigma$-chain of $n$ reset automata, there exists a \ppLTL formula $\pastify{(\cascade)}$ such that the dag-complexity of $\pastify{(\cascade)}$ is quadratic in the number of components $n$. Crucially we avoid the use of \cref{prop:chain:to:cascades} which would cause an unnecessary exponential blow-up.

Let $\AP$ be a set of propositions and let $\cascade = \chain{\autom_1, \dots, \autom_n}$ be a $\Sigma$-chain of reset automata over the alphabet $\Sigma = 2^{\AP}$. 
For every $1 \leq i \leq n$, let $\Sigma_i$, $Q_i$, $q_i^0$ and $F_i$ denote the input alphabet, the set of states, the initial state and the set of final states of the automaton $\autom_i$ respectively.

For every $1 \leq i \leq n$ we define two kinds of formulas:
\begin{itemize}
    \item for $a \in \Sigma_i$, we define the formula $\Phi_a$ such that a word $\sigma=a_1 \dots a_k \in \Sigma^+$ satisfies $\Phi_a$ at time point $j \in \set{1,\dots,k}$ (in symbols, $\sigma,j \models \Phi_a$) if and only if the $j$-th symbol read by the automaton $\autom_i$ is the symbol $a$;
    \item for $q_i \in Q_i$, we define the formula $\Phi_{q_i}(\autom_i)$ which is satisfied by all and only the words  $\sigma \in \Sigma^+$ such that, the run of $\sigma$ on the $\Sigma$-chain $\cascade$ terminates in a state $(s_0,\dots,s_i,\dots,s_n)$ where for each $1 \leq j \leq n$ it holds $s_j \in Q_j$ and the state $s_i$ is the state $q_i$.
\end{itemize}

First we define the formula $\Phi_a$, for every $a \in \Sigma_i$, as follows:

\begin{align}
  \Phi_{a} &\coloneqq \bigwedge_{p \in a} p \land \bigwedge_{p \notin a} \lnot p & \text{if }i =1, a \in \Sigma_1 = 2^{\AP}\\
  \ltl{\Phi_{(a,q_{i-1})}} &\coloneqq 
  \begin{cases}
     \ltl{\Phi_{a} & Y \Phi_{q_{i-1}}(\autom_{i-1})}  & \emph{ if } q_{i-1} \neq q_{i-1}^0 \\
      \ltl{\Phi_{a} & Z \Phi_{q_{i-1}}(\autom_{i-1})} & \emph{ otherwise}
\end{cases} & \text{if } i > 1, (a,q_{i-1}) \in \Sigma_i
\end{align}
Observe that, for every symbol $(a,q_{i-1}) \in \Sigma_i$ processed by the automaton $\autom_i$, the formula $\Phi_{(a,q_{i-1})}(\autom_{i-1})$ depends directly only on the formula $\Phi_{q_{i-1}}$ associated with the immediately preceding level $\autom_{i-1}$.
This constitutes a fundamental difference with respect to the pastification procedure for cascades of reset automata \cite{DBLP:conf/focs/MalerP90,journalversionofstacs}, where a formula defined at the $i$-th level may depend directly on formulas originating from all preceding levels.

Let $R_i \subseteq \Sigma_i$ be the set of letters $a \in \Sigma_i$ such that
$\tau_a$ is a reset function in the automaton $\autom_i$.
For every symbol $a \in R_i$ and for every state $q_i \in Q_i$ of $\autom_i$, we say that \emph{$a$ enters in $q_i$} if and only if $\tau_a$ is a reset on the state $q_i$.
Otherwise, if $\tau_a$ is a reset on a state $q'$, with $q' \neq q_i$, we say
that \emph{$a$ leaves $q_i$}. As in~\cite{DBLP:conf/focs/MalerP90,journalversionofstacs}, we define:
\begin{align}
  in_{q_i} &\coloneqq 
        \bigvee_{\set{a \in R_i \suchthat a \text{ enters in } q_i}} \Phi_a \\
  out_{q_i} &\coloneqq
        \bigvee_{\set{a \in R_i \suchthat a \text{ leaves } q_i}} \Phi_a
\end{align}
For every $q_i \in Q_i$, we define $\Phi_{q_i}(\autom_i)$ as follows \cite{journalversionofstacs}:
\begin{align}
  \Phi_{q_i}(\autom_i) &\coloneqq 
    \ltl{((! out_{q_i}) S (in_{q_i})) \lor \Phi'_{q_i}(\autom_i)} \\
  \ltl{\Phi'_{q_i}(\autom_i)} &\coloneqq
    \begin{cases}
      \bot & \mbox{if } q_i \neq q_i^0 \\
      \ltl{H (! out_{q_i} \land !in_{q_i})} & \mbox{otherwise}
    \end{cases}
\end{align}

We define the \emph{pastification} of the  $\Sigma$-chain $C = \chain{\autom_1, \dots,
\autom_n}$, that
is, the \ppLTL formula equivalent to $\lang(\autom_1 \odot \dots \odot
\autom_n)$, denoted with $\pastify(\cascade)$, as follows:
\[
\pastify(\cascade) \coloneqq \bigwedge_{i = 1\dots n} \bigvee_{q_i \in F_i} \Phi_{q_i}(\autom_i)
\]

\Cref{pastification} states that the pastification of the $\Sigma$-chain $\cascade$ has a dag-complexity quadratic in the size of the $\Sigma$-chain.
We shall prove \Cref{pastification} in the next paragraph.

\paragraph*{Proof of \cref{pastification}}
\pastification*
\begin{proof}
    Let $\AP$ be a set of propositions and let $\cascade = \chain{\autom_1, \dots, \autom_n}$ be a $\Sigma$-chain of reset automata over the alphabet $\Sigma = 2^{\AP}$. For every $1 \leq i \leq n$, let $\Sigma_i$, $Q_i$, $q_i^0$ and $F_i$ denote the input alphabet, the set of states, the initial state and the set of final states of the automaton $\autom_i$ respectively.

    \begin{claim}
    \label{claim:pastification:formula}
        Let $\sigma = \sigma_1\dots\sigma_k  \in \Sigma^+$ and let $1 \leq j \leq k$. For every $1 \leq i \leq n$ it hods that:
        \begin{itemize}
            \item for every $a \in \Sigma_i$,  the $j$-th symbol read by the automaton $\autom_i$ is the symbol $a$ if and only if $\sigma,j \models \Phi_a$;
            \item for every $q_i \in Q_i$, the automaton $\autom_i$ is in state $q_i$ after the $\Sigma$-chain $\cascade$ has processed the first $j$ symbols of $\sigma$ if and only if $\sigma,j \models \Phi_{q_i}(\autom_i)$.
        \end{itemize}
    \end{claim}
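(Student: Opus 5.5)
I would prove the claim by a single induction that interleaves two orders. The outer induction is on the level $i$. The inner induction, for the state formulas, is on the position $j$. Both items at level $i$ may use both items at level $i-1$ for every position. Throughout, positions are indexed as in the semantics of \ppLTL, so that $\sigma,j$ refers to the $j$-th symbol read. The state of $\autom_i$ before reading symbol $j$ is its state after the first $j-1$ symbols, and this is $q_i^0$ when $j=1$.

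\textbf{First item.} For $i=1$, $\Phi_a$ is the conjunction fixing exactly which propositions hold. Hence $\sigma,j\models\Phi_a$ iff $\sigma_j=a$, and $\sigma_j$ is precisely the $j$-th symbol read by $\autom_1$.

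For $i>1$, the $j$-th symbol read by $\autom_i$ is $(\sigma_j,s)$, where $s$ is the state of $\autom_{i-1}$ after $j-1$ symbols. So it equals $(a,q_{i-1})$ iff two conditions hold:
\begin{itemize}
\item $\sigma_j=a$;
\item either $j>1$ and $\autom_{i-1}$ is in $q_{i-1}$ after $j-1$ symbols, or $j=1$ and $q_{i-1}=q_{i-1}^0$.
\end{itemize}
The first condition is handled by the case $i=1$. The second uses the induction hypothesis (second item at level $i-1$, position $j-1$) and the semantics of the two yesterday operators:
\begin{itemize}
\item $\ltl{Y}$ is false at position $1$, which is correct when $q_{i-1}\neq q_{i-1}^0$;
\item $\ltl{Z}$ is true at position $1$, which is correct when $q_{i-1}=q_{i-1}^0$.
\end{itemize}
For $j>1$ both operators coincide with evaluating at $j-1$. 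I should check carefully that the paper's $\ltl{Z}$ clause ("$i=0$") matches its 1-indexed positions. This bookkeeping is the only delicate point here.

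\textbf{Second item.} Fix level $i$ and assume the first item holds at level $i$ for all positions. Since $\autom_i$ is a reset automaton, every symbol is either an identity, or a reset that enters $q_i$, or a reset that leaves $q_i$. Therefore $\autom_i$ is in $q_i$ after $j$ symbols iff one of the following holds:
\begin{enumerate}
\item[(a)] there is a last position $j'\le j$ whose symbol is a reset, that symbol enters $q_i$, and no later symbol up to $j$ leaves $q_i$;
\item[(b)] no symbol in positions $1..j$ is a reset, and $q_i=q_i^0$.
\end{enumerate}
Case (a) is equivalent to $\ltl{(! out_{q_i}) S (in_{q_i})}$ at $j$. In one direction the witness is the last entering reset. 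In the other, any reset after it would be either entering (so it can serve as a new witness) or leaving (which is forbidden). Case (b) is $\Phi'_{q_i}$, since a non-reset symbol is neither $in$ nor $out$. Here I am reading $\ltl{H}$ as the usual abbreviation definable via $\ltl{T}$. By the first item, $in_{q_i}$ and $out_{q_i}$ are true at a position exactly when the symbol read there enters, respectively leaves, $q_i$. The disjunction then follows.

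\textbf{Expected difficulty.} The main obstacle is ordering the induction so that it is well-founded. The state formulas at level $i-1$ must be established for all positions before the symbol formulas at level $i$, and no circularity at level $i$ itself arises: $\Phi_a$ at level $i$ mentions only level $i-1$. Once this is in place, correctness of $\pastify(\cascade)$ follows by taking $j=k$ and conjoining over levels. The dag bound follows by counting: there are $O(|\cascade|)$ distinct $\Phi_{(a,q)}$ and $\Phi_{q_i}$ formulas. Each shares the level-$1$ formulas, whose size is $O(|\AP|)$, and each is composed with $O(1)$ extra nodes plus disjunctions whose total size is bounded by $|\cascade|\cdot n$.
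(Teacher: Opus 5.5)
Your proof is correct and follows essentially the same route as the paper's: induction on the level $i$, where the symbol formulas at level $i$ come from the state formulas at level $i-1$ via the semantics of yesterday and weak yesterday, and the state formulas are characterised by the most recent reset, giving the since-formula on $\lnot out_{q_i}$ and $in_{q_i}$ plus the historically-disjunct for the initial state. The differences are minor: you prove the level-$1$ state formulas directly rather than citing Lemma 8.1 of the journal version; you treat the first-position boundary case, and the shift between the claim's $1$-indexed positions and the $0$-indexed semantics, explicitly where the paper glosses over them; and the announced inner induction on $j$ is never actually used.
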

    \begin{proof}
        The claim is proven by induction on $i$.
        
        \proofcase{Base step ($i=1$).} The first item is trivially satisfied using the definition of $\Phi_a$ for every $a \in \Sigma_1 = \Sigma = 2^{\AP}$.
        The second item is satisfied since  by Lemma 8.1 of \cite{journalversionofstacs} we have that $\lang(\Phi_{q_1}(\autom_1))  = \lang_{q_1}(\autom_1)\setminus \set{\varepsilon}$ for every $q_1 \in Q_1$.

        \proofcase{Inductive step. } Suppose that the claim holds for $i = m$, we prove that it continues to hold for $i = m+1$.
        Suppose that the $j$-th symbol read by the automaton $\autom_{m+1}$ is the symbol $(a,q_m) \in \Sigma_{m+1} = \Sigma \times Q_m$. 
        We divide in cases depending on the value of $q_m$.

        \proofcase{Case $q_m \neq q_m^0$.}
        In the case the $j$-th symbol read by the automaton $\autom_{m+1}$ is the symbol $(a,q_m)$ if and only if:
        \begin{itemize}
            \item the $j$-th symbol of $\sigma$ is the symbol $a$; \textit{and}
            \item the automaton $\autom_{m}$  is in state $q_m$ after the $\Sigma$-chain $\cascade$ has processed the first $j-1$ symbols of $\sigma$.
        \end{itemize}
        Using the inductive hypothesis, this occurs if and only if:
        \begin{itemize}
            \item $\sigma, j \models \Phi_a$; \textit{and}
            \item $\sigma, j-1 \models \Phi_{q_m}(\autom_m)$.
        \end{itemize}
        This is equivalent to $\sigma, j \models \Phi_a \land \ltl{Y}\Phi_{q_m}(\autom_m)$. 

        \proofcase{Case $q_m = q_m^0$.}
        We obtain that the $j$-th symbol read by the automaton $\autom_{m+1}$ is the symbol $(a,q_m) \in \Sigma_{m+1} =  \Sigma \times Q_m$ if and only if :
        \begin{itemize}
            \item \textit{either} the $j$-th symbol of $\sigma$ is the symbol $a$ \textit{and} the automaton $\autom_{m}$  is in state $q_m$ after the $\Sigma$-chain $\cascade$ has processed the first $j-1$ symbols of $\sigma$;
            \item \textit{or} the $j$-the symbol of $\sigma$ is the symbol $a$ and and $j = 1$.
        \end{itemize}
        This occurs precisely when $\sigma, j \models \Phi_a \land \ltl{Z}\Phi_{q_m}(\autom_m)$.
        
        In every case we obtain that for every symbol $(a,q_m) \in \Sigma_{m+1}  =  \Sigma \times Q_m$,  the $j$-th symbol read by the automaton $\autom_{m+1}$ is the symbol $(a,q_m)$ if and only if $\sigma,j \models \Phi_{(a,q_m)}$.
        This proves the first item; we now proceed to prove the second one.
        
        Let $q_{m+1}\in Q_{m+1}$.
        Again, we divide in cases depending on the value of $q_{m+1}$.

        \proofcase{Case $q_{m+1} \neq q_{m+1}^0$.}
        Since $\autom_{m+1}$ is a reset automaton, its current state is uniquely determined by the most recent reset transition that has been taken.
        We have that the automaton $\autom_{m+1}$ is in state $q_{m+1}$ after the $\Sigma$-chain $\cascade$ has processed the first $j$ symbols of $\sigma$ if and only if:
        \begin{itemize}
            \item there exists an index $1 \leq r \leq j$ such that the $r$-th symbol read by the automaton $\autom_{m+1}$ is the symbol $(a, q_{m}) \in \Sigma_{m+1} = \Sigma \times Q_{m}$ inducing a reset function entering state $q_{m+1}$; \textit{and}
            \item for every $r < h \leq j$ the automaton $\autom_{m+1}$ never reads a symbol inducing a reset function leaving state $q_{m+1}$.
        \end{itemize}
        Since the first item of the claim is proved for $i = m+1$, this is equivalent to:
        \begin{itemize}
            \item there exists an index $1 \leq r \leq j$ such that it holds $\sigma,r \models \Phi_{(a,q_{m})}$ where $(a, q_{m}) \in \Sigma_{m+1}$ is a symbol inducing a reset function entering state $q_{m+1}$; \textit{and}
            \item for every $r < h \leq j$ it holds $\sigma, h \not\models \Phi_{(a',q'_{m})}$ for every symbol $(a', q'_{m}) \in \Sigma_{m+1}$ inducing a reset function leaving state $q_{m+1}$.
        \end{itemize}
        Using the definitions of the formula $in_{q_{m+1}}$ and $out_{q_{m+1}}$, this corresponds to $\sigma,j \models \ltl{(! out_{q_{m+1}}) S (in_{q_{m+1}})}$.

        \proofcase{Case $q_{m+1} = q_{m+1}^0$.}
        In this case, the automaton $\autom_{m+1}$ is in state $q_{m+1}$ after the $\Sigma$-chain $\cascade$ has processed the first $j$ symbols of $\sigma$ if and only if:
        \begin{itemize}
            \item \textit{either} there exists an index  $1 \leq r \leq j$ such that the $r$-th symbol read by the automaton $\autom_{m+1}$ is the symbol $(a, q_{m}) \in \Sigma_{m+1} = \Sigma \times Q_{m}$ inducing a reset function entering state $q_{m+1}$ \textit{and} for every $r < h \leq j$ the automaton $\autom_{m+1}$ never reads a symbol inducing a reset function leaving state $q_{m+1}$;
            \item \textit{or} for every $1 \leq h \leq j$ the automaton $\autom_{m+1}$ never reads a symbol inducing a reset function leaving state $q_{m+1}$.
        \end{itemize}
        This holds if and only if $\sigma,j \models \ltl{(! out_{q_{m+1}}) S (in_{q_{m+1}})} \lor \ltl{H(!out_{q_{m+1}} \land !in_{q_{m+1}})}$.

        In every case, we obtain that  the automaton $\autom_{m+1}$ is in state $q_{m+1}$ after the $\Sigma$-chain $\cascade$ has processed the first $j$ symbols of $\sigma$ if and only if $\sigma,j \models \Phi_{q_{m+1}}(\autom_{m+1}) = \ltl{((! out_{q_{m+1}}) S (in_{q_{m+1}})) \lor \Phi'_{q_{m+1}}(\autom_{m+1})}$.
        This concludes the proof of the claim.
    \end{proof}

    \begin{claim}
    \label{claim:pastification}
        For every $1 \leq i \leq n$, for every $(q_1,\dots,q_i) \in Q_1 \times \dots \times Q_i$ it holds $\lang(\Phi_{q_1}(\autom_1) \land  \dots \land \Phi_{q_i}(\autom_i))  = \lang_{(q_1,\dots,q_i)}(\autom_1 \odot \dots \odot \autom_i)\setminus \set{\varepsilon}$.
    \end{claim}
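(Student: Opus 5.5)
This follows from the preceding claim (\cref{claim:pastification:formula}) by induction on $i$, or more directly by unfolding the definition of the run of the $\Sigma$-chain product. Fix $(q_1,\dots,q_i)\in Q_1\times\dots\times Q_i$ and a nonempty word $\sigma=\sigma_1\dots\sigma_k\in\Sigma^+$. By \cref{def:chain:dfa}, the run of $\sigma$ on $\autom_1\odot\dots\odot\autom_i$ is exactly the tuple of the runs that the individual components perform within the chain. Here each $\autom_\ell$ reads, at step $j$, the symbol $(\sigma_j,s_{\ell-1}^{j-1})$ determined by the state of $\autom_{\ell-1}$ after $j-1$ steps. Moreover, the prefix chain $\chain{\autom_1,\dots,\autom_i}$ evolves independently of the components $\autom_{i+1},\dots,\autom_n$, since dependencies only flow forward. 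Hence $\sigma\in\lang_{(q_1,\dots,q_i)}(\autom_1\odot\dots\odot\autom_i)$ if and only if, for every $\ell\le i$, component $\autom_\ell$ is in state $q_\ell$ after the chain $\cascade$ has processed all $k$ symbols of $\sigma$.

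I would then apply the second item of \cref{claim:pastification:formula} with $j=k$, the last position, to each $\ell=1,\dots,i$ separately. This gives that $\autom_\ell$ is in $q_\ell$ after processing $\sigma$ iff $\sigma,k\models\Phi_{q_\ell}(\autom_\ell)$, i.e., iff $\sigma\in\lang(\Phi_{q_\ell}(\autom_\ell))$. Taking the conjunction over $\ell$, and using that satisfaction of a conjunction at the last position is the conjunction of satisfactions, yields
\[
\sigma\in\lang(\Phi_{q_1}(\autom_1)\land\dots\land\Phi_{q_i}(\autom_i)) \iff \sigma\in\lang_{(q_1,\dots,q_i)}(\autom_1\odot\dots\odot\autom_i)
\]
for all $\sigma\in\Sigma^+$. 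The empty word is excluded on the left by the semantics of \ppLTL (only nonempty words are models), which accounts for the $\setminus\set{\varepsilon}$ on the right.

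The only delicate point is the indexing convention. \Cref{claim:pastification:formula} states positions $1\le j\le k$, while the \ppLTL semantics uses positions $0,\dots,|\sigma|-1$, so I must check that ``after processing the first $j$ symbols'' matches evaluation at the last position of $\sigma$. I would fix the convention consistently with how the claim was proved: there the $\ltl{Z}$ case handles the first symbol, where the predecessor is in its initial state. With that fixed, the argument is a direct componentwise unfolding and needs no separate induction.
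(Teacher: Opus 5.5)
Your proposal is correct and is essentially the paper's argument: a direct chain of equivalences that unfolds the run of the prefix product, applies the second item of the preceding claim at the last position $k$ to each component $\ell \le i$, combines the results via the semantics of conjunction, and discards $\varepsilon$ because pure past LTL formulas are evaluated only on nonempty words. Your explicit remarks make precise two points the paper's proof uses silently: that the first $i$ components evolve independently of the later ones, so the prefix product's run is the projection of the full chain's run, and that the $1$-based positions of the preceding claim must be reconciled with the $0$-based semantics.
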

    \begin{claimproof}
        Let $\sigma = \sigma_1\dots\sigma_k \in \Sigma^+$ be a non empty word.
        Using \cref{claim:pastification:formula}, we have that:
        \begin{itemize}
            \item[] $\sigma \in \lang_{(q_1,\dots,q_i)}(\autom_1 \odot \dots \odot \autom_i)\setminus \set{\varepsilon}$
            \item[$\Leftrightarrow$] the word $\sigma$ induces on the $\Sigma$-chain product $\autom_1 \odot \dots \odot \autom_{i}$ a run of type $(q_1^0,\dots,q_{n}^0),\dots,(q_1^k,\dots,q_{i}^k)$ where $(q_1^k,\dots,q_{i}^k) = (q_1,\dots,q_{i})$
            \item[$\Leftrightarrow$] the automaton $\autom_j$ is in state $q_j$ after the $\Sigma$-chain $\cascade$ has processed the first $k$ symbols of $\sigma$ for every $1 \leq j \leq i$
            \item[$\Leftrightarrow$] $(\sigma,k) \models \Phi_{q_j}(\autom_j)$ for every $1 \leq j \leq i$
            \item[$\Leftrightarrow$] $\sigma \models \Phi_{q_j}(\autom_j)$ for every $1 \leq j \leq i$
            \item[$\Leftrightarrow$] $\sigma \models \Phi_{q_1}(\autom_1) \land \dots \land \Phi_{q_i}(\autom_i)$
            \item[$\Leftrightarrow$] $\sigma \in \lang(\Phi_{q_1}(\autom_1) \land  \dots \land \Phi_{q_i}(\autom_i))$
        \end{itemize}
        
    \end{claimproof}
    
    \begin{claim}
        $\lang(\pastify(\cascade)) = \lang(\cascade) \setminus
        \set{\varepsilon}$
    \end{claim}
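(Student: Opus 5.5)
The plan is to reduce the claim to \cref{claim:pastification} applied with $i=n$, together with the definition of $\pastify(\cascade)$ as $\bigwedge_{i=1}^{n}\bigvee_{q_i\in F_i}\Phi_{q_i}(\autom_i)$. Since the final states of the $\Sigma$-chain product are $F_1\times\dots\times F_n$, we have
\[
\lang(\cascade)\setminus\set{\varepsilon} = \bigcup_{(q_1,\dots,q_n)\in F_1\times\dots\times F_n} \lang_{(q_1,\dots,q_n)}(\autom_1\odot\dots\odot\autom_n)\setminus\set{\varepsilon}.
\]
Applying \cref{claim:pastification} to each term rewrites this set as $\bigcup_{(q_1,\dots,q_n)\in F_1\times\dots\times F_n}\lang(\Phi_{q_1}(\autom_1)\land\dots\land\Phi_{q_n}(\autom_n))$.

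Next, I would match this union with the language of $\pastify(\cascade)$. Take a nonempty word $\sigma$. We have $\sigma\models\pastify(\cascade)$ iff for each $i$ there is some $q_i\in F_i$ with $\sigma\models\Phi_{q_i}(\autom_i)$. Collecting these witnesses into one tuple $(q_1,\dots,q_n)\in F_1\times\dots\times F_n$ gives $\sigma\models\bigwedge_i\Phi_{q_i}(\autom_i)$. Conversely, a tuple witnessing membership in the union provides a witness for each conjunct. So this is just distributing the conjunction over the disjunctions, applied semantically to a fixed word.

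The only subtle point concerns the empty word. The \ppLTL semantics interprets formulas only on nonempty words, so $\varepsilon\notin\lang(\pastify(\cascade))$ automatically. This is why the statement subtracts $\set{\varepsilon}$: even if the tuple of initial states lies in $F_1\times\dots\times F_n$, so that $\varepsilon\in\lang(\cascade)$, the claim is unaffected.

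I expect no real obstacle here. All the work, namely the induction over levels that uses $\ltl{Y}/\ltl{Z}$ to link level $i$ with level $i-1$, has already been done in \cref{claim:pastification:formula,claim:pastification}. The one thing to check carefully is that \cref{claim:pastification} is invoked with $i=n$ and with the final-state tuple, which is exactly the case covered there.
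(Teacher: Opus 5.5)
Your proposal is correct and follows essentially the same route as the paper: the paper distributes the conjunction over the disjunctions, rewriting $\pastify(\cascade)$ as a disjunction, over tuples in $F_1\times\dots\times F_n$, of the conjunctions $\Phi_{q_1}(\autom_1)\land\dots\land\Phi_{q_n}(\autom_n)$ (you do this distribution semantically on a fixed nonempty word, which amounts to the same thing). It then applies the preceding claim with $i=n$ to each disjunct and takes the union to obtain $\lang(\autom_1\odot\dots\odot\autom_n)\setminus\set{\varepsilon}$, and your remark that \ppLTL formulas are only evaluated on nonempty words is exactly why the $\setminus\set{\varepsilon}$ appears.
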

    \begin{claimproof}
    By definition of the formula $\pastify{(\cascade)}$ we have:
    \begin{align}
    \pastify(\cascade) & \coloneqq \bigwedge_{i = 1\dots n} \bigvee_{q_i \in F_i} \Phi_{q_i}(\autom_i)\\
    &\equiv \bigvee_{(q_1,\dots,q_n) \in F_1 \times \dots \times F_n} \left(\Phi_{q_1}(\autom_1) \land \dots \land \Phi_{q_n}(\autom_n) \right)
    \end{align}
    Using \cref{claim:pastification} we obtain:
    \begin{align}
    \lang(\pastify(\cascade)) & = \lang\left( \bigvee_{(q_1,\dots,q_n) \in F_1 \times \dots \times F_n} \left(\Phi_{q_1}(\autom_1) \land \dots \land \Phi_{q_n}(\autom_n) \right) \right)\\ 
    &=  \bigcup_{(q_1,\dots,q_n) \in F_1 \times \dots \times F_n} \lang\left((\Phi_{q_1}(\autom_1) \land \dots \land \Phi_{q_n}(\autom_n)\right) \\
    & = \bigcup_{(q_1,\dots,q_n) \in F_1 \times \dots \times F_n} \lang_{(q_1,\dots, q_n)}\left(\autom_1 \odot\dots  \odot \autom_n \right) \setminus \set{\varepsilon}\\
    & = \lang\left(\autom_1 \odot\dots  \odot \autom_n \right) \setminus \set{\varepsilon}
    \end{align}
    This proves the first item of the theorem.
    \end{claimproof}

    \begin{claim}
        $\dagsize{\pastify(\cascade)} \in \mathcal{O}(|\cascade|\cdot ( |\AP| + n)) \in  \mathcal{O}(|\cascade|\cdot( |\AP| + |\cascade|))$
    \end{claim}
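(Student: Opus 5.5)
We bound the dag-complexity by counting the distinct subformulas generated by the definitions above, level by level, relying on sharing throughout. Define $D_i$ as the set of distinct subformulas appearing in $\set{\Phi_a : a \in \Sigma_i} \cup \set{\Phi_{q_i}(\autom_i) : q_i \in Q_i}$ that do not already belong to $D_1 \cup \dots \cup D_{i-1}$. The dag size of $\pastify(\cascade)$ is at most $\sum_i |D_i|$ plus the $O(\sum_i |F_i|)$ conjunction and disjunction nodes at the top level. Since every component has at least one transition per state, this extra term is $O(|\cascade|)$.

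For level $1$, each $\Phi_a$ with $a \in 2^{\AP}$ is a conjunction of $|\AP|$ literals. The literals $p$ and $\lnot p$ are shared, so each $\Phi_a$ adds $O(|\AP|)$ new nodes, and level $1$ contributes $O(|\Sigma|\cdot|\AP|)$ in total. This is $O(|\autom_1|\cdot|\AP|)$, because $|\autom_1| = |\Sigma|\cdot|Q_1|$. For level $i>1$, the formula $\Phi_{(a,q_{i-1})}$ reuses $\Phi_a$ from level $1$ and $\Phi_{q_{i-1}}(\autom_{i-1})$ from level $i-1$. It therefore adds only $O(1)$ nodes (one $\ltl{Y}$ or $\ltl{Z}$ node and one conjunction), for $O(|\Sigma_i|) = O(|\Sigma|\cdot|Q_{i-1}|) \le O(|\autom_i|)$ new nodes per level.

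Next, for each $q_i \in Q_i$, the formulas $in_{q_i}$ and $out_{q_i}$ are disjunctions over subsets of $R_i \subseteq \Sigma_i$. Each such disjunction costs at most $|\Sigma_i|$ binary nodes, since the $\Phi_a$ themselves are already counted. On top of these, $\Phi_{q_i}(\autom_i)$ adds $O(1)$ nodes: the negation, the $\ltl{S}$, the $\ltl{H}$ part and the final disjunction. Treating $\lnot out$ by pushing negation to literals or by assuming $\lnot$ on derived formulas is allowed; otherwise I would use the dual $\ltl{T}$ form, which changes nothing asymptotically. Level $i$ therefore contributes $O(|Q_i|\cdot|\Sigma_i|) = O(|\autom_i|)$ nodes, because $|\autom_i| = |\Sigma_i|\cdot|Q_i|$. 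Summing over all levels gives
\begin{align}
  \dagsize{\pastify(\cascade)} \in O\Bigl(|\Sigma|\cdot|\AP| + \sum_{i=1}^{n}|\autom_i|\Bigr) \subseteq O(|\cascade|\cdot(|\AP|+n)),
\end{align}
and the bound $n \le |\cascade|$ yields the second inclusion.

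The main obstacle is justifying the $O(1)$ per-symbol cost at level $i>1$ as a genuine consequence of sharing. Some care is needed because $\Phi_a$ for the plain letter $a$ is shared across all levels. If one insists instead that each level rebuilds its own $\Phi_a$, or that $\lnot out_{q_i}$ must be unfolded into negation normal form, each level can cost $O(|\cascade|)$ rather than $O(|\autom_i|)$. Summed over $n$ levels this gives exactly the stated, looser $O(|\cascade|\cdot n)$ term. So the plan states the accounting conservatively: per level, charge $O(|\autom_i|\cdot|\AP| + |\cascade|)$, which sums to the claimed bound even under pessimistic accounting.
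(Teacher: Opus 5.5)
Your argument is correct, and it proves a sharper bound than the one stated. Charging only the genuinely new nodes at each level gives $\dagsize{\pastify(\cascade)} \in O(|\Sigma|\cdot|\AP| + |\cascade|)$, which lies inside $O(|\cascade|\cdot(|\AP|+n))$ because $|\Sigma| \le |\autom_1| \le |\cascade|$.

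The paper also proceeds level by level and relies on the same sharing facts: $\Phi_a$ is reused from level $1$, $\Phi_{q_{i-1}}(\autom_{i-1})$ is reused from level $i-1$, and the $in$/$out$ disjunctions cost $O(|\Sigma_i|)$ nodes per state. It differs in three ways.
\begin{enumerate}
\item It runs the induction on the dag-complexity of the cumulative \emph{set} $\bigcup_{k\le i}Q_k^{\Phi}$.
\item It imports the level-$1$ bound $O(|\autom_1|\cdot|\AP|)$ from the earlier work on cascades.
\item The dag-complexity of a set adds the number of its formulas, so its recurrence re-adds the cardinality $\sum_{k\le i}|Q_k^{\Phi}|$ at every step.
\end{enumerate}
Summing that cardinality over the $n$ levels is exactly what produces the term $n\sum_k|Q_k| \le n\,|\cascade|$. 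So the factor $n$ in the statement comes from the bookkeeping, not from the formula. Your route gives a self-contained bound of order $|\cascade|\cdot(1+|\AP|)$ with no dependence on $n$, which shows the ``quadratic'' term is not intrinsic. The paper's route gains only the reuse of the cited level-$1$ lemma.

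Two remarks on your write-up. First, your last paragraph misattributes where the factor $n$ comes from, and the worry behind it is unfounded. Dag-complexity counts syntactically distinct subformulas, so $\Phi_a$ is shared across levels automatically and no level can be made to ``rebuild'' it. The pessimistic fallback is harmless but can be dropped.

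Second, and more substantively, $\lnot out_{q_i}$, $\lnot in_{q_i}$, $\bot$ and the historically operator lie outside the stated grammar, which allows negation only on atoms. Switching to $\ltl{T}$ does not by itself remove the negation, since $(\lnot out_{q_i})\,\ltl{S}\,in_{q_i} \equiv \lnot(out_{q_i}\,\ltl{T}\,\lnot in_{q_i})$. The clean justification is that the grammar is closed under duals ($\land/\lor$, $\ltl{Y}/\ltl{Z}$, $\ltl{S}/\ltl{T}$, $p/\lnot p$). Pushing all negations to the atoms therefore maps each dag node to at most one positive node and one dual node, which at most doubles the count. The constants are handled by constant-size gadgets: $\bot$ as $p\land\lnot p$, and historically of $\psi$ as $\bot\,\ltl{T}\,\psi$. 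The paper leaves this point implicit as well.
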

    \begin{claimproof} 
    For every $1\leq i \leq n $, let $Q^{\Phi}_i$ be the set $Q_i^{\Phi}=\set{\Phi_{q_i}(\autom_i) \suchthat q_i \in Q_i}$.

    By definition of the formula $\pastify{(\cascade)}$, we have that: \[ \dagsize{\pastify{ (\cascade)}} \in \mathcal{O}\left(\dagsize{\bigcup_{1 \leq i \leq n} Q^{\Phi}_i}\right)\]

    We now compute a bound for the quantity $\dagsize{\bigcup_{1 \leq i \leq n} Q^{\Phi}_i}$.
    
    First notice that in \cite{journalversionofstacs} it is proven that $\dagsize{Q^{\Phi}_1} \in \mathcal{O}(|\autom_1|\times |\AP|)$.

    For a fixed integer $1 < i \leq n$, given the number $\dagsize{\bigcup_{k\leq i}Q_k^{\Phi}}$ we try to bound the quantity $\dagsize{\bigcup_{k\leq i+1}Q_k^{\Phi}}$.
   For every $a \in \Sigma$, the formula $\Phi_a$ is a sub-formula of the set $Q_1^{\Phi}$. Moreover, for every $q_i \in Q_i$, the formula $\Phi_{q_i}(\autom_i)$ is in the set $Q_i^{\Phi}$ by definition.  For every $(a,q) \in \Sigma \times Q_i$, the formula $\Phi_{(a,q_i)}$ is either $\Phi_a \land \ltl{Z}\Phi_q(\autom)$ or $\Phi_a \land \ltl{Y}\Phi_q(\autom)$.  This implies that:
    \begin{align}
      \dagsize{ \bigcup_{k\leq i}Q_k^{\Phi} \cup \set{\Phi_{(a,q_i)} \suchthat (a,q_i) \in \Sigma
        \times Q_i}} & =  \dagsize{\bigcup_{k\leq i}Q^{\Phi}_k} + |\bigcup_{k\leq i}Q_k^{\Phi}| + 3|(\Sigma \times Q_i)|
        \\
      &= \dagsize{\bigcup_{k\leq i}Q^{\Phi}_k} + \sum_{k\leq i}|Q_k^{\Phi}| + 3|(\Sigma \times Q_i)|
    \end{align}
    The first equality, follows from the following observations: 
    \begin{itemize}
      \item the set $\bigcup_{k \leq i}Q_k^{\Phi} \cup \set{\Phi_{(a,q_i)} \suchthat (a,q_i) \in
        \Sigma \times Q_i}$ has cardinality $|\bigcup_{k \leq i}Q_k^{\Phi}| + |\Sigma \times Q_i|$;
      \item the set $\bigcup_{k \leq i}Q_k^{\Phi}$ has $\dagsize{\bigcup_{k \leq i}Q_k^{\Phi}}$ distinct
        sub-formulas;
      \item for every $(a,q_i) \in \Sigma \times Q_i$ (suppose $q_i \neq q_i^0$,
        the case $q_i=q_i^0$ is analogous) there are only two new sub-formulas
        not present in $\bigcup_{k \leq i}Q_k^{\Phi}$, namely $\ltl{Y \Phi_{q_i}(\autom_i)}$ and
        $\Phi_a \land \ltl{Y}\Phi_{q_i}(\autom_i)$.
    \end{itemize}

    Let $q_{i+1} \in Q_{i+1}$. In the case the automaton $\autom_{i+1}$ is a pure-reset automaton, every formula $\Phi_{(a,q_{i})}$ is a sub-formula of $\Phi_{q_{i+1}}(\autom_{i+1})$. In every case, we obtain:
    \begin{align}
      \dagsize{\bigcup_{k \leq i}Q_k^{\Phi} \cup \set{\Phi_{q_{i+1}}(\autom_{i+1})}} &\leq
        \dagsize{\bigcup_{k \leq i}Q_k^{\Phi} \cup \set{\Phi_{q_{i+1}}(\autom_{i+1})} \cup
        \set{\Phi_{(a,q_i)} \suchthat (a,q_i) \in \Sigma \times Q_i}}\\
      & \in \mathcal{O}\left(\dagsize{\bigcup_{k\leq i}Q^{\Phi}_k} + \sum_{k\leq i}|Q_k^{\Phi}| + 3|(\Sigma \times Q_i)|\right)  + \mathcal{O}(|R_{i+1}|)\\
      & \in \mathcal{O}\left(\dagsize{\bigcup_{k\leq i}Q^{\Phi}_k} + \sum_{k\leq i}|Q_k^{\Phi}| + 3|(\Sigma \times Q_i)|\right)  + \mathcal{O}(|\Sigma_{i+1}|)\\
      & \in \mathcal{O}\left(\dagsize{\bigcup_{k\leq i}Q^{\Phi}_k} + \sum_{k\leq i}|Q_k^{\Phi}| + 3|\Sigma \times Q_i)|\right)  + \mathcal{O}(|\Sigma \times Q_i|)\\
      & \in \mathcal{O}\left(\dagsize{\bigcup_{k\leq i}Q^{\Phi}_k} + \sum_{k\leq i}|Q_k^{\Phi}| + |\Sigma \times Q_i|\right)
    \end{align}
Now, we give the bound on the quantity $\dagsize{\bigcup_{k\leq i+1}Q_k^{\Phi}}$:
    \begin{align}
      \dagsize{\bigcup_{k \leq i+1} Q_k^{\Phi} } = \dagsize{\bigcup_{k \leq i} Q_k^{\Phi} \cup Q_{i+1}^{\Phi}} &\leq \dagsize{\bigcup_{k \leq i} Q_k^{\Phi} \cup Q_{i+1}^{\Phi} \cup \set{\Phi_{(a,q_i)} \suchthat (a,q_i) \in \Sigma \times Q_i}} \\
      & \in \mathcal{O}\left(\dagsize{\bigcup_{k\leq i}Q^{\Phi}_k} + \sum_{k\leq i}|Q_k^{\Phi}| + 3|(\Sigma \times Q_i)|\right)  + \mathcal{O}(|\Sigma \times Q_i|) \times |Q_{i+1}|\\
      & \in \mathcal{O}\left(\dagsize{\bigcup_{k\leq i}Q^{\Phi}_k} + \sum_{k\leq i}|Q_k^{\Phi}| + |\autom_{i+1}|\right)\\
    \end{align}
    Since it holds that:
    \begin{itemize}
        \item $\dagsize{Q^{\Phi}_1} \in \mathcal{O}(|\autom_1|\times |\AP|)$
        \item $\dagsize{\bigcup_{k \leq i+1} Q_k^{\Phi} } \in \mathcal{O}\left(\dagsize{\bigcup_{k\leq i}Q^{\Phi}_k} + \sum_{k\leq i}|Q_k^{\Phi}| + |\autom_{i+1}|\right)$
    \end{itemize}
    we obtain that:
  \begin{align}
    \dagsize{\bigcup_{k \leq n} Q_k^{\Phi} } & \in \mathcal{O}(|\autom_1|\cdot|\AP|+\sum_{k \leq n}|\autom_k|+n\sum_{k\leq n}|Q_k|)\\
   & \in \mathcal{O}(|\autom_1|\cdot|\AP|+|\cascade|+n\sum_{k\leq n}|Q_k|)\\
   & \in \mathcal{O}(|\cascade|\cdot|\AP|+|\cascade|+n |\cascade|)\\
& \in \mathcal{O}(|\cascade|\cdot(n + |\AP|)) \\
& \in \mathcal{O}(|\cascade|\cdot(|\cascade| + |\AP|)) \\
  \end{align}

  This implies that: \[ \dagsize{\pastify{ (\cascade)}} \in \mathcal{O}(|\cascade|\cdot(|\AP| + n))  \in \mathcal{O}(|\cascade|\cdot(|\AP| + |\cascade|)) \]
  
    This proves the second item of the theorem.
    \end{claimproof}
    
\end{proof}

\fi

\end{document}